\documentclass[a4paper,11pt]{article}
\usepackage{amsmath}
\usepackage{amssymb}
\usepackage[dvips]{graphicx}
\usepackage{graphicx}
\usepackage{fullpage}
\usepackage{appendix}
\usepackage{color}  
\usepackage{times}

\newcommand{\remove}[1]{}

\newtheorem{assumption}{Assumption}

\newtheorem{claim}{Claim}

\newtheorem{proposition}{Proposition}


\newtheorem{corollary}{Corollary}

\newtheorem{definition}{Definition}

\newtheorem{lemma}{Lemma}

\newtheorem{problem}{Problem}

\newtheorem{theorem}{Theorem}

\newenvironment{proof}{\noindent{\bf Proof\@:}}{\hfill $\Diamond$\\}

\newenvironment{theoremproof}[1]{\noindent{\bf Proof of Theorem #1\@:}}{\hfill $\Diamond$\\}

\newenvironment{claimproof}[1]{\noindent{\bf Proof of Claim #1\@:}}{\hfill $\Diamond$\\}
\newenvironment{propositionproof}[1]{\noindent{\bf Proof of Proposition #1\@:}}{\hfill $\Diamond$\\}

\newcommand{\myshow}[1]{}

\newcommand\inappendix[1]{#1}
\newcommand{\randbits}[1]{}

\date{\today}

\title{\bf A simple algorithm for random colouring $G(n, d/n)$
using $(2+\epsilon)d$ colours. \thanks{Supported by EPSRC grant EP/G039070/2 and DIMAP.}}

\author{
Charilaos Efthymiou\\
University of Warwick, Mathematics and Computer Science, Coventry CV4 7AL, UK\\
\texttt{c.efthymiou@warwick.ac.uk}
}

\begin{document}
\maketitle
\thispagestyle{empty}

\begin{abstract}

\noindent
Approximate random $k$-colouring of a graph $G=(V,E)$ is a very
well studied problem in computer science and statistical physics.
It amounts to constructing a $k$-colouring of $G$ which is distributed 
close to {\em Gibbs distribution}, i.e. the uniform distribution over 
all the $k$-colourings of $G$. Here, we deal with the problem when the
underlying graph is an instance of Erd\H{o}s-R\'enyi random graph
$G(n,p)$, where $p=d/n$ and $d$ is fixed.

We propose a novel efficient algorithm for approximate random $k$-colouring 
with the following properties: given an instance of $G(n,d/n)$ and for
any  $k\geq (2+\epsilon)d$, it returns a $k$-colouring distributed within 
total variation distance $n^{-\Omega(1)}$ from the Gibbs distribution,
with probability $1-n^{-\Omega(1)}$.

What we propose is neither a MCMC algorithm nor some algorithm inspired
by the message passing heuristics that were introduced by statistical
physicists. Our algorithm is of {\em combinatorial} nature. It is based on 
a rather simple recursion which reduces the random $k$-colouring of 
$G(n,d/n)$ to random $k$-colouring simpler subgraphs first.

The  lower bound on the number of colours for our algorithm to 
run in
polynomial time is {\em dra\-ma\-ti\-cal\-ly} smaller than the 
corresponding bounds we have for any previous algorithm.

\medskip
\medskip
\medskip
\noindent
{\bf Key words:} Random colouring, sparse random graph, efficient algorithm.

\end{abstract}

\newpage
\setcounter{page}{1}

\section{Introduction}\label{sec:intro}	
Approximate random $k$-colouring of a graph $G=(V,E)$ is a very
well studied problem in computer science and statistical physics.
It amounts to constructing a $k$-colouring of $G$ which is distributed
close to {\em Gibbs distribution}, i.e. the uniform distribution
over all the $k$-colourings of $G$. Here, we deal with the specific
algorithmic problem when the underlying graph is an instance of
Erd\H{o}s-R\'enyi random graph $G(n,p)$, where $p=d/n$ and $d$ is
fixed. 

The most powerful and most popular algorithms for this kind of
problems are based on the Markov Chain Monte Carlo (MCMC) method.
There the main technical challenge is to establish that the
underlying Markov chain mixes in polynomial time (see \cite{jerrum}).
The work in \cite{mossel-colouring-gnp} (which improved \cite{old-GnpSampling})
shows that the well known Markov chain {\em Glauber dynamics} for
$k$-colourings has polynomial time mixing for typical instances of
$G(n,d/n)$ as long as the number of colours $k$ is larger than
$k(d)$, a number which depends only on the expected degree
of $G(n,d/n)$, $d$.

Notably, both \cite{mossel-colouring-gnp,old-GnpSampling}
overcame the ``maximum degree obstacle'' from which most techniques for 
analysing the mixing time of the Glauber dynamics suffer, i.e. 
they are stated in terms of the maximum  degree of the graph.
This makes them insufficient for $G(n,d/n)$ where the maximum 
degree grows as $\Theta((\log n)/(\log \log n))$ with probability $1-o(1)$.

Recently physicists proposed some heuristics for computing deterministically
marginals of Gibbs distribution. These heuristics are based on the
message passing algorithm Belief Propagation (see \cite{BeliefPropRigorous})
and {\em ideally} they could be used for random colouring $G(n,d/n)$. 
There, the main challenge is to show that the computation of
the marginals is accurate. In turn this amounts to establishing 
certain {\em spatial} correlation decay conditions for the Gibbs
distribution. 
We should remark that the heuristics proposed by statistical
physicists, namely {\em Belief Propagation guided decimation}
and {\em Survey Propagation guided decimation}, were put forward
on the basis of very insightful but highly non-rigorous statistical
me\-cha\-nics considerations (see \cite{SP-heuristic}). 
The work in \cite{TCS-sampling} presents an efficient algorithm which 
is a variation of Belief propagation and returns an approximate random 
$k$-colouring of a typical $G(n,d/n)$ as long as $k>k'(d)$, where $k'(d)$
is a  number which depends only the expected degree (i.e. $k'(d)=d^{14}$).

In this work we propose a novel algorithm for approximate 
random $k$-colouring $G(n,d/n)$ which  not only overcomes
the ``maximum  degree obstacle'' but somehow {\em optimizes}
the dependence of the minimum number of colours from the
expected degree $d$. The lower
bound on the number of colours for our algorithm to run in polynomial
time is {\em dra\-ma\-ti\-cal\-ly} smaller than the corresponding
bounds we have for any previous algorithm on the problem.
The algorithm does not fall into any of the previous two categories,
i.e., it is neither MCMC nor based on Gibbs marginals computation.

We are based on the following humble observation: Let
$\{v,u\}$ be an edge of $G_{n,d/n}$. A random colouring of 
$G_{n,d/n}$ can be seen as a  random colouring of $G_{n,d/n}
\backslash\{v,u\}$ with the additional pro\-per\-ty that $v$ and $u$
are assigned different colours.
Assume that we have a polynomial time algorithm, call it {\tt STEP},
such that given any graph $G$ and two non-adjacent
vertices $v$, $u$ it transforms a random colouring of $G$ to a
random  colouring which has the {\em extra} property that $v$ and
$u$ take different colours.
In that case, the initial problem can be reduced to taking a 
random $k$-colouring of $G_{n,d/n}\backslash\{v,u\}$ and 
then use {\tt STEP}. The reasonable question, then,  would be
how can someone colour $G_{n,d/n}\backslash\{v,u\}$ randomly.
We can  set up a recursion by applying the previous reduction 
for $G_{n,d/n}\backslash \{v,u\}$ and so on.
Note that as the recursion proceeds the structure of the graph 
that is considered gets simpler and simpler. This is due to the
edge deletions. Clearly, after a certain number of recursive calls
the graph becomes so simple that it can be $k$-coloured randomly
in polynomial time by some known algorithm.

A great deal of this work illustrates the implementation of 
{\tt STEP} in the special case where the input graph $G$
is a typical instance of $G_{n,d/n}$ or any of its subgraphs that
are considered in the recursion.
{\tt STEP}  will be an approximation algorithm, i.e. given a random
colouring of $G$ in the input the distribution of the output will be 
an approximation of the desired one. Consequently, at the end we get
an approximate random $k-$colouring of $G(n,d/n)$.

We use total variation distance as a measure of distance between
distributions.

\begin{definition}
For the distributions $\nu_{a}, \nu_{b}$ on $[k]^V$, let 
$|| \nu_{a}-\nu_{b}||$ denote their {\em total variation distance},
i.e.
\begin{displaymath}
|| \nu_{a}-\nu_{b}||=\max_{\Omega' \subseteq [k]^V} | \nu_{a}(\Omega')-\nu_{b}(\Omega')|.
\end{displaymath}
For $\Lambda \subseteq V$ let $||\nu_{a}-\nu_{b}||_{\Lambda}$
denote the total variation distance between the projections of 
$\nu_a$ and $\nu_{b}$ on $[k]^{\Lambda}$.
\end{definition}

\noindent
{\tt STEP} will have the following {\em general property}:
Consider in the input a random $k$-colouring of some graph $G$
and $v$, $u$, two non-adjacent vertices of $G$.
The accuracy of the outcome depends on certain {\em spatial mixing} 
properties of the Gibbs distribution of the colourings of $G$.
In particular, for a random $k$-colouring of $G$  it suffices 
that there is a sufficiently large $b>0$ such that 
\begin{equation}\label{eq:SpatialRequirement}
\left|Pr[\textrm{$u$ is coloured $c$}| \textrm{$v$ is coloured $q$}]-
\frac{1}{k}\right| 
\leq \exp\left( -b\cdot dist(v,u)\right) \qquad \forall c,q\in [k].
\end{equation}
Moreover, assuming that (\ref{eq:SpatialRequirement}) holds, then
the distribution of the output of {\tt STEP} is within total variation
distance from the ideal distribution a quantity which is proportional to
the r.h.s. of (\ref{eq:SpatialRequirement}).
Consequently, when we consider the previous recursive random colouring
algorithm (that uses {\tt STEP}), we note that it is desirable 
to delete edges that belong to long cycles in each recursive
call.

We show that for a typical $G(n,d/n)$ and for $k\geq(2+\epsilon)d$, 
where $\epsilon>0$ is fixed, we get a relation as in  (\ref{eq:SpatialRequirement})
for the random $k$-colourings of any graph in the recurrence. 
Moreover, if we are careful enough on how do we
delete the edges in the recurrence, the outcome of the random 
colouring algorithm is very close to Gibbs distribution. In particular, 
we show the following theorem.

\begin{theorem}\label{thrm:GnpAccuracy}
Let $\mu$ be the uniform distribution over the $k$-colourings of
$G_{n,d/n}$ and let $\mu'$ be the distribution of the colouring
that is returned by our random colouring algorithm. Taking $k\geq
(2+\epsilon)d$, for fixed $\epsilon>0$,  then with probability at
least $1-n^{-\frac{\epsilon}{90\log d}}$ it holds that 
\begin{displaymath}
||\mu-\mu' || = O\left(n^{-\frac{\epsilon}{90\log d}}\right),
\end{displaymath}
for any fixed $d>d_0(\epsilon)$.
\end{theorem}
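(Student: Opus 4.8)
The plan is to reduce the global bound $\|\mu-\mu'\|$ to a sum of the errors made by the individual {\tt STEP} calls, by peeling the recursion one deletion at a time and exploiting that {\tt STEP} is an honest randomised algorithm (a Markov kernel on colourings, hence non-expansive in total variation). Write $G=G_0\supset G_1\supset\cdots\supset G_m$ for the nested graphs generated by the recursion, where $G_{i+1}=G_i\setminus\{v_i,u_i\}$ and $G_m$ is the base graph that the algorithm colours directly. Let $\mu_i$ be the Gibbs distribution of $G_i$ and $\mu_i'$ the distribution the algorithm actually produces on $G_i$, so $\mu=\mu_0$ and $\mu'=\mu_0'$. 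Let $K_i$ denote the kernel implemented by the $i$-th {\tt STEP}, so that $\mu_i'=K_i(\mu_{i+1}')$, and recall that $\mu_i$ is exactly $\mu_{i+1}$ conditioned on $\{v_i,u_i$ receive different colours$\}$. I would then estimate, for each $i$,
\begin{displaymath}
\|\mu_i-\mu_i'\|=\|\mu_i-K_i(\mu_{i+1}')\|\le \|K_i(\mu_{i+1}')-K_i(\mu_{i+1})\|+\|K_i(\mu_{i+1})-\mu_i\|\le \|\mu_{i+1}-\mu_{i+1}'\|+\delta_i,
\end{displaymath}
where the first summand was bounded by the data-processing inequality and $\delta_i:=\|K_i(\mu_{i+1})-\mu_i\|$ is the error committed by {\tt STEP} \emph{when fed the exact Gibbs distribution of} $G_{i+1}$. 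Crucially there is no compounding constant, so unwinding gives
\begin{displaymath}
\|\mu-\mu'\|\le \|\mu_m-\mu_m'\|+\sum_{i=0}^{m-1}\delta_i .
\end{displaymath}

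Next I would bound each $\delta_i$. This is exactly the situation the general property of {\tt STEP} was designed for: feeding it the true Gibbs distribution of $G_{i+1}$ (a genuine random $k$-colouring), the output misses the conditioned distribution $\mu_i$ by an amount proportional to the right-hand side of \eqref{eq:SpatialRequirement} for the pair $(v_i,u_i)$ in $G_{i+1}$, i.e. $\delta_i=O\!\left(\exp(-b\cdot \dist_{G_{i+1}}(v_i,u_i))\right)$, provided \eqref{eq:SpatialRequirement} holds for $G_{i+1}$. Two things must then be arranged by the choice of which edges to delete and in what order: every reintroduced edge must join vertices that are far apart in the current graph, and the base graph $G_m$ must be simple enough to be sampled with negligible error. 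I would realise the first point using the geometry of a typical $G(n,d/n)$ — it is locally tree-like and carries very few short cycles — so that one may always delete an edge lying on a long cycle and thereby force $\dist_{G_{i+1}}(v_i,u_i)=\Theta(\log_d n)$. Since for $k\ge(2+\eps)d$ the decay rate in \eqref{eq:SpatialRequirement} is of order $\eps$, each such term is of size $\exp(-\Theta(\eps)\cdot\Theta(\log_d n))=n^{-\Theta(\eps/\log d)}$, which is where the exponent $\eps/(90\log d)$ originates; keeping the number of {\tt STEP} calls under control (so that the factor in front of this per-step size is subdominant) then yields $\sum_i\delta_i=O(n^{-\eps/(90\log d)})$, while $\|\mu_m-\mu_m'\|$ is negligible by the choice of the base case.

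Finally I would assemble the probabilistic part. The bound rests on three high-probability structural facts about $G(n,d/n)$ for $d>d_0(\eps)$: that \eqref{eq:SpatialRequirement} holds \emph{simultaneously} for every pair reintroduced along the recursion, that a deletion order exists keeping all reintroduction distances of order $\log_d n$ (equivalently, that short cycles are few and well separated), and that the base graph is of the required simple form. A union bound over the relevant pairs and cycles turns the per-event failure probabilities into the claimed $n^{-\eps/(90\log d)}$, with the same exponent because it again tracks (decay rate $\Theta(\eps)$)$\times$(distance scale $\log_d n$).

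I expect the principal difficulty to lie in the per-step estimate rather than in the telescoping. Establishing \eqref{eq:SpatialRequirement} \emph{uniformly} over all the subgraphs $G_{i+1}$, with a decay rate $b$ large enough that $\exp(-b\cdot\dist)$ simultaneously beats the number of {\tt STEP} calls and survives the union bound, is precisely where the hypotheses $k\ge(2+\eps)d$ and $d>d_0(\eps)$ are indispensable, and where any slack in the correlation-decay constant would destroy the $n^{-\Omega(1)}$ guarantee. Tightly coupled to this is the combinatorial balancing act of the deletion schedule: one must keep every reintroduction distance large \emph{and} the number of reintroductions small \emph{and} the remainder $G_m$ directly colourable, and it is the interaction of these three competing requirements — large distances versus few deletions versus a simple remainder — that I would regard as the real crux of the argument.
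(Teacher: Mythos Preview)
Your telescoping via the data-processing inequality is sound and mirrors the paper's own accumulation bound (Theorem~\ref{thrm:Accuracy}). The real gap is in the arithmetic linking the per-step error to the number of steps. You estimate each $\delta_i$ as $n^{-\Theta(\eps/\log d)}$ and then assert that ``keeping the number of {\tt STEP} calls under control'' makes $\sum_i\delta_i=O(n^{-\eps/(90\log d)})$. But the number of {\tt STEP} calls is $m=\Theta(n)$: essentially one per edge of $G(n,d/n)$, and this cannot be reduced, since the base graph $G_m$ must be nearly edgeless (a disjoint union of short cycles and isolated vertices) to be directly colourable. Hence $\sum_i\delta_i$ is of order $n\cdot n^{-\Theta(\eps/\log d)}$, which for $\eps/\log d<1$ is not even $o(1)$. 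Even if you replace your decay rate $b=\Theta(\eps)$ by the more generous $b\approx\log((1+\eps)d)\approx\log d$ coming from the per-vertex disagreement probability $1/(k-d)\approx 1/((1+\eps)d)$, the per-step bound on a fixed graph becomes only $\approx n^{-1/9}$, since the reintroduction distance is $l_0=\frac{\log n}{9\log d}$; this still falls far short of beating the factor $n$. The option ``number of reintroductions small'' that you list at the end is simply unavailable.

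The paper's resolution is not to sharpen the \emph{pointwise} $\delta_i$ but to control its \emph{expectation over the random graph} and then apply Markov's inequality. The per-step error is bounded by the probability that a disagreement path runs from $v_i$ to $u_i$; averaging over $G_i\subseteq G(n,d/n)$, the expected number of length-$l$ paths between two specified vertices is at most $n^{l-1}(d/n)^l=d^l/n$, and each is a disagreement path with probability roughly $((1+\eps)d)^{-l}$, yielding $E[\delta_i]=O\!\bigl(n^{-1}(1+\eps/5)^{-l_0}\bigr)=O\!\bigl(n^{-(1+\eps/(45\log d))}\bigr)$ (Theorem~\ref{thrm:eIsomGnp} and Proposition~\ref{prop:DecayOfPaths}). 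That extra factor $n^{-1}$ --- coming from the sparsity of the random graph, not from spatial mixing of colourings --- is exactly what absorbs the $\Theta(n)$ summands; summing gives $E[\|\mu-\mu'\|]=O(n^{-\eps/(45\log d)})$, and Markov then yields the high-probability statement with the halved exponent. A union bound over per-step events, as you propose for the ``probabilistic part'', would face the same $\Theta(n)$ blow-up and does not work here.
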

%
Additionally, we provide guarantees on the time com\-ple\-xi\-ty of the 
algorithm.

\begin{theorem}\label{thrm:GnpTimeCmplxt}
With probability at least $1-n^{-2/3}$, it holds that the time complexity
of the random colouring algorithm is $O(n^{2})$.
\end{theorem}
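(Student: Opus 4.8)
The plan is to express the total running time as the cost of the base case plus the sum, over the recursive calls, of the cost of a single call, and then to show that there are $O(n)$ calls and that each call together with the base case costs $O(n)$. I would write the running time as $T_{\mathrm{base}}+\sum_i T_i$, where $T_i$ accounts for locating the edge to be deleted at the $i$-th call and for the work performed by {\tt STEP} on its two endpoints, and $T_{\mathrm{base}}$ is the cost of colouring directly the simple graph reached at the bottom of the recursion. The failure probability $n^{-2/3}$ will absorb the (small) probability that $G_{n,d/n}$ lacks the structural regularity underpinning these two bounds.

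First I would bound the number of recursive calls. Every call deletes one edge and no edge is reinserted, so the recursion depth is at most $|E(G_{n,d/n})|$; more precisely it is at most the number of edges lying on long cycles, which is bounded by the circuit rank $|E|-n+c$, with $c$ the number of components. Since $|E(G_{n,d/n})|\sim\Bin(\binom{n}{2},d/n)$ has mean $(1+o(1))dn/2$, a Chernoff bound gives $|E(G_{n,d/n})|\le dn$, and hence $O(n)$ recursive calls, outside an event of probability $\exp(-\Omega(n))$.

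Next I would bound $T_i$ and $T_{\mathrm{base}}$ by $O(n)$. Each call searches the current graph for an edge on a long cycle and then runs {\tt STEP}, which resamples a bounded-radius neighbourhood of the two endpoints at the spatial-mixing scale dictated by (\ref{eq:SpatialRequirement}); both the search and the resampling are confined to the current vertex and edge sets, so each costs $O(n)$. The base graph has only short cycles, i.e.\ it is a forest carrying a bounded amount of extra structure per component, and admits exact sampling of a random colouring in linear time. The delicate point is to make the $O(n)$ bound on $T_i$ hold uniformly over all $O(n)$ calls: I would introduce the good event $\cG$ that $G_{n,d/n}$ contains no subgraph on $O(\log n)$ vertices with more edges than vertices, so that on $\cG$ every neighbourhood examined by {\tt STEP} spans $O(n)$ vertices and, being tree-like, only $O(n)$ edges, keeping the per-call work linear. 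A first-moment estimate over such (multiply-cyclic, logarithmic-size) subgraphs bounds $\Pr{\overline{\cG}}$ by $n^{-2/3}$, which dominates the exponentially small edge-count contribution.

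The main obstacle is exactly this uniform control of the per-call cost. A single invocation of {\tt STEP} trivially cannot touch more than the whole graph, but because its accuracy forces it to operate on neighbourhoods of radius $\Theta(\log n)$, one must prevent both the appearance of a locally dense region on which such a neighbourhood is expensive to process and the accumulation of costly calls across the $O(n)$ levels of the recursion. The sparsity and local tree-likeness of $G_{n,d/n}$ on $\cG$ are what force each neighbourhood to carry only $O(n)$ edges and keep the work per level linear. Combining the three estimates yields running time $O(n)\cdot O(n)+O(n)=O(n^2)$ with probability at least $1-n^{-2/3}$, as claimed.
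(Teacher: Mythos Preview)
Your overall skeleton --- $O(n)$ recursive calls, $O(n)$ work per call, $O(n)$ for the base case --- matches the paper's argument, and your Chernoff bound on $|E(G_{n,d/n})|$ is exactly how the paper controls $r$. However, two of your supporting steps rest on misreadings of what {\tt STEP} does and of where the $n^{-2/3}$ failure probability comes from.

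First, {\tt STEP} does \emph{not} ``resample a bounded-radius neighbourhood'' of $v_i,u_i$. It computes the disagreement graph $Q_{c,q}$ (a Kempe chain in the two colours $c,q$) and swaps the two colour classes on it; this chain is a global object and can in principle span the whole graph. The per-call cost is therefore bounded directly by Lemma~\ref{lemma:StepAc}: revealing $Q_{c,q}$ is a single BFS/DFS over the edge set, so one call costs $O(|E|)=O(n)$ once $|E|=O(n)$ is established. No local tree-likeness or good event $\cG$ is needed for this, and indeed your statement ``on $\cG$ every neighbourhood examined by {\tt STEP} spans $O(n)$ vertices and only $O(n)$ edges'' is vacuous, since the whole graph already has $O(n)$ vertices and edges.

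Second, the $n^{-2/3}$ probability in the theorem is \emph{not} absorbing a failure of the per-call bound; it is the probability, from Lemma~\ref{lemma:seqsubprop}, that the short cycles of $G_{n,d/n}$ are pairwise edge-disjoint. This is what guarantees that $G_0$ consists only of isolated vertices and vertex-disjoint short cycles, so that exact uniform sampling on $G_0$ can be done in $O(n)$ time by dynamic programming on each unicyclic component. Your proposal does invoke the right first-moment estimate for ``no small subgraph with more edges than vertices'', but you attach it to the wrong step: it is needed to control $T_{\mathrm{base}}$, not $T_i$. Once these two attributions are corrected, the $O(n^2)$ bound follows as you outline.
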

\inappendix{
Detailed proofs of Theorem \ref{thrm:GnpAccuracy} and Theorem 
\ref{thrm:GnpTimeCmplxt} appear in the appendix, Section \ref{sec:AppGnp}.
}
\vspace{-.4cm}

\remove{
To get an idea of the achievements in the previous works consider the
following. It is well known that the maximum degree of $G(n,d/n)$ is
much larger than the average degree $d$, i.e.  with probability $1-o(1)$,
the maximum degree is $\Theta\left(\frac{\log n}{\log\log n} \right)$
(see \cite{janson}).
The main difficulty in dealing with high degree vertices in the analysis
is that in many colourings, these vertices have few colour choices, i.e., 
almost  all of the colours might appear in their neighbourhood. Thus, the 
colour choice of the neighbour of a high degree vertex $v$ can have a large
influence on the colour choice of $v$.

}

\randbits{
It is straightforward to show that there cannot be any algorithm 
that  $k$-colours randomly any graph on $n$ vertices with less 
than $\Theta(n)$ random bits, when $k\geq 2$. 
It follows easily from the analysis of the algorithm that 
the order of magnitude of the number of random bits we require
 matches the linear lower bound.

\begin{corollary}\label{cor:RandomBits}
With probability at least $1-n^{-2/3}$ the number of random bits
that the random colouring algorithm requires is $\Theta(n)$.
\end{corollary}
}

\paragraph{Notation} We denote with small letters of the greek
alphabet  the colourings of a graph $G$, e.g. $\sigma, \eta, \tau$,
while we use capital letters for the random variables which take
values over the colourings e.g. $X,Y, Z$. 
We denote with $\sigma_v$ the colour assignment of the vertex $v$
under the colouring $\sigma$. Similarly, the random variable $X(v)$
is is equal to the colour assignment that $X$ specifies for the
vertex $v$. Finally, for an integer $k>0$ let $[k]=\{1,\ldots,k\}$.

\section{Basic Description}\label{sec:basics}
In this section we provide a more detailed description of our
approximate colouring algorithm. 
We assume that the input graph is an instance of $G(n,d/n)$ and $k$ is 
the numbers of colours.
\\ \vspace{-.7cm}\\

\noindent
{\bf Set up.}
Consider a sequence of graphs $G_0,\ldots,G_r$ such that every
$G_i$ is a subgraph of $G_{n,d/n}$. Moreover, $G_{r}$ is identical
to $G_{n,d/n}$, while $G_i$ is derived by deleting some edge of 
$G_{i+1}$. 

So as to get the graph $G_i$ from $G_{i+1}$ the only rule we follow 
is that we delete, arbitrarily, an edge that belongs to a sufficiently 
large cycle, i.e of length at least $(\log n)/(9\log d)$.
$G_0$ is the graph that comes up when there no are other such edges to delete.
Note only that $G_i$,  as a subgraph of $G(n,d/n)$, is somehow {\em random}.
\\ \vspace{-.8cm}\\

\noindent
{\bf Colouring.}
With probability $1-n^{-\Omega(1)}$, the sequence of subgraphs has the
property that $G_0$ is simple enough and we can $k$-colour it randomly 
in polynomial time by using some known algorithm.  In that case the algorithm
takes a random colouring of $G_0$. Then, for $i=0$ to $r-1$ it does the following:
it takes the random colouring of $G_i$, it does a simple, i.e. polynomial time, 
processing of this colouring and gets a random colouring of $G_{i+1}$. 
The algorithm continues until $G_r$.\\ \vspace{-.7cm}\\

\noindent
Let $G$ and $G'$ be two consecutive terms in the sequence of
graphs, above. Assume that $G$ is derived by deleting the edge
$\{v,u\}$ from $G'$. The critical question is the following one: Given
$X$, a random $k$-colouring of $G$, how can someone  use it to get
efficiently $X'$, a random $k$-colouring of $G'$. 
A moment's reflection makes it clear that if $X$ has the additional
property that $X(v)\neq X(u)$, then  
$X$ is distributed u.a.r. among the $k$-colourings of $G'$. In this case
we can simply set $X'=X$. Unfortunately, this cannot always be the
case and the random colouring algorithm we propose somehow deals
with situations as the one where $X(v)=X(u)$.

\begin{definition}[Good \& Bad colourings]
Let $\sigma$ be a proper $k$-colouring of $G$. We call $\sigma$ a
{\em bad} colouring of $G$ if $\sigma_v=\sigma_u$. Otherwise, we call
$\sigma$ a {\em good} colouring of $G$.
\end{definition}

\noindent
It turns out that the basic algorithmic challenge here is captured in 
the following problem.

\begin{problem}\label{prblm:StepProblem}
Given a {\em bad} random colouring of $G$, turn it to a {\em good} 
random colouring, in polynomial time.
\end{problem}

\noindent
Let us give an intuitive description of our algorithm for the above
problem. First remark the following: Consider $\sigma$, some  $k$-colouring
of $G$, and some $q\in[k]$ such that $\sigma_v\neq q$. It is easy to see 
that $\sigma$ specifies a {\em connected} subgraph of $G$ which includes
$v$ while every vertex in this  subgraph is assigned colouring either
$q$ or $\sigma_v$. The {\em maximal induced subgraph} of this kind
is called ``disagreement graph''\footnote{For a more formal definition
of ``disagreement graph'' see in Section \ref{sec:Problem1}.}.
Figure \ref{fig:G0} shows a $3$-colouring. The fat lines indicate the
disagreement  graph specified by using the colour ``g''.

It is direct to show that the disagreement graph that is specified
by the colouring $\sigma$ and the colour $q$ is always a connected,
bipartite graph whose parts are coloured $\sigma_v$ and $q$, respectively.

\begin{figure}
\begin{minipage}{0.5\textwidth}
	\centering
		\includegraphics[height=3.4cm]{./G0}
		\caption{``Disagreement graph''.}
	\label{fig:G0}
\end{minipage}
\begin{minipage}{0.5\textwidth}
	\centering
		\includegraphics[height=3.4cm]{./G1}
		\caption{``g-switching''.}
	\label{fig:G1}
\end{minipage}	
\end{figure}

\begin{definition}
Assume that $\sigma$, a $k$-colouring of $G$,  and $q\in [k]
\backslash\{\sigma_v\}$ define the disagreement graph $Q$. The 
$k$-colouring of $G$,  $\sigma'$  is called ``$q$-switching of
$\sigma$'' if it is derived from $\sigma$ by switching the 
colour assignments of the vertices of $G$ that correspond 
to the two parts of $Q$.
\end{definition}

\noindent
In Figure \ref{fig:G1} we present the ``$g$-switching'' of the colouring
in Figure \ref{fig:G0}. It is direct that for the colouring $\sigma$
and for some $q\in[k]\backslash\{\sigma_v\}$ there is a unique
$q$-switching of $\sigma$. Also, it straightforward to show that
the $q$-switching of any proper $k$-colouring of $G$ is a proper
colouring, as well \footnote{ E.g. see proof of Lemma \ref{lemma:isomorphism}}.

Generally the $q$-switching of a bad colouring is not always a
good. However, given some technical conditions
which hold with probability 
$1-n^{-\Omega(1)}$ over the choices of $G$, we show the following,
{\em non-trivial}, statement
\begin{center}
\begin{tabular}{@{}p{0.9\textwidth}@{}}
The distribution of the $q$-switching of $Z$, a bad random 
$k$-colouring of $G$, is very close to the distribution of 
the good random $k$-colourings of $G$, when the colour $q$
is chosen uniformly at random from $[k]\backslash\{Z(v)\}$
and $k$ is sufficiently large.
\end{tabular}
\end{center}
The above fact suggests that we can have the the following {\em approximation algorithm} 
for Problem \ref{prblm:StepProblem} when  $G$ is a ``typical'' instance:  
Let $X$ be a random colouring $G$. If $X$ is {\em good}, then set
$X'=X$. If $X$ is a {\em bad}, then choose at random some $q\in 
[k]\backslash\{X(v)\}$ and set $X'$ to be equal to the $q$-switching 
of $X$. \\ \vspace{-.7cm}\\

\noindent
{\bf Remark.}
The algorithm in the previous paragraph is exactly the one we refer
in the introduction as {\tt STEP}.
\\ \vspace{-.7cm}\\

\noindent
Returning to the approximate random colouring algorithm,  we can 
build upon {\tt STEP} as follows. First, colour randomly 
$G_0$ with some known algorithm. Then, for $i=0$ to $r-1$ do the 
following: If the colouring of $G_i$ is {\em good}, then consider 
it as the colouring for $G_{i+1}$. Otherwise, choose appropriately 
a random colour $q$ and set as a colouring for the graph $G_{i+1}$
the $q$-switching of the colouring of $G_i$.

The above is a concise description of our approximate random colouring
algorithm. Clearly it is efficient and accurate only for typical instances 
of the input graph $G(n,d/n)$, i.e. it has the properties described
by Theorem \ref{thrm:GnpAccuracy} and Theorem \ref{thrm:GnpTimeCmplxt}.
\\ \vspace{-.7cm}\\

\remove{
\noindent
{\bf Remark.}
So as to get the graph $G_i$ from $G_{i+1}$ the only rule we follow 
is that we delete an edge that belongs to a sufficiently large cycle,
i.e of length at least $(\log n)/(9\log d)$. 
We show that in a $G_{n,d/n}$ after having deleted all 
edges that belong to these  large cycles we end up with 
a graph\footnote{Which contains only  the edges which belongs 
to small cycles} which is easy to $k$-colour randomly,
with probability $1-n^{-\Omega(1)}$.
}

\subsection{Some further remarks}\label{sec:pathological}
To get a better intuition about the algorithm {\tt STEP} we focus on
a case where things go wrong, i.e. consider the following.
Let $\sigma$ be  bad colouring of $G$, i.e. $\sigma_v=\sigma_u$.
It is possible that the disagreement graph specified by $\sigma$
and some colour $q$ to be so large that it contains both $v$ and
$u$. In this case the $q$-switching of $\sigma$ is a bad colouring.
Clearly in this case {\tt STEP} fails to generate a good colouring
of $G$.
Moreover, it is possible to have good colourings
of $G$ that cannot  be generated by applying the algorithm {\tt STEP}
to any bad colouring of $G$. Such colourings constitute 
{\em pathological} cases for the algorithm. These pathological
cases do not cause big problem as long as they occur rarely, i.e.
the fraction of colourings of $G$ that causes such situation is 
sufficiently small. The occurrences of {pathological} cases are
rare when $k$ is large and $v$, $u$ are far apart.

\section{Problem \ref{prblm:StepProblem} and $\alpha$-isomorphism}
{\tt STEP} uses the idea of $q$-switching so as to achieve a
certain kind of mapping between bad and good colourings. 
{\em Ideally} this mapping should have the property that, for
a bad random colouring of $G$ on the input, the image should be 
a good random colouring of $G$.
Unfortunately the $q$-switchings (as implemented by  {\tt STEP})
do not have this property but somehow they {\em approximate} such
mapping. We introduce few notions which capture the essence
of these ideas.
For the the following definitions in this section consider a fixed 
graph $G$ and let $\Omega$ be the sets of its proper $k$-colourings
\footnote{Take $k$ sufficiently large that $\Omega$ is non-empty.}.

\begin{definition}[Isomorphism]
We let  $\Omega_1, \Omega_2\subseteq \Omega$. We say that $\Omega_1$ 
is {\em isomorphic} to $\Omega_2$ if and only if there is a bijection 
$T:\Omega_1\to \Omega_2$.
\end{definition}

\noindent
The basic property of isomorphism we need here is contained in
the following corollary.

\begin{corollary}\label{corollary:IsoSmpl}
Assume that we have two isomorphic sets $\Omega_1$ and $\Omega_2$
and let $T$ be a bijection between these two sets. Then, 
given $X_1$, a random member of $\Omega_1$, the distribution
of $T(X_1)$ is the uniform over $\Omega_2$.
\end{corollary}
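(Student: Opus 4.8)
The plan is to unwind the two definitions in play, namely ``$X_1$ is a random member of $\Omega_1$'' (which I read as $X_1$ being uniformly distributed on $\Omega_1$) and ``$T$ is a bijection'', and then to compute the distribution of $T(X_1)$ directly. This is a purely measure-theoretic bookkeeping statement about pushing a uniform distribution through a bijection, so I expect the argument to be short and free of any genuine obstacle.

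First I would record the cardinality fact: since $T:\Omega_1\to\Omega_2$ is a bijection and both sets are finite (they are subsets of $\Omega\subseteq[k]^V$, hence finite), we have $\abs{\Omega_1}=\abs{\Omega_2}$. Write $N=\abs{\Omega_1}$. Uniformity of $X_1$ then means $\Pr{X_1=\sigma}=1/N$ for every $\sigma\in\Omega_1$, and this is well defined precisely because $N$ is finite and positive (taking $k$ large enough that $\Omega\neq\emptyset$, as assumed in the section).

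Next I would compute the law of $Y=T(X_1)$ pointwise. Fix any $\tau\in\Omega_2$. Because $T$ is a bijection, there is exactly one $\sigma\in\Omega_1$ with $T(\sigma)=\tau$, namely $\sigma=T^{-1}(\tau)$. Hence the event $\{Y=\tau\}$ is identical to the single-point event $\{X_1=T^{-1}(\tau)\}$, and so
\[
\Pr{Y=\tau}=\Pr{X_1=T^{-1}(\tau)}=\frac1N=\frac{1}{\abs{\Omega_2}}.
\]
Since $\tau\in\Omega_2$ was arbitrary, $Y$ assigns equal mass $1/\abs{\Omega_2}$ to every element of $\Omega_2$, which is exactly the assertion that $T(X_1)$ is uniform over $\Omega_2$.

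The only places demanding any care are the two uses of the bijection hypothesis: \emph{injectivity} is what guarantees that $\{Y=\tau\}$ corresponds to a single element of $\Omega_1$ (so no probabilities need to be summed), while \emph{surjectivity} guarantees that the preimage $T^{-1}(\tau)$ exists for every $\tau$, so that no element of $\Omega_2$ is missed. Neither point is substantive, and I anticipate no real difficulty in writing the proof out in full.
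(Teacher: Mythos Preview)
Your proposal is correct and follows essentially the same approach as the paper: both arguments note that $\abs{\Omega_1}=\abs{\Omega_2}$ from the bijection, then compute $\Pr{T(X_1)=\tau}=\Pr{X_1=T^{-1}(\tau)}=1/\abs{\Omega_2}$ for each $\tau\in\Omega_2$.
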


\noindent
The proof of Corollary \ref{corollary:IsoSmpl} appears in Section
\ref{sec:corollary:IsoSmpl}.
The previous definition of isomorphism is standard and generally
it expresses a notion of ``si\-mi\-la\-ri\-ty''.
We will need to get a bit further from this, i.e. we introduce
a more  general notion of ``similarity'' between  sets of colourings 
which we call $\alpha$-isomorphism.

\begin{definition}[$\alpha$-isomorphism]
We let $\Omega_1, \Omega_2\subseteq \Omega$ and $\alpha\in [0,1]$.
We say that $\Omega_1$ is {\em $\alpha$-isomorphic} to 
$\Omega_2$ if there are sets $\Omega'_1\subseteq \Omega_1$ and 
$\Omega'_2\subseteq \Omega_2$ such that \vspace{-.06cm}
\begin{itemize}
\item $|\Omega'_i| \geq (1-\alpha)|\Omega_i|$, for $i=1,2$. \vspace{-.06cm}
\item $\Omega'_1$ and $\Omega'_2$ are isomorphic.\vspace{-.06cm}
\end{itemize}
We call $(\Omega'_1, \Omega'_2)$ as the {\em isomorphic pair of
$\Omega_1$ and $\Omega_2$}.
\end{definition}

\noindent
Thus, rather than asking for the whole sets $\Omega_1$ and $\Omega_2$
to be isomorphic, $\alpha$-isomorphicity requires only sufficiently
large subsets from each of $\Omega_1$ and $\Omega_2$ to be isomorphic.
The notion of $\alpha$-function, that follows, is for $\alpha$-isomorphism 
the analogous of the bijection for isomorphism.

\begin{definition}[$\alpha$-function]\label{def:AFunction}
For some $\alpha \in [0,1]$, let $\Omega_1, \Omega_2\subseteq \Omega$
be two sets such that $\Omega_1$ is {\em $\alpha$-isomorphic} to $\Omega_2$
with isomorphic pair $(\Omega'_1, \Omega'_2)$. Let  $h:\Omega'_1
\to \Omega'_2$ be a bijection. Then the function $H: \Omega_1\to
[k]^V$ is called $\alpha$-function {\em if and only if} $\forall \sigma \in \Omega'_1$ 
it holds that $H(\sigma)=h(\sigma)$.
\end{definition}

\noindent
Note that we  can be a bit loose on the definition of an $\alpha$-function
when the input $\sigma$ does not belong to $\Omega'_1$, i.e. we
allow the $\alpha$-function to take any value in $[k]^V$. 
Showing that two sets $\Omega_1$ and $\Omega_2$
are $\alpha$-isomorphic reduces to providing a function which has
the properties stated in Definition \ref{def:AFunction}.

Typically, we are given two sets of $k$-colourings of $G$, e.g.
$\Omega_1$ and $\Omega_2$, and we will be asked to {\em devise} an 
$\alpha$-function which then suggests that these  $\Omega_1$ 
and $\Omega_2$ are $\alpha$-isomorphic.
The challenge is to devise an $\alpha$-function $H$ which complies 
to the following {\em efficiency rules}: First, given some $\sigma 
\in \Omega_1$ we want $H(\sigma)$ to have as few different colour 
assignments from $\sigma$ as possible, while the vertices with the
different colour  assignments should be as close to each othere as
possible.
Second, the smaller $\alpha$ and $k$ are, the better.
The $q$-switchings we introduced in the previous section are 
examples  of $\alpha$-functions between certain sets of 
colourings of $G$.
The next lemma states the most important property of $\alpha$-isomorphism
and somehow it generalizes Corollary \ref{corollary:IsoSmpl}.

\begin{lemma}\label{lemma:BijectionTVD}
Assume that the set $\Omega_1$ is $\alpha$-isomorphic to $\Omega_2$,
and $H:\Omega_1\to[k]^V$ is an $\alpha$-function. Let $z$ be a random
variable distributed  uniformly over $\Omega_1$ and let $z'=H(z)$. 
Denote by $\nu$ the uniform distribution over $\Omega_2$ and $\nu'$
the distribution of  $z'$. It holds that
\begin{displaymath}
||\nu-\nu' ||\leq \alpha.
\end{displaymath}
\end{lemma}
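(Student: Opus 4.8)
The plan is to bound the total variation distance by exhibiting a single event---namely, that the input $z$ lands in the ``good'' isomorphic part $\Omega_1'$---that simultaneously controls both the probability mass where $H$ behaves like a genuine bijection and the discrepancy in the output distribution. First I would set up notation: write $N_i=|\Omega_i|$ and recall that the $\alpha$-isomorphism gives subsets $\Omega_1'\subseteq\Omega_1$, $\Omega_2'\subseteq\Omega_2$ with $|\Omega_i'|\ge(1-\alpha)N_i$, and a bijection $h:\Omega_1'\to\Omega_2'$ agreeing with $H$ on $\Omega_1'$. Since $z$ is uniform on $\Omega_1$, the restriction of $z$ to the event $\{z\in\Omega_1'\}$ is uniform on $\Omega_1'$, and $\pr[z\notin\Omega_1']=1-|\Omega_1'|/N_1\le\alpha$.

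The key computation is to estimate $|\nu'(\Omega')-\nu(\Omega')|$ for an arbitrary $\Omega'\subseteq[k]^V$ and then maximise. I would split $\nu'(\Omega')=\pr[H(z)\in\Omega']$ according to whether $z\in\Omega_1'$ or not. On the event $z\in\Omega_1'$, the map $H$ coincides with the bijection $h$ onto $\Omega_2'$; because $h$ is a bijection between the uniform distribution on $\Omega_1'$ and the uniform distribution on $\Omega_2'$, the pushforward of $z\mid\{z\in\Omega_1'\}$ is exactly uniform on $\Omega_2'$ (this is precisely Corollary \ref{corollary:IsoSmpl} applied to $\Omega_1'$ and $\Omega_2'$). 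The plan is therefore to write $\nu'$ as a mixture
\begin{displaymath}
\nu' = (1-p)\,\mathrm{Unif}(\Omega_2') + p\,\rho,
\end{displaymath}
where $p=\pr[z\notin\Omega_1']\le\alpha$ and $\rho$ is whatever arbitrary distribution $H$ induces off the good set. Comparing this mixture to $\nu=\mathrm{Unif}(\Omega_2)$ reduces the problem to bounding the distance between $\mathrm{Unif}(\Omega_2')$ and $\mathrm{Unif}(\Omega_2)$ and absorbing the mass-$p$ defect.

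For the comparison between the two uniform distributions, I would note that $\mathrm{Unif}(\Omega_2')$ and $\mathrm{Unif}(\Omega_2)$ agree up to a reweighting supported on $\Omega_2$, and their total variation distance is exactly $1-|\Omega_2'|/N_2\le\alpha$; concretely, for any test set the difference in mass is at most the fraction of $\Omega_2$ lying outside $\Omega_2'$. Combining the two sources of error by the triangle inequality for $\|\cdot\|$,
\begin{displaymath}
\|\nu-\nu'\| \le \|\,\mathrm{Unif}(\Omega_2)-\mathrm{Unif}(\Omega_2')\,\| + p \le \alpha,
\end{displaymath}
where the two contributions are not simply added but reconciled: the mass $p$ that escapes $\Omega_1'$ is the same mass $(1-|\Omega_2'|/N_2)$-type defect, so a careful accounting shows the total is a single $\alpha$ rather than $2\alpha$. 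I expect the main obstacle to be exactly this bookkeeping---making sure the off-$\Omega_1'$ mass and the $\Omega_2\setminus\Omega_2'$ deficit are counted once, not twice, so that the bound comes out as $\alpha$ and not $2\alpha$. The cleanest way to handle it is to test against an arbitrary $\Omega'$ directly: write $\nu(\Omega')-\nu'(\Omega')$ as a sum over colourings in $\Omega'$ and bound the positive and negative discrepancies separately, observing that the bijection $h$ makes the contribution from $\Omega_2'\cap\Omega'$ cancel exactly, leaving only terms confined to $\Omega_2\setminus\Omega_2'$ (for $\nu$) and to the image of $\Omega_1\setminus\Omega_1'$ (for $\nu'$), each of total mass at most $\alpha$, and these two leftover regions contribute with opposite signs so the maximum over $\Omega'$ is $\alpha$.
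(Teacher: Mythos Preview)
Your direct-computation route is different from the paper's, which uses a coupling: it takes $x\sim\nu$, arranges a joint distribution in which the event $E=\{z\in\Omega_1'\}\cap\{x\in\Omega_2'\}$ has probability $\min\bigl(|\Omega_1'|/|\Omega_1|,\,|\Omega_2'|/|\Omega_2|\bigr)\ge 1-\alpha$, and on $E$ sets $x=H(z)=z'$ (both being uniform on $\Omega_2'$ there). Then $\Pr[x\neq z']\le\alpha$ and the Coupling Lemma finishes. This bypasses the bookkeeping you flag as the main obstacle.

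Your own resolution of that bookkeeping, however, has a gap. You assert that ``the bijection $h$ makes the contribution from $\Omega_2'\cap\Omega'$ cancel exactly,'' but for $\sigma\in\Omega_2'$ one has $\nu(\sigma)=1/|\Omega_2|$ while the bijection contributes $1/|\Omega_1|$ to $\nu'(\sigma)$, and nothing in the definition of $\alpha$-isomorphism forces $|\Omega_1|=|\Omega_2|$ (only $|\Omega_1'|=|\Omega_2'|$). So the cancellation is not exact, and your ``opposite signs'' argument does not apply as stated. The direct approach can still be salvaged: for any test set $A$ with $A_1=A\cap\Omega_2'$ and $A_2=A\cap(\Omega_2\setminus\Omega_2')$, one has $\nu(A)-\nu'(A)\le |A_1|/|\Omega_2|+|A_2|/|\Omega_2|-|A_1|/|\Omega_1|$. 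If $|\Omega_1|\ge|\Omega_2|$, maximising gives $|\Omega_2'|\bigl(1/|\Omega_2|-1/|\Omega_1|\bigr)+(|\Omega_2|-|\Omega_2'|)/|\Omega_2|=1-|\Omega_1'|/|\Omega_1|\le\alpha$; if $|\Omega_1|<|\Omega_2|$ the $A_1$-terms already sum to something nonpositive and one gets $\le(|\Omega_2|-|\Omega_2'|)/|\Omega_2|\le\alpha$. So the residual you missed is exactly what absorbs into the single $\alpha$, but this needs to be written out---the triangle-inequality line you display really does give $2\alpha$ as written.
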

\inappendix{
The proof of Lemma \ref{lemma:BijectionTVD} appears in the appendix, 
Section \ref{sec:lemma:BijectionTVD}.
}
\myshow{
\begin{proof}
Let $x$ be a r.v. distributed as in $\nu$. The proof of this lemma
is going to be made by coupling $x$ and $z'$. In particular, we
show that there is a coupling of $x$ and $z'$ such that 
\begin{displaymath}
Pr[x\neq z']\leq \alpha.
\end{displaymath}
Then the lemma will follow by using Coupling Lemma \cite{coupling-lemma}.
Let $(\Omega'_1, \Omega'_2)$ be the isomorphic pair of the 
$\alpha$-isomorphism between $\Omega_1$ and $\Omega_2$.
Observe that $|\Omega'_i|\geq (1-\alpha)|\Omega_i|$, for $i=1,2$.
Also, it holds that 
\begin{displaymath}
Pr[z'=\sigma|z\in \Omega'_1]=\frac{1}{|\Omega'_2|}  \qquad \forall \sigma \in \Omega'_2.
\end{displaymath}
Note that when we restrict the input of the $\alpha$-function $H$
only to members of $\Omega'_1$, then $H$ is by definition a bijection 
between the sets $\Omega'_1$ and $\Omega'_2$.
The above equality then follows by using Remark \ref{rmrk:IsoSmpl} and 
noting that conditional on the fact that $z\in \Omega'_1$, $z$ is 
distributed uniformly over $\Omega'_1$. 
Also it is easy to get that
\begin{displaymath}
Pr[x=\sigma |x\in \Omega'_2]=\frac{1}{|\Omega'_2|} \qquad \forall \sigma \in \Omega'_2.
\end{displaymath}
Let
\begin{equation}\label{eq:ProbOfE}
p=\min\{Pr[x\in \Omega'_2], Pr[z\in \Omega'_1]\}\geq 1-\alpha.
\end{equation}
The above inequality follows from the assumption that $\Omega_1$ and $\Omega_2$
are $\alpha$-isomorphic.

It is clear that we can have a coupling between $x$, $z$ and $z'$
such that the event $E=$``$z\in \Omega'_1$ and $x\in \Omega'_2$''
holds with probability $p$ (see (\ref{eq:ProbOfE})). In this
coupling, if the event $E$ holds, then $x$ and $y$ are distributed
uniformly over $\Omega'_2$ and $\Omega'_1$, respectively. 
This means that we can make an extra arrangement such that when
$E$ holds to have $x=H(z)$, as well. Since $z'=H(z)$, it is direct
that when the event $E$ holds we, also, have that $x=z'$. We conclude
that it the above coupling it holds that $Pr[x=z']\geq Pr[E]$. Thus,
$$Pr[x\neq z']\leq 1- Pr[E]=1-p=\alpha.$$
The lemma follows.
\end{proof}
}

\subsection{Dealing with Problem \ref{prblm:StepProblem}}\label{sec:Problem1}
In this section we focus on {\tt STEP}. 
For clarity reasons we describe the algorithm by assuming that the graph 
$G$ in Problem \ref{prblm:StepProblem} is some general fixed graph. 
A basic part of the presentation involves relating the accuracy of 
{\tt STEP} to $\alpha$-isomorphism between certain sets of $k$-colourings
of $G$.

Let us introduce some notation. Let $\Omega$ denote the set of
$k$-colourings of $G$ and for $c,q\in[k]$ we let $\Omega(c,q)
\subseteq\Omega$ denote all the $k$-colourings of $G$ that assign
$v$ and $u$  the colours $c,q$, respectively.
We define formally a disagreement graph as follows:

\begin{definition}[Disagreement graph]\label{def:DiagreementGraph}
For $\sigma\in \Omega$ and some $q\in [k]\backslash\{\sigma_v\}$
we let the disagreement graph $Q_{\sigma_v,q}=(V', E')$ be the
maximal induced subgraph of $G$ such that
\begin{displaymath}
V'=\left \{x \in V \left | 
\begin{array}{l}
\exists \textrm{ path  } w_0,\ldots,w_t, \textrm{ in $G$ such that:  } \\
w_0=v, w_t=x, \sigma(w_j)\in \{\sigma_v, q\}
\end{array}
\right .
\right \}.
\end{displaymath}
\end{definition}

\noindent
It is important to remember that the disagreement graph is always connected, bipartite 
and maximal, i.e. for every $\sigma$ and $q$, $G$ has no vertex $y \notin Q_{\sigma_v,q}$
which has a neighbour in $V'$ and at the same time $\sigma_y\in \{\sigma_v,
q\}$. Furthermore, we define formally the $q$-switchings as a function $H:\Omega
\times[k]\to[k]^V$, i.e. $H(\sigma,q)$ returns the $q$-switching
of $\sigma$.  \\ \vspace{-.7cm} \\

\noindent
{\bf Function $H$}\\ \vspace{-.7cm} \\
\rule{\textwidth}{1pt}\\
{\bf Input:} $X\in \Omega$ and $q\in [k]\backslash\{X(v)\}$ \\
\hspace*{.6cm} Set $c=X(v)$.\\ 
\hspace*{.6cm} Set $V_1=\{w\in Q_{X(v),q}| X(w)=c\}$.  \\
\vspace{-.4cm}\\
\hspace*{.6cm} Set $V_2=\{w\in Q_{X(v),q})| X(w)=q\}$.  \\ 
\vspace{-.4cm}\\
\hspace*{.6cm} $\forall w\in V_1$  set $X(w)=q$.\\
\hspace*{.6cm} $\forall w\in V_2$  set $X(w)=c$.\\
{\bf Output:} $X$\\ \vspace{-.7cm} \\
\rule{\textwidth}{1pt} \\ 

\noindent
We have reached to the point where we  have all the definitions
we need to describe the algorithm {\tt STEP}.
\\ \vspace{-.7cm}\\

\noindent
{\bf STEP} \\ \vspace{-.7cm}\\
\rule{\textwidth}{1pt}
{\bf Input:} $X\in \Omega$, and  $k$ \\
\hspace*{.6cm} If $X$ is a  {\em good} colouring of $G$, then set  $Y=X$.\\
\hspace*{.6cm} If $X$ is a {\em bad}  colouring of $G$, then choose $q$
u.a.r. from $[k]\backslash\{X(v)\}$ and set $Y=H(X,q)$. \\ \vspace{-.5cm}\\
{\bf Output:} Y\\ \vspace{-.7cm}\\
\rule{\textwidth}{1pt} \\ \vspace{-.7cm}\\

\noindent
As far as the accuracy of {\tt STEP} is concerned we have to show that
if $X$ is a bad random $k$-colouring of $G$, then $H(X,q)$,
as calculated by the algorithm, is distributed sufficiently close to the desired di\-stri\-bu\-tion 
To this end $\alpha$-isomorphism comes into use. 

For any $c,q\in [k]$ we let $S(c,c)\subseteq \Omega(c,c)$
and $S(q,c)\subseteq \Omega(q,c)$ be defined as follows:
The set $S(c,c)$ contains every $\sigma\in \Omega(c,c)$ 
with the property that the disagreement graph $Q_{\sigma_v,q}$ 
does not contain the vertex $u$. Similarly, 
$S(q,c)$ contains every $\sigma\in \Omega(q,c)$ such that
the disagreement graph $Q_{\sigma_v,c}$ does not contain 
the vertex $u$.
We show that these two sets are isomorphic.
\begin{lemma}\label{lemma:isomorphism}
For any $c,q\in [k]$ with $c\neq q$, it holds that $S(c,c)$ and $S(q,c)$ are
isomorphic and the function $H(\cdot, q):S(c,c)\to S(q,c)$ is a
bijection.
\end{lemma}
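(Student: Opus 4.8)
The plan is to show that $H(\cdot,q)$ restricted to $S(c,c)$ is a bijection onto $S(q,c)$ by exhibiting its inverse, namely $H(\cdot,c)$ restricted to $S(q,c)$. Everything rests on a single structural observation: the switching operation leaves the vertex set of the relevant disagreement graph invariant. Once this is in hand, the fact that $H(\cdot,q)$ lands inside $S(q,c)$ and the identification of its inverse both follow quickly.

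First I would prove the key invariance claim. Fix $\sigma\in S(c,c)$, set $\sigma'=H(\sigma,q)$, and write $Q$ for the disagreement graph $Q_{c,q}$ determined by $\sigma$ and $q$. By construction $H$ recolours only the vertices of $Q$, swapping $c$ and $q$; hence every vertex of $Q$ is still coloured within $\{c,q\}$ under $\sigma'$, while every vertex outside $Q$ keeps its $\sigma$-colour. I claim that the disagreement graph $Q_{q,c}$ determined by $\sigma'$ and the colour $c$ (note $\sigma'_v=q$) has exactly the same vertex set as $Q$. Indeed, since $\sigma'$ colours all of the connected set $Q$ with $\{c,q\}$ and $v\in Q$, every vertex of $Q$ is reachable from $v$ along a $\{c,q\}$-coloured path, so $Q\subseteq Q_{q,c}(\sigma')$. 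Conversely, consider any $\{c,q\}$-coloured path under $\sigma'$ starting at $v$ and let $y$ be its first vertex lying outside $Q$; then $y$ is adjacent to some vertex of $Q$ and $\sigma'_y=\sigma_y$, and maximality of $Q$ in Definition \ref{def:DiagreementGraph} forces $\sigma_y\notin\{c,q\}$, a contradiction. Thus no such path can escape $Q$, giving $Q_{q,c}(\sigma')=Q$.

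Granting the claim, membership in the target set is immediate. Since $v\in Q$ it is recoloured to $q$, so $\sigma'_v=q$; since $\sigma\in S(c,c)$ means $u\notin Q$, the vertex $u$ keeps colour $c$, so $\sigma'_u=c$ and $\sigma'\in\Omega(q,c)$. Properness of $\sigma'$ is the standard fact already noted: within $Q$ the swap preserves the proper $2$-colouring, and any edge leaving $Q$ is safe because, by maximality, its outer endpoint avoids $\{c,q\}$. Finally the invariance claim gives $u\notin Q_{q,c}(\sigma')=Q$, which is exactly the condition for $\sigma'\in S(q,c)$. Hence $H(\cdot,q)$ maps $S(c,c)$ into $S(q,c)$, and by the symmetric argument $H(\cdot,c)$ maps $S(q,c)$ into $S(c,c)$.

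It then remains to check that the two maps are mutually inverse. Applying $H(\cdot,c)$ to $\sigma'$ swaps $q$ and $c$ back on the set $Q_{q,c}(\sigma')=Q$, restoring the original colour of every vertex of $Q$ while leaving everything outside $Q$ untouched; hence $H(H(\sigma,q),c)=\sigma$, and symmetrically $H(H(\tau,c),q)=\tau$ for $\tau\in S(q,c)$. Being mutual inverses, each map is a bijection, which proves the lemma. The one delicate point — and the step I would write out most carefully — is the invariance $Q_{c,q}(\sigma)=Q_{q,c}(H(\sigma,q))$, since both the ``inside stays inside'' and the ``outside stays outside'' directions hinge on the maximality built into Definition \ref{def:DiagreementGraph}; once that is nailed down, preservation of $\Omega(q,c)$-membership and the identification of the inverse are routine.
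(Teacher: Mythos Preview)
Your proof is correct and follows essentially the same approach as the paper: both establish the bijection by exhibiting $H(\cdot,c)$ as the inverse of $H(\cdot,q)$. Your version is in fact more careful than the paper's, since you make explicit and prove the key invariance $Q_{c,q}(\sigma)=Q_{q,c}(H(\sigma,q))$, whereas the paper simply asserts that $H(\sigma,q)\in S(q,c)$ ``follows by the definition of the sets'' and that injectivity holds ``using arguments similar to those in the previous paragraph.''
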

\myshow{
The proof of Lemma \ref{lemma:isomorphism} appears in Section 
\ref{sec:lemma:isomorphism}.
}
\inappendix{
The proof of Lemma \ref{lemma:isomorphism} appears in the Section
\ref{sec:lemma:isomorphism}.
}
Based on the previous consideration, we provide a general relation 
between $\alpha$-isomorphism and the accuracy of {\tt STEP}. 
We make the following assumption.
\begin{assumption}\label{assm:isomorphism}
For some $\alpha\in[0,1]$ it holds that $\Omega(c,c)$ is $\alpha$-isomorphic
to $\Omega(q,c)$ with $\alpha$-function $H(\cdot, q)$, for any $c,q\in [k]$
such that $c\neq q$.
\end{assumption}

\noindent
Clearly, $(S(c,c),S(q,c))$ is the isomorphic pair of the $\alpha$-isomorphism between
$\Omega(c,c)$ and $\Omega(q,c)$.
Assumption \ref{assm:isomorphism} imposes an upper bound for the number of 
{\em pathological}  colourings \footnote{see also discussion in Section 
\ref{sec:pathological}} of $G$. It implies that, for any $c,q\in [k]$ all but 
an $\alpha$ fraction of the colourings in
$\Omega(q,c)$ do not have disagreement graph $Q_{q,c}$ which 
includes both $v$ and $u$. The same should hold for 
$\Omega(c,c)$ for disagreement graph $Q_{c,q}$.

\begin{theorem}\label{theorem:STEPAccuracy}
Let $\nu$ be the uniform distribution over the {\em good} $k$-colourings
of $G$. Let, also, $\nu'$ be the di\-stri\-bu\-tion of the output of {\tt STEP}
when the input colouring is distributed uniformly over the $k$-colourings
of $G$. Under Assumption \ref{assm:isomorphism} it holds that 
\begin{displaymath}
||\nu-\nu'||\leq \alpha,
\end{displaymath}
where $\alpha$ is defined in Assumption \ref{assm:isomorphism}.
\end{theorem}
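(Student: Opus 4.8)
The plan is to write $\nu'$ as a mixture determined by whether the input colouring is good or bad, to observe that the good part reproduces $\nu$ exactly, and to control the bad part by a colour-pair-by-colour-pair application of Lemma \ref{lemma:BijectionTVD}. The only genuinely new ingredient beyond the stated results is an elementary colour-symmetry observation used to make the relevant mixtures line up.

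First I would record the symmetry fact: for any permutation $\pi$ of $[k]$ the map $\sigma\mapsto\pi\circ\sigma$ is a bijection of $\Omega$ that carries $\Omega(c,c)$ onto $\Omega(\pi(c),\pi(c))$ and each good class $\Omega(a,b)$, $a\neq b$, onto $\Omega(\pi(a),\pi(b))$. Hence all diagonal classes $\Omega(c,c)$ share a common cardinality and all off-diagonal classes $\Omega(a,b)$ with $a\neq b$ share a common cardinality. Two consequences follow. The uniform law on $\Omega$, conditioned on being bad, chooses the common colour $c$ uniformly in $[k]$ and then $\sigma$ uniformly in $\Omega(c,c)$; and $\nu$, the uniform law on the good colourings, is the uniform mixture over the $k(k-1)$ ordered pairs $(a,b)$, $a\neq b$, of the uniform laws $\nu_{a,b}$ on $\Omega(a,b)$.

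Next I would split $\nu'$. Since the input is uniform on $\Omega$, conditioned on being good it is uniform on the good colourings, and {\tt STEP} returns it unchanged; thus the good inputs contribute exactly $\nu$. Writing $p$ for the probability that the input is bad, this gives $\nu'=(1-p)\nu+p\,\nu'_{\mathrm{bad}}$, where $\nu'_{\mathrm{bad}}$ is the output law conditioned on a bad input, so that $\nu'-\nu=p(\nu'_{\mathrm{bad}}-\nu)$ and $\|\nu'-\nu\|=p\,\|\nu'_{\mathrm{bad}}-\nu\|\le\|\nu'_{\mathrm{bad}}-\nu\|$. It therefore suffices to show $\|\nu'_{\mathrm{bad}}-\nu\|\le\alpha$. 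On a bad input {\tt STEP} effectively picks an ordered pair $(c,q)$ with $c\neq q$ uniformly (the common colour $c$ uniformly, then $q$ uniformly in $[k]\setminus\{c\}$), takes $\sigma$ uniform in $\Omega(c,c)$, and outputs $H(\sigma,q)$; letting $\nu'_{q,c}$ denote the law of this output for fixed $(c,q)$, the symmetry step yields $\nu'_{\mathrm{bad}}=\frac{1}{k(k-1)}\sum_{c\neq q}\nu'_{q,c}$. Reindexing this sum by the output pair $(a,b)=(q,c)$ matches it term-by-term with the mixture $\nu=\frac{1}{k(k-1)}\sum_{a\neq b}\nu_{a,b}$ from the previous paragraph.

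Finally I would bound each matched pair and assemble. By Assumption \ref{assm:isomorphism}, $\Omega(c,c)$ is $\alpha$-isomorphic to $\Omega(q,c)$ with $\alpha$-function $H(\cdot,q)$ (its isomorphic pair being $(S(c,c),S(q,c))$ supplied by Lemma \ref{lemma:isomorphism}), so Lemma \ref{lemma:BijectionTVD} applied to $z$ uniform on $\Omega(c,c)$ and $z'=H(z,q)$ gives $\|\nu'_{q,c}-\nu_{q,c}\|\le\alpha$. Convexity of total variation distance across the $k(k-1)$ matched terms then yields $\|\nu'_{\mathrm{bad}}-\nu\|\le\frac{1}{k(k-1)}\sum_{c\neq q}\|\nu'_{q,c}-\nu_{q,c}\|\le\alpha$, which is the required bound. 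The substantive content is entirely carried by Lemma \ref{lemma:BijectionTVD} and Assumption \ref{assm:isomorphism}; I expect the main obstacle to be purely organizational, namely getting the colour-symmetry normalization right so that the bad-input mixture and the decomposition of $\nu$ have identical weights and their components can be paired off — without this alignment the per-pair bounds do not cleanly sum to $\alpha$.
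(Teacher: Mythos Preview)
Your argument is correct. It differs from the paper's proof in presentation rather than in substance, but the route is genuinely distinct enough to be worth noting.

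The paper proves the bound by a direct coupling: it couples $Z\sim\nu$ with the algorithm's output $Y$ through the input $X$, first matching $Z(u)=X(u)$ and $Z(v)=Y(v)$ (using the same symmetry observations you record as (\ref{eq:equationA})--(\ref{eq:equationC}) in the paper's notation), and then, on the bad event $X(v)=X(u)=c$, arguing that with probability at least $1-\alpha$ one lands in the isomorphic pair $(S(c,c),S(q,c))$ so that $Y$ and $Z$ can be made to agree. In effect the paper re-runs the coupling that underlies Lemma~\ref{lemma:BijectionTVD} inside this proof. You instead keep Lemma~\ref{lemma:BijectionTVD} as a black box: split $\nu'=(1-p)\nu+p\,\nu'_{\mathrm{bad}}$, use colour symmetry to write both $\nu'_{\mathrm{bad}}$ and $\nu$ as the \emph{same} uniform mixture over ordered pairs $(q,c)$, apply Lemma~\ref{lemma:BijectionTVD} term-by-term, and finish with convexity of total variation. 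Your approach is a bit more modular and makes the role of the symmetry explicit; the paper's coupling is more hands-on but avoids the mixture bookkeeping. Both hinge on exactly the same two facts (colour symmetry and Assumption~\ref{assm:isomorphism}), so neither buys a quantitatively better bound.
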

The proof of Theorem \ref{theorem:STEPAccuracy} appears in the appendix,
Section \ref{sec:theorem:STEPAccuracy}. The impact of Assumption 
\ref{assm:isomorphism} to the accuracy of {\tt STEP} is apparent.

The value of $\alpha$ in Assumption \ref{assm:isomorphism} depends on
$G$, $k$ and the function $H$.
The natural way of considering that $\Omega(c,c)$ is $\alpha$-isomorphic
to $\Omega(c,q)$ is mainly as a consequence of  the $\alpha$-function 
$H(\cdot, q)$. That is, the two sets have this property because we have
devised a mapping, the $H(\cdot, q)$, which happens to be an 
$\alpha$-function between the two sets. 
Consequently, someone could device a ``better'' function, i.e. an 
$\alpha'$-function for  $\Omega(c,c)$ and $\Omega(c,q)$ such 
that either $\alpha'<\alpha$, or $\alpha'=\alpha$ but allowing 
smaller $k$, or both. 

Since the algorithm {\tt STEP} implements the $\alpha$-function,
the performance of the $\alpha$-function reflects the  performance 
of the algorithm itself. Clearly, 
the $\alpha$-function should be computable in polynomial time. 

\begin{lemma}\label{lemma:StepAc}
For a graph $G=(V,E)$ and some integer $k$, the time  complexity 
of computing the function 
$H(\cdot, q)$ is $O(|E|)$.
\end{lemma}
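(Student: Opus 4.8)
The plan is to show that computing the $q$-switching $H(\cdot,q)$ amounts to identifying the disagreement graph $Q_{X(v),q}$ and then relabelling its two colour classes, and that both operations can be carried out in time linear in the number of edges. First I would recall from Definition \ref{def:DiagreementGraph} that $Q_{X(v),q}$ is the maximal connected subgraph containing $v$ whose vertices all receive colours in $\{X(v),q\}$. The natural algorithmic reading of this definition is a graph search: starting from $v$, explore along edges, but only traverse into a vertex $w$ if $X(w)\in\{X(v),q\}$. This is exactly a breadth-first (or depth-first) search restricted to the subgraph induced by the two relevant colour classes.

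The key steps, in order, are as follows. Set $c=X(v)$. Run a BFS/DFS from $v$, where from the current vertex we scan its incident edges and enqueue any not-yet-visited neighbour $w$ with $X(w)\in\{c,q\}$; vertices with any other colour are never entered, so the search automatically stops at the boundary and the visited set is precisely the vertex set $V'$ of $Q_{c,q}$. By the connectivity and maximality noted after Definition \ref{def:DiagreementGraph}, this visited set is exactly $V'$, so the search correctly computes $V_1=\{w\in V': X(w)=c\}$ and $V_2=\{w\in V': X(w)=q\}$ as a by-product of recording each visited vertex's current colour. Finally, one pass over $V_1\cup V_2$ performs the relabelling ($X(w)\leftarrow q$ on $V_1$ and $X(w)\leftarrow c$ on $V_2$), producing the output of Function $H$.

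For the complexity bound I would argue that a standard graph search visits each vertex once and inspects each incident edge a constant number of times, giving total work $O(|V'|+|E'|)$ where $E'$ is the set of edges examined; since every examined edge is an edge of $G$, this is $O(|E|)$ (vertices reached are endpoints of examined edges, so the isolated-vertex caveat does not arise here because $v$ itself has degree at least one in the relevant case, and in any event $|V'|\le |E'|+1$). The relabelling pass costs $O(|V'|)=O(|E|)$ as well, so the overall running time is $O(|E|)$, matching the claim.

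The main obstacle, such as it is, is purely a matter of carefully justifying that the restricted search returns the \emph{maximal} induced subgraph of Definition \ref{def:DiagreementGraph} rather than some smaller connected piece: one must confirm that the colour-restricted reachability relation used by the search coincides with the path condition defining $V'$, and that maximality is ensured automatically because the search never prunes a legitimate $\{c,q\}$-coloured neighbour. I expect this verification to be short, since the definition of $V'$ is stated precisely in terms of the existence of a $\{c,q\}$-coloured path from $v$, which is exactly what a colour-restricted search enumerates; the remaining arithmetic for the $O(|E|)$ bound is routine.
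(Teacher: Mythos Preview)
Your proposal is correct and follows essentially the same approach as the paper: both argue that computing $H(\cdot,q)$ reduces to a graph search from $v$ restricted to vertices coloured in $\{c,q\}$, and that such a search examines each edge at most a constant number of times, giving $O(|E|)$. The paper phrases this as an explicit edge-by-edge exploration rather than invoking BFS/DFS by name, but the underlying argument is the same.
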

\inappendix{
The proof of Lemma \ref{lemma:StepAc} appears in Section \ref{sec:lemma:StepAc}.
}
\myshow{
\begin{proof}
Note that the time complexity of computing the value of $H(\sigma, q)$
is dominated by the time we need to reveal the disagreement graph 
$Q_{\sigma_v,q}$, for some $q\in [k]$. We show that we need $O(|E|)$ 
steps to reveal the disagreement graph $Q_{\sigma_v,q}$.

We can reveal the graph $Q_{\sigma_v,q}$ in steps $j=0,\ldots, |E|$, 
where $E$ is the set of edges of $G$.
At step $0$ the disagreement graph $Q_{\sigma_v,q}(0)$ contains only
the vertex $v$. Given the graph $Q_{\sigma_v,q}(j)$ we construct
$Q_{\sigma_v,q}(j+1)$ as follows: Pick some edge which is incident
to a vertex in $Q_{\sigma_v,q}(j)$. If the other end of this edge 
is incident to a vertex outside $Q_{\sigma_v,q}(j)$ that is coloured
either $\sigma_v$ or $q$ then we get $Q_{\sigma_v,q}(j+1)$ by inserting
this edge and the vertex into $Q_{\sigma_v,q}(j)$. Otherwise 
$Q_{\sigma_v,q}(j+1)$ is the same as $Q_{\sigma_v,q}(j)$. We never
pick the same edge twice. 

It is direct to show that in the above procedure it holds that
$Q_{\sigma_v,q}=Q_{\sigma_v,q}(|E|)$. Thus. the time complexity
of a $q$-switching of a given colouring of $G$ is $O(|E|)$.
\end{proof}

\noindent 
Also the following corollary is straightforward.
}

\randbits{
\begin{corollary}\label{cor:NoBitsStep}
The algorithm {\tt STEP} requires at most $\log_2 k$ random bits.
\end{corollary}
}

\section{From the algorithm Step to Random Colouring.}\label{sec:RCAlgorithm}
Here, we give a general presentation of the approximate random
colouring algorithm, which builds upon {\tt STEP}. We also study
properties of the algorithm like time complexity and accuracy.
In particular, we study the accuracy of the algorithm under 
general assumptions about $\alpha$-isomorphism, as we did in 
Section \ref{sec:Problem1} for {\tt STEP}. As in the previous cases,
the input graph $G$ is considered to be fixed.

First, we extend the notation of the previous section to fit here.
For input graph $G$ the algorithm considers the sequence of subgraphs
$G_0, G_1, \ldots, G_r$. $G_{i}$ is derived by deleting from $G_{i+1}$
an edge which we call $\{v_i, u_i\}$.
Let $\Omega_i$ be the set of  $k$-colourings of $G_i$. For any 
$c,q \in [k]$ we let $\Omega_i(c,q)$ be the set of colourings of $G_i$ 
which assign the colours $c$ and $q$ to the vertices $v_i$ and $u_i$, 
respectively.

We proceed by describing the full algorithm in pseudocode. The 
variable $Y_i$, below, denotes the $k$-colouring that the algorithm
assigns to the graph $G_i$.
\\ \vspace{-.7cm}\\

\noindent
{\bf Random Colouring Algorithm}\\ \vspace{-.7cm} \\
\rule{\textwidth}{1pt} \\ 
{\bf Input:} $G$, $k$.\\ 
Compute $G_0,G_1\ldots, G_r$.\\
Compute $Y_0$.  $\qquad \qquad\qquad /*$ {\em Get a random 
$k$-colouring of $G_0$}.$*/$\\
For $0\leq i \leq r-1$ do \\
\hspace*{.6cm} Set $Y_{i+1}$ the output of {\tt STEP} with input
$Y_i$.\\
{\bf Output:} $Y_{r}$.\\ \vspace{-.7cm} \\
\rule{\textwidth}{1pt} \\ \vspace{-.7cm} \\

\noindent
In the second line the algorithm computes the sequence of subgraphs
and in the third it colours randomly $G_0$. 
A detailed description of how can someone construct the sequence
of subgraphs and colour randomly $G_0$ is a graph specific problem.
For the case where the input graph is an instance of $G(n,d/n)$,
we give a detailed treatment of in the proof of 
Theorem \ref{thrm:GnpAccuracy} and Theorem \ref{thrm:GnpTimeCmplxt}
\footnote{See Section \ref{sec:AppGnp}.},
However, using Lemma \ref{lemma:StepAc} it is direct to get the
following theorem.

\begin{theorem}\label{thrm:TimeCmplxt}
Under the condition that $G_0$ can be $k$-coloured randomly in polynomial time, 
the random colouring algorithm has polynomial time complexity.
\end{theorem}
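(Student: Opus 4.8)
The plan is to bound the running time of the three phases of the algorithm separately: (i) the construction of the subgraph sequence $G_0,\ldots,G_r$, (ii) the computation of the initial random colouring $Y_0$, and (iii) the main loop that applies {\tt STEP} a total of $r$ times. Phase (ii) is polynomial by the hypothesis of the theorem, so the real work is to control phases (i) and (iii), and for this everything hinges on a single bound on $r$.

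The key observation is that $r$ is polynomially bounded. Since each $G_i$ is obtained from $G_{i+1}$ by deleting exactly one edge, and every $G_i$ is a subgraph of the same fixed graph $G=(V,E)$, we have $r = |E(G_r)| - |E(G_0)| \leq |E|$, which is $O(n^2)$ and hence polynomial in the size of the input. This simultaneously bounds the number of iterations of the loop and the number of distinct subgraphs that phase (i) must produce.

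For phase (iii) I would analyse a single iteration. Given $Y_i$, {\tt STEP} first tests whether $Y_i$ is good or bad at the deleted edge $\{v_i,u_i\}$, which costs $O(1)$; if $Y_i$ is bad it draws $q$ uniformly from $[k]\backslash\{Y_i(v_i)\}$ and returns $H(Y_i,q)$. By Lemma \ref{lemma:StepAc} the evaluation of $H(\cdot,q)$ costs $O(|E(G_i)|)$, and since $G_i$ is a subgraph of $G$ this is $O(|E|)$. Thus each iteration runs in $O(|E|)$ time, and the whole loop in $O(r\cdot |E|) = O(|E|^2) = O(n^4)$ time, which is polynomial. Combining this with the polynomial cost of phase (ii) and the bound $r \leq |E|$ on the number of subgraphs in phase (i) yields the claimed polynomial total running time, exactly as the text anticipates when it says the result follows directly from Lemma \ref{lemma:StepAc}.

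The only genuinely delicate point is phase (i): at each stage one must locate an edge lying on a cycle of length at least $(\log n)/(9\log d)$, and in a completely general graph finding such an edge need not be trivial. As the surrounding text makes clear, however, this is a graph-specific matter, treated in detail for $G(n,d/n)$ in the proof of Theorem \ref{thrm:GnpTimeCmplxt}. For the present general statement I would simply note that only $O(|E|)$ such edges ever have to be identified, so as long as each identification can be carried out in polynomial time the entire sequence is built in polynomial time. Hence the main obstacle is not the loop — which is immediate from Lemma \ref{lemma:StepAc} — but verifying that the edge-deletion rule can be executed efficiently, a point deferred to the graph-specific analysis.
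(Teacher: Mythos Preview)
Your proposal is correct and matches the paper's approach: the paper gives no formal proof at all, merely stating that the theorem ``is direct'' from Lemma~\ref{lemma:StepAc}, and your argument is precisely the natural unpacking of that remark---bound $r\le |E|$, invoke Lemma~\ref{lemma:StepAc} for $O(|E|)$ per call to {\tt STEP}, and defer the sequence-construction cost to the graph-specific analysis in Theorem~\ref{thrm:GnpTimeCmplxt}. Your explicit acknowledgement that phase~(i) is the only subtle point, and that it is handled elsewhere, accurately reflects the paper's own division of labour.
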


\noindent
The next issue we have to investigate is the accuracy of the
algorithm. As in Section \ref{sec:Problem1} we relate the accuracy
of the random $k$-colouring algorithm with $\alpha$-isomorphism by
using the following assumption.

\begin{assumption}\label{assm:SeqIsomorphism}
For $i=0,\ldots,r-1$ and some $\alpha\in [0,1]$ it hold that $\Omega_i(c,c)$
is $\alpha$-isomorphic to $\Omega_i(q,c)$  and $H(\cdot,q)$ is
a $\alpha$-function, for any $c,q\in [k]$ such that $c\neq q$.
\end{assumption}

\noindent
The $\alpha$-function $H$ is the same as the one defined in
Section \ref{sec:Problem1}. Let $(S_i(c,c),S_i(q,c))$ be the 
isomorphic pair of the $\alpha$-isomorphism between $\Omega_i(c,c)$
and $\Omega_i(q,c)$. The sets $S_i(c,c)$ and $S_i(q,c)$ are defined in 
the same manner as $S(c,c)$ and $S(q,c)$, in Section \ref{sec:Problem1}.
From Assumption \ref{assm:SeqIsomorphism}, we get the following theorem.
\begin{theorem}\label{thrm:Accuracy}
Let $\mu$ be the uniform distribution over the $k$-colourings of
the input graph $G$. Let, also, $\mu'$ be the distribution of the
colourings that is returned by the random colouring algorithm.
Under Assumption \ref{assm:SeqIsomorphism}, it holds that
\begin{displaymath}
||\mu-\hat{\mu}||\leq r\cdot \alpha,
\end{displaymath}
where $r$ is the maximum index in the sequence $G_0,G_1,\ldots, G_r$.
\end{theorem}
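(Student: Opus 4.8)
The plan is to prove this by induction on the index $i$, tracking the total variation distance between the actual law of the algorithm's intermediate colouring $Y_i$ and the uniform distribution over the proper $k$-colourings of $G_i$. Write $\pi_i$ for the distribution of $Y_i$ and $\mu_i$ for the uniform distribution over $\Omega_i$; the goal is $||\pi_r - \mu_r|| \leq r\alpha$, since $\mu_r = \mu$ and $\pi_r = \mu'$.

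Two observations drive the argument. First, a colouring of $G_i$ is \emph{good} (in the sense of the deleted edge $\{v_i, u_i\}$) precisely when it assigns $v_i$ and $u_i$ distinct colours, which is exactly the condition for it to be a proper colouring of $G_{i+1}$, since $G_{i+1}$ is obtained from $G_i$ by adding back the edge $\{v_i,u_i\}$. Hence the uniform distribution over the good colourings of $G_i$ coincides with $\mu_{i+1}$. Second, the map performed in a single iteration --- {\tt STEP} with the fixed pair $(v_i, u_i)$, using its own fresh randomness for the choice of $q$ --- is a (randomized) transformation $T_i$ that acts on the input colouring regardless of that input's distribution; by construction $\pi_{i+1} = T_i \pi_i$.

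First I would establish the base case $\pi_0 = \mu_0$, which is immediate since the algorithm draws $Y_0$ uniformly from $\Omega_0$, giving $||\pi_0 - \mu_0|| = 0$. For the inductive step I would combine three inequalities via the triangle inequality:
\begin{displaymath}
||\pi_{i+1} - \mu_{i+1}|| \leq ||T_i \pi_i - T_i \mu_i|| + ||T_i \mu_i - \mu_{i+1}||.
\end{displaymath}
The second term is bounded by $\alpha$ by applying Theorem \ref{theorem:STEPAccuracy} to $G_i$: feeding the uniform distribution $\mu_i$ into {\tt STEP} yields an output within $\alpha$ of the uniform distribution over the good colourings of $G_i$, which is $\mu_{i+1}$ by the first observation, and Assumption \ref{assm:SeqIsomorphism} guarantees the $\alpha$-isomorphism hypothesis needed at every index $i$. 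The first term is handled by the data-processing inequality for total variation distance: applying the same transformation $T_i$ to the two distributions $\pi_i$ and $\mu_i$ cannot increase their distance, so $||T_i \pi_i - T_i \mu_i|| \leq ||\pi_i - \mu_i||$. Together these give the recursion $||\pi_{i+1} - \mu_{i+1}|| \leq ||\pi_i - \mu_i|| + \alpha$, and iterating from the base case yields $||\pi_r - \mu_r|| \leq r\alpha$.

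The step requiring the most care is the data-processing inequality, because it is crucial that \emph{exactly the same} randomized transformation $T_i$ act on both $\pi_i$ and $\mu_i$ --- the error accumulated at stage $i$ must be transported through {\tt STEP} without amplification, and only a fresh, input-independent randomization (the uniform choice of $q$) justifies this. A secondary point worth verifying is that the output of {\tt STEP} at stage $i$ need not itself be a proper colouring of $G_{i+1}$ (it may remain a bad colouring in a pathological case), so $\pi_{i+1}$ is a distribution on all of $[k]^V$ rather than on $\Omega_{i+1}$; this causes no difficulty, since total variation distance is defined between arbitrary distributions on $[k]^V$ throughout, and the $\alpha$ slack furnished by Theorem \ref{theorem:STEPAccuracy} already accounts for this leakage.
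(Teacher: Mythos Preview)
Your proof is correct. Both your argument and the paper's establish the same per-step error accumulation, but the presentations differ. The paper builds a single global coupling of the full trajectories $(X_0,\ldots,X_r)$ and $(Y_0,\ldots,Y_r)$, where $X_i$ is uniform on $\Omega_i$, and defines events $E_i$ (essentially ``step $i$ succeeds'') with $\Pr{E_i}\ge 1-\alpha$; it then argues that on $\bigcap_i E_i$ one can arrange $X_i=Y_i$ for all $i$, so $\Pr{X_r\neq Y_r}\le r\alpha$. Your route is more modular: you invoke Theorem~\ref{theorem:STEPAccuracy} as a black box to bound $\|T_i\mu_i-\mu_{i+1}\|$ and then use the data-processing inequality to propagate the accumulated error through $T_i$ without amplification. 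This avoids having to construct the joint coupling explicitly and makes the role of Theorem~\ref{theorem:STEPAccuracy} transparent; the paper's version, on the other hand, keeps everything inside a single coupling picture, which is closer in spirit to how the one-step bound (Theorem~\ref{theorem:STEPAccuracy}) was itself proved. Your remark that $\pi_i$ may leak mass outside $\Omega_i$ is a point the paper glosses over, and your handling of it (extend $T_i$ to all of $[k]^V$; the data-processing inequality still applies) is the right fix.
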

\inappendix{
The proof of Theorem \ref{thrm:Accuracy} appears in the appendix, 
Section \ref{sec:thrm:Accuracy}.
}
\myshow{
\begin{proof}
Let $X_i$ be a random variable which is distributed uniformly over
$\Omega_i$, $i=0,\ldots, r$. It suffices to provide a coupling of 
$X_r$ and $Y_r$, such that $$Pr[X_r\neq Y_r]\leq r\cdot \alpha.$$

\noindent
%
%
Working as in the proof of Theorem \ref{thrm:Accuracy} we get
the following: There is a coupling of $X_{i}, X_{i+1}$ such that
for the event 
$E_{i}=``$ $X_{i}$ is good or there are $c,q\in [k]$ such that
$X_{i}\in S_i(c,c)$ and $X_{i+1}\in S(q,c)$ it holds that 
\begin{displaymath}
Pr[E_i]\geq 1-\alpha.
\end{displaymath}

\noindent
Now consider the random variables $Z=(X_{0}, X_{1}, \ldots, X_{r-1})$
and $Z'=(X_{1}, X_{2}, \ldots X_{r})$ and $W=(Y_1,\dots,Y_r)$.
Consider, also, the event $E=\cap_{i=0}^{r-1}E_i$, where $E_{i}$
is the event defined above.  All the above discussion suggests 
two facts: 
First, there is a coupling between $Z$, $Z'$ and $W$ such that 
\begin{displaymath}
Pr[E]\geq 1-r\alpha.
\end{displaymath}
Second, if in this coupling the event $E$ occurs we can have $Z'=W$, 
i.e. $X_{i}=Y_{i}$, for $i=1,\ldots,r$. 
To see thus, consider the following:
If the event $E_i$ occurs we can have either $X_i=X_{i+1}$ or
$X_{i+1}=H(X_i,q)$ for appropriate $q$. When $E$ occurs, we 
have this property for all $i=0,\ldots,r-1$. A direct inductive
argument implies $Z'=W$.
The theorem follows by noting that  $$Pr[Y_r\neq X_r]\leq1-Pr[E].$$
\end{proof}
}

\inappendix{
\section{Proof sketch for Theorem \ref{thrm:GnpAccuracy}}
Due to space limitations, in the remaining pages we give a proof 
sketch our main result,   Theorem \ref{thrm:GnpAccuracy}.
That is, we consider the random colouring algorithm with input 
an instance of $G(n,d/n)$ and we let $k$ be the number of colours. 
From a technical perspective there are two issues to deal with. The
first is how do we construct the sequence of subgraphs. The second is 
to replace the rather general Assumption \ref{assm:SeqIsomorphism} 
about $\alpha$-isomorphism between colour sets with specific results 
for the graphs $G_0,G_1, \ldots, G_r$.

When the algorithm constructs the sequence of subgraphs it should 
take into consideration the previous remark that it is preferable
in the graph $G_i$ the vertices $v_i$ and $u_i$ to be at a sufficiently
large distance.
To see why we need this property we provide the following corollary, 
which  follows directly from previous definitions.
\begin{corollary}\label{sketch:cor:CombCond}
Consider some fixed graph $G_i$ and $c,q\in [k]$. The set $\Omega_i(c,c)$
is $\alpha$-isomorphic to $\Omega_i(q,c)$ with $\alpha$-function
$H(\cdot, q)$ if and only if the following holds:
Choose u.a.r. a colouring from $\Omega_i(c,c)$ and let $Q_{c,q}$
be the disagreement graph specified by this colouring and $q$.
It should hold that
\begin{equation}\label{sketch:eq:CombCondA}
a\geq Pr[v_i,u_i\in Q_{c,q}|G_i].
\end{equation}
Additionally, the analogous condition should hold for a random colouring
of $\Omega_i(q,c)$.
\end{corollary}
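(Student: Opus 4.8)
The plan is to unwind the definition of $\alpha$-isomorphism-with-$\alpha$-function directly and show that its two size requirements are just restatements of the two probability bounds. The natural candidate for the isomorphic pair is $(S_i(c,c), S_i(q,c))$, which Lemma \ref{lemma:isomorphism} already certifies to be isomorphic through the bijection $H(\cdot, q)$. First I would record the elementary but crucial fact that, by Definition \ref{def:DiagreementGraph}, the disagreement graph $Q_{c,q}$ always contains its root $v_i$; hence for a colouring $\sigma \in \Omega_i(c,c)$ the event ``$u_i \in Q_{c,q}$'' coincides with ``$v_i, u_i \in Q_{c,q}$''. Consequently, for a uniformly random $\sigma \in \Omega_i(c,c)$,
\begin{displaymath}
Pr[v_i, u_i \in Q_{c,q} \mid G_i] = 1 - \frac{|S_i(c,c)|}{|\Omega_i(c,c)|},
\end{displaymath}
and symmetrically for $\Omega_i(q,c)$ with $Q_{q,c}$. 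Thus inequality (\ref{sketch:eq:CombCondA}) is equivalent to $|S_i(c,c)| \ge (1-\alpha)|\Omega_i(c,c)|$, and the analogous condition to $|S_i(q,c)| \ge (1-\alpha)|\Omega_i(q,c)|$.

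The structural observation driving both directions is that, for $\sigma \in \Omega_i(c,c)$, the switching $H(\sigma, q)$ lands in $\Omega_i(q,c)$ \emph{if and only if} $\sigma \in S_i(c,c)$. Indeed, $H(\cdot, q)$ interchanges the colours $c$ and $q$ on $Q_{c,q}$, so $v_i$ (always in $Q_{c,q}$ and coloured $c$) is recoloured to $q$, whereas $u_i$ is recoloured to $q$ precisely when $u_i \in Q_{c,q}$ and otherwise retains colour $c$. Hence $H(\sigma, q) \in \Omega_i(q,c)$ exactly when $u_i \notin Q_{c,q}$, which is the defining property of $S_i(c,c)$. For the ``if'' direction I would then simply take the isomorphic pair to be $(S_i(c,c), S_i(q,c))$: Lemma \ref{lemma:isomorphism} supplies the bijection $H(\cdot, q)$ between them, and the two probability bounds give exactly the size requirements demanded of an isomorphic pair, so $H(\cdot, q)$ is an $\alpha$-function witnessing the $\alpha$-isomorphism.

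The hard part is the ``only if'' direction, where I must rule out that some \emph{other} isomorphic pair certifies the $\alpha$-isomorphism while the probability bounds fail; here the constraint that the $\alpha$-function be $H(\cdot, q)$ does the work. Let $(\Omega_1', \Omega_2')$ be the isomorphic pair. By Definition \ref{def:AFunction}, $H(\cdot, q)$ restricted to $\Omega_1'$ is a bijection onto $\Omega_2' \subseteq \Omega_i(q,c)$, so $H(\sigma, q) \in \Omega_i(q,c)$ for every $\sigma \in \Omega_1'$, and the structural observation forces $\Omega_1' \subseteq S_i(c,c)$. Since $H(\cdot, q)$ maps $S_i(c,c)$ onto $S_i(q,c)$ by Lemma \ref{lemma:isomorphism}, its image satisfies $\Omega_2' \subseteq S_i(q,c)$. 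Combining with $|\Omega_i'| \ge (1-\alpha)|\Omega_i|$ yields $|S_i(c,c)| \ge (1-\alpha)|\Omega_i(c,c)|$ and $|S_i(q,c)| \ge (1-\alpha)|\Omega_i(q,c)|$, i.e. the two probability bounds. The only point requiring real care is confirming that $\Omega_2'$ is genuinely contained in $S_i(q,c)$ rather than merely in $\Omega_i(q,c)$, which is precisely where the onto-ness of the bijection from Lemma \ref{lemma:isomorphism} is needed; everything else is the bookkeeping translation between set sizes and the conditional probabilities.
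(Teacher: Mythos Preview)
Your argument is correct and is exactly the unwinding of definitions that the paper has in mind; the paper gives no proof beyond the remark that the corollary ``follows directly from previous definitions,'' and your proposal supplies precisely those details, including the key observation that $H(\sigma,q)\in\Omega_i(q,c)$ forces $\sigma\in S_i(c,c)$, which pins down the isomorphic pair once $H(\cdot,q)$ is required to be the $\alpha$-function.
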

Since we are interested in the minimum possible value for $\alpha$,
we try to minimize the probability term in (\ref{sketch:eq:CombCondA}).
Clearly,  the greater the distance between $v_i$ and $u_i$ the
less probable is for $ Q_{c,q}$ to include them both and, 
consequently, 
the more accurate the random colouring algorithm gets. 
To this end we use the following lemma to construct the sequence of subgraphs. 
\begin{lemma}\label{sketch:lemma:seqsubprop}
With probability at least $1-n^{-2/3}$ we can have the sequence
$G_0,\ldots G_r$ satisfying the following two properties.\vspace{-.1cm}
\begin{enumerate}
\item $G_0$ consists only of isolated vertices and simple cycles, 
each of maximum length less than $\frac{\log n}{9\log d}$. \vspace{-.1cm}
\item In $G_i$, the graph distance between $v_i$ and $u_i$ is at least $\frac{\log n}{9\log d}$.\vspace{-.1cm}
\end{enumerate}
Additionally it holds that 
\begin{displaymath}
Pr[r\geq (1+n^{-1/3})dn/2]\leq \exp(-n^{1/4}).
\end{displaymath}
\end{lemma}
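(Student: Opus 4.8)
The plan is to derive all three assertions from a single structural event about $G(n,d/n)$ together with one concentration inequality. Set $L=\frac{\log n}{9\log d}$. The engine of the proof is the event $\mathcal{A}$ that $G(n,d/n)$ contains no subgraph on at most $3L$ vertices with at least as many edges as vertices --- equivalently, every subgraph spanning at most $3L$ vertices is a forest or is unicyclic. First I would bound $\Pr{\overline{\mathcal{A}}}$ by a first moment. The expected number of subgraphs on $s$ labelled vertices carrying $s+1$ edges is $\binom{n}{s}\binom{\binom{s}{2}}{s+1}(d/n)^{s+1}$, and using $\binom{n}{s}\le n^{s}/s!$ together with $\binom{\binom{s}{2}}{s+1}\le (s^{2}/2)^{s+1}/(s+1)!$, the two factorials cancel the vertex choices and the term is at most $\frac{s^{2}d}{n}\left(\frac{e^{2}d}{2}\right)^{s}$. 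Since $s\le 3L$ gives $\left(\frac{e^{2}d}{2}\right)^{s}\le d^{3L}\cdot e^{O(L)}=n^{1/3+O(1/\log d)}$, each term is $n^{-2/3+O(1/\log d)+o(1)}$, and summing the $O(\log n)$ values of $s$ keeps the total at $n^{-2/3}$ once $d$ is large. The choice of the constant $9$ in $L$ is precisely what forces $d^{3L}=n^{1/3}$ here.

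Given $\mathcal{A}$, property (2) is essentially deterministic. I would read the deletion rule in its sharp form: among the edges lying on a cycle of length at least $L$, the construction always removes one whose \emph{shortest} enclosing cycle already has length at least $L$. For such an edge $\{v_i,u_i\}$, any $v_i$--$u_i$ path in $G_{i+1}$ avoiding it, closed up with the edge, is a cycle through the edge and so has length at least $L$; hence after deletion $\dist_{G_i}(v_i,u_i)\ge L-1$, and tightening the threshold by a constant delivers the stated $\dist_{G_i}(v_i,u_i)\ge L$. Here $\mathcal{A}$ is used only to guarantee progress: whenever a cycle of length $\ge L$ remains, $\mathcal{A}$ (which spaces out the short cycles) forces some edge of it to lie on no short cycle, so a legitimate edge to delete always exists and the process terminates with a $G_0$ free of long cycles.

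For property (1), termination gives that $G_0$ has no cycle of length $\ge L$, and $\mathcal{A}$ refines this: two cycles of length $<L$ lying within distance $<L$ of one another would, together with a joining path, span fewer than $3L$ vertices while containing two independent cycles, contradicting $\mathcal{A}$. Thus the short cycles surviving in $G_0$ are induced, vertex-disjoint and pairwise at distance $\ge L$, so each cyclic component of $G_0$ is a single simple cycle of length $<L$. I expect the genuinely delicate point --- the main obstacle --- to be showing that no further debris survives, i.e. that the exact description ``isolated vertices and simple cycles'' holds rather than merely ``forests and simple cycles''; since the rule only ever deletes cyclic edges, this requires the random-graph fact that, away from the finitely many short cycles, every edge incident to the complex part of $G(n,d/n)$ lies on a long cycle and is eventually removed, and $\mathcal{A}$ must be invoked to exclude the local configurations that could leave a residual vertex of degree $\ge 3$ or a path attached to a short cycle.

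Finally, the tail bound on $r$ is immediate concentration. As each step removes one edge, $r$ never exceeds the edge count $m\sim\Bin\!\left(\binom{n}{2},d/n\right)$, whose mean is $(1+o(1))\,dn/2$. A Chernoff bound gives $\Pr{m\ge(1+n^{-1/3})\,dn/2}\le\exp\!\left(-\tfrac{1}{3}\,n^{-2/3}\cdot\tfrac{dn}{2}\right)=\exp(-\Theta(n^{1/3}))\le\exp(-n^{1/4})$ for large $n$, and $r\le m$ transfers this bound to $r$. A union bound over the failure of $\mathcal{A}$, of the distance property, and of the edge count then keeps the overall exceptional probability at $n^{-2/3}$.
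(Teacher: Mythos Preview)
Your approach is essentially the paper's: a first-moment bound on small dense subgraphs, plus Chernoff for the edge count. The paper uses the event that no $s\le 2L$ vertices span $s{+}1$ edges (you take $3L$, which is immaterial), and the computations are the same up to the looseness in the constant $9$.

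Where you and the paper diverge is precisely where you flagged the ``main obstacle''. You are right that the literal assertion in item~(1) --- that $G_0$ consists only of isolated vertices and simple short cycles --- does not follow from the deletion rule together with your event $\mathcal{A}$: the rule never touches bridges, so every tree component of $G_{n,d/n}$ (and any pendant subtree hanging off the $2$-core) survives intact in $G_0$. Your proposed fix, invoking $\mathcal{A}$ to ``exclude the local configurations that could leave a residual vertex of degree $\ge 3$'', cannot work for this reason. But you should not have tried to prove the literal statement: the paper does not prove it either. Its proof of~(1) establishes only that short cycles in $G_{n,d/n}$ are pairwise edge-disjoint, and then declares this ``suffices''. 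What is actually used downstream (see the proof of Theorem~\ref{thrm:GnpTimeCmplxt}) is merely that every component of $G_0$ is a tree or unicyclic, so that exact counting by dynamic programming gives a random $k$-colouring in linear time. That weaker statement is exactly what your argument delivers, and it is all that is needed.

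For~(2), your reading of the rule (delete an edge whose every enclosing cycle has length $\ge L$) is the right one, and your argument is a fleshed-out version of the paper's one-line ``include all small-cycle edges in $G_0$; then~(2) is straightforward''. One minor slip: in your definition of $\mathcal{A}$ you wrote ``at least as many edges as vertices'', but your equivalence and your computation with $s{+}1$ edges show you mean strictly more.
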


\noindent
In the rest of the analysis of the algorithm we assume that the sequence 
of subgraphs is such that the distance between $v_i$ and $u_i$
is at least $\gamma\log n$, where $\gamma=(9\log d)^{-1}$, 
for $i=0,\ldots, r-1$.
Since $G_0,\ldots G_{r}$ are subgraphs of $G_{n,d/n}$, somehow
they are random too. The reader should feel free to assume any 
{\em arbitrary} rule that generates $G_i$ from each instance of
$G(n,d/n)$. The only restriction we have is that of the distance
between $v_i$ and $u_i$. 
%
\remove{
\noindent
{\bf Remark.}
The reader should  note that when we are  dealing with a random colouring
of $G_i$, we have to deal with {\em two levels of randomness}.
The first level is the random graph $G_i$. Having fixed the graph instance, 
the second level is its random colouring.\\ \vspace{-.8cm}\\
}
%
Then, we use the following theorem. 

\begin{theorem}\label{sketch:thrm:eIsomGnp}
Take $k\geq(2+\epsilon)d$, where  $\epsilon>0$ and $d$ is a sufficiently
large fixed number. There is $\beta_i$ such that for any $\alpha\geq \beta_i$
it holds that $\Omega_i(c,c)$ is $\alpha$-isomorphic to $\Omega_i
(c,q)$ and $H(\cdot,q)$ is an $\alpha$-function while  
\begin{displaymath}
E[\beta_i]\leq C\cdot n^{-\left(1+\frac{\epsilon}{45\log d}\right)},
\end{displaymath}
for any $c,q\in [k]$, $C>0$ is fixed  and $i=0,\ldots,r-1$. 
\end{theorem}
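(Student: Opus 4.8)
The plan is to reduce the statement, via Corollary \ref{sketch:cor:CombCond}, to a single probabilistic estimate and then to control that estimate by a subcritical ``bichromatic path'' counting argument. For each $i$ I would simply \emph{define}
$\beta_i=\max\{\,Pr[v_i,u_i\in Q_{c,q}\mid G_i],\,Pr[v_i,u_i\in Q_{q,c}\mid G_i]\,\}$,
where the first probability is over a uniform colouring of $\Omega_i(c,c)$ and the second over a uniform colouring of $\Omega_i(q,c)$. With this choice the $\alpha$-isomorphism conclusion for every $\alpha\ge\beta_i$ is immediate from Corollary \ref{sketch:cor:CombCond}, and the whole theorem collapses to showing $E[\beta_i]\le C\,n^{-(1+\epsilon/(45\log d))}$.

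The combinatorial heart is the fact, recorded after Definition \ref{def:DiagreementGraph}, that the disagreement graph is connected and bipartite with parts coloured $\{\sigma_{v},q\}$. Hence if $\sigma\in\Omega_i(c,c)$ satisfies $u_i\in Q_{c,q}$, there is a simple path $v_i=w_0,w_1,\dots,w_\ell=u_i$ in $G_i$ every vertex of which is coloured $c$ or $q$; and by property~2 of Lemma \ref{sketch:lemma:seqsubprop} any such path has length $\ell\ge L:=\frac{\log n}{9\log d}$. A union bound over simple paths then gives
\[
Pr[v_i,u_i\in Q_{c,q}\mid G_i]\ \le\ \sum_{\ell\ge L}\ \sum_{P:\,|P|=\ell}\ Pr[\text{every internal vertex of }P\text{ is coloured }c\text{ or }q],
\]
and the $\Omega_i(q,c)$ term is identical with the roles of $c,q$ exchanged.

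For the per-path factor I would invoke the spatial-mixing estimate (\ref{eq:SpatialRequirement}): revealing $w_1,w_2,\dots$ in turn and bounding each conditional probability $Pr[w_j\in\{c,q\}\mid\text{prefix},\ v_i{=}u_i{=}c]$ by $\tfrac{2}{k}(1+o(1))$ yields a per-path bound of $(2/k)^{\ell-1}(1+o(1))^{\ell}$ (the factor $2/k$ because each internal vertex has two allowed colours). Since the expected number of simple paths of length $\ell$ between two fixed vertices of $G(n,d/n)$, and a fortiori of the subgraph $G_i$, is at most of order $d^{\ell}/n$, combining the two estimates produces, per length, a term of order $\tfrac1n\big(\tfrac{2d}{k}\big)^{\ell}$. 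With $k\ge(2+\epsilon)d$ we have $\tfrac{2d}{k}\le(1+\epsilon/2)^{-1}<1$, so the geometric sum over $\ell\ge L$ is dominated by its first term and equals, up to a constant, $\tfrac1n(1+\epsilon/2)^{-L}=C\,n^{-(1+\log(1+\epsilon/2)/(9\log d))}$. The elementary inequality $\log(1+\epsilon/2)\ge\epsilon/5$ (valid for the relevant range of $\epsilon$, and precisely the reason the exponent degrades from $9$ to $45$) then gives the claimed bound on $E[\beta_i]$.

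The main obstacle, where most of the real work hides, is the per-path estimate together with an \emph{honest} combination of the two layers of randomness. Rather than conditioning on a global ``good graph'' event, whose failure probability $n^{-\Omega(1)}$ is larger than the target exponent $1+\epsilon/(45\log d)$ and would therefore swamp the bound, I would work for each fixed potential path $P$ with the joint quantity $E_G\big[\mathbf{1}[P\subseteq G]\,Pr_\sigma[P\text{ bichromatic}]\big]$, so that the edge-probability $(d/n)^{\ell}$ and an \emph{averaged} colouring bound multiply cleanly before summing over the $\approx n^{\ell-1}$ candidate paths. Establishing that averaged bound is exactly where (\ref{eq:SpatialRequirement}) must be promoted to a strong, boundary-robust form and shown to hold, in the appropriate averaged sense, on the locally tree-like neighbourhoods of a typical $G(n,d/n)$ when $k\ge(2+\epsilon)d$; the short cycles of the random graph, which spoil the exact tree recursion underlying the mixing estimate, are the technical nuisance that has to be absorbed here.
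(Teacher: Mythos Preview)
Your overall architecture---define $\beta_i$ via Corollary \ref{sketch:cor:CombCond}, reduce to the event that a bichromatic $v_i$--$u_i$ path exists, and bound this by a union bound over paths of length $\ge L=\frac{\log n}{9\log d}$---matches the paper exactly. The gap is in the per-path estimate.

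You propose to control $Pr[\sigma_{w_j}\in\{c,q\}\mid\text{prefix}]$ by invoking (\ref{eq:SpatialRequirement}), ``promoted to a strong, boundary-robust form''. This is circular: in the paper, (\ref{eq:SpatialRequirement}) is not an input but a \emph{consequence} of the path estimate---Claim \ref{claim:spatial} in the appendix is derived from Proposition \ref{prop:DecayOfPaths}, which is precisely the bound you are trying to establish. Moreover, the version you actually need (marginal of $w_j$ conditioned on an arbitrarily long prefix of fixed colours, including both endpoints pinned to $c$) is strong spatial mixing, which is strictly harder than (\ref{eq:SpatialRequirement}) and is nowhere proved for $k\ge(2+\epsilon)d$. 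You correctly flag this as ``where most of the real work hides'', but offer no mechanism to do it.

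The paper's key idea is that you do not need any correlation decay here. For \emph{any} colouring of the neighbourhood $N_j$ of $w_j$, the Gibbs conditional law of $\sigma_{w_j}$ is uniform over at least $k-\deg_i(w_j)$ colours, so $Pr[\sigma_{w_j}=q\mid\sigma_{N_j}]\le 1/(k-\deg_i(w_j))$ deterministically. By the spatial Markov property this yields
\[
Pr[E_j\mid E_1,\dots,E_{j-1}]\ \le\ \frac{1}{k-\deg_i(w_j)}
\]
(Lemma \ref{lemma:StochOrderSimple}): the true measure is dominated, path by path, by the \emph{product} measure $\mathcal P(G_i,k)$ in which each vertex is independently ``disagreeing'' with probability $1/(k-\deg_i(w))$. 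This bound is purely local, holds for every fixed $G_i$, and is monotone in $i$ so one may pass to $G_{n,d/n}$. Only \emph{then} does the random-graph average enter: for a vertex on a fixed potential path, $E\big[1/(k-\deg(w))\big]\le 1/((1+\epsilon/4)d)$ by splitting at $\deg(w)\le k/2$ and a Chernoff tail (Lemma \ref{lemma:prob-disagreementpath}). From this point your geometric-series calculation goes through verbatim, giving the $n^{-(1+\epsilon/(45\log d))}$ exponent.

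In short: replace your $2/k$ heuristic, which needs unproved mixing, by the worst-case-over-boundary bound $1/(k-\deg(w))$, which needs nothing, and let the random-graph average pull it down to $\approx 1/((1+\epsilon/4)d)$. A secondary point you gloss over: the paper first works in $\cup_{c'}\Omega_i(c,c')$ (only $v_i$ pinned) to get Proposition \ref{prop:DecayOfPaths}, and then pays a factor $\approx k$ to transfer to $\Omega_i(c,c)$; the required estimate $|p-1/k|\le n^{-1}$ on the marginal of $u_i$ is itself obtained from Proposition \ref{prop:DecayOfPaths} via a coupling, so the logic is self-contained rather than circular.
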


\noindent
The expectation of the quantity $\beta_i$ is over the graph instances $G(n,d/n)$.
Taking $r_0=(1+n^{-1/3})dn/2$, Theorem \ref{thrm:Accuracy} implies that
\begin{displaymath}
E\left [||\mu-\hat{\mu}|| \right]\leq E \left[ \sum_{i=0}^r \beta_i \right]
\leq \displaystyle \sum_{i=0}^{r_0} E[\beta_i|r\leq r_0]+n^2 Pr[r\geq r_0].
\end{displaymath}
In the last inequality we use that $\beta_i\in [0,1]$.
It easy to see that $E[\beta_i|r\leq r_0]\leq Pr^{-1}[r\leq r_0]\cdot E[\beta_i]$.
Then, Theorem \ref{sketch:thrm:eIsomGnp} and Lemma \ref{sketch:lemma:seqsubprop}
suggest that there is fixed $C>0$ such that $E\left [ ||\mu-\hat{\mu}|| \right ] 
\leq C\cdot n^{-\frac{\epsilon}{45\log d}}.$
The theorem follows by applying the Markov inequality.

\subsection{Proof sketch for Theorem \ref{sketch:thrm:eIsomGnp}}
\label{sec:sketch:thrm:eIsomGnp}
Consider some fixed instance of $G_i$ and let $c,q\in [k]$ such that 
$c\neq q$. Choose u.a.r. a colouring from $\Omega_i(c,c)$ and let
$Q_{c,q}$ be the disagreement graph that is specified by the colouring
we chose and $q$. Similarly, choose u.a.r. from $\Omega_i(q,c)$ and 
let $Q_{q,c}$ be the disagreement graph specified by the chosen
colouring and $c$.
According to Corollary \ref{sketch:cor:CombCond}, $\Omega_i(c,c)$
is $\alpha$-isomoprhic to $\Omega(q,c)$ for any $\alpha\geq \beta_i$
such that
$\beta_i=Pr[v_i,u_i\in Q_{c,q}|G_i]+Pr[v_i,u_i\in Q_{q,c}|G_i].$
Taking the average over $G_i$ we have
\begin{equation}\label{sketch:eq:EXPCT}
E[\beta_i]=Pr[v_i,u_i\in Q_{c,q}]+Pr[v_i,u_i\in Q_{q,c}].
\end{equation}
We provide a  bound on the probability terms in (\ref{sketch:eq:EXPCT}) by using 
the following proposition.

\begin{proposition}\label{sketch:prop:DecayOfPaths}
Take $k\geq (2+\epsilon)d$, for fixed $\epsilon>0$.
Let $\sigma$ be a $k$-colouring of $G_i$ that is chosen u.a.r. among $\cup_{c'
\in [k]}\Omega_i(c,c')$. For some $q\in [k]\backslash\{c\}$ we let
the event $A_i=$``$v_i$ and $u_i \in Q_{\sigma_{v_i,q}}$''. 
There is a positive constant $C$ such that 
\begin{displaymath}
Pr[A_i]\leq C\cdot n^{-\left(1+\frac{\epsilon}{45\log d} \right)}
\qquad i=0,\ldots,r.
\end{displaymath}
\end{proposition}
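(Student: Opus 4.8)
The plan is to reduce the event $A_i$ to the existence of a suitably coloured path and then run a first-moment argument over both the random graph $G_i$ and its random colouring. First I would invoke the structural fact, recorded after Definition \ref{def:DiagreementGraph}, that the disagreement graph is connected and bipartite with parts coloured $\sigma_{v_i}=c$ and $q$. Hence if both $v_i$ and $u_i$ lie in $Q_{\sigma_{v_i},q}$, there is a path $v_i=w_0,w_1,\dots,w_\ell=u_i$ in $G_i$ all of whose vertices are coloured $c$ or $q$; since consecutive vertices are adjacent and must receive distinct colours, such a path is forced to alternate $c,q,c,q,\dots$, so its colouring is a single prescribed pattern. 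By the construction of the sequence (Lemma \ref{sketch:lemma:seqsubprop}) the distance between $v_i$ and $u_i$ is at least $\gamma\log n$ with $\gamma=(9\log d)^{-1}$, so every such path has length $\ell\ge\gamma\log n$. Therefore $A_i$ is contained in the union, over all paths $P$ from $v_i$ to $u_i$ of length $\ell\ge\gamma\log n$, of the event ``$P$ is coloured by the alternating pattern'', and a union bound gives $Pr[A_i\mid G_i]\le\sum_{P}Pr[P\text{ alternating}\mid G_i]$.

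Next I would separate the two sources of randomness. Writing $Pr[A_i]=E_{G_i}\big[Pr[A_i\mid G_i]\big]$ and exchanging the sum with the expectation, it suffices to control, for each length $\ell$, the product of (i) the expected number of paths of length $\ell$ joining the two \emph{fixed} endpoints $v_i,u_i$, and (ii) a uniform upper bound on the probability that a fixed path receives the alternating colouring. For (i), since $G_i\subseteq G(n,d/n)$, fixing the two endpoints and summing over the $\ell-1$ ordered internal vertices while requiring the $\ell$ edges to be present yields at most $n^{\ell-1}(d/n)^{\ell}=d^{\ell}/n$ paths in expectation; the factor $n^{-1}$ produced here is exactly the source of the leading $n^{-1}$ in the claimed bound, and since edge deletions only decrease the count this estimate remains safe on a typical instance.

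The heart of the argument, and what I expect to be the main obstacle, is ingredient (ii): bounding by roughly $(c_0/k)^{\ell}$ the probability that a fixed path $w_0,\dots,w_\ell$ receives its alternating colouring in a uniform proper colouring conditioned on $w_0=v_i$ being coloured $c$. A pairwise estimate such as (\ref{eq:SpatialRequirement}) is not by itself enough, since one needs the full \emph{joint} probability of $\ell$ prescribed colour assignments, not merely the influence of one endpoint on the other. I would instead reveal the colours $w_1,w_2,\dots$ one vertex at a time and bound each conditional marginal $Pr[w_{j}=a_j\mid w_0=a_0,\dots,w_{j-1}=a_{j-1}]$ by a quantity within a constant factor of $1/k$. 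This is where the local geometry of $G(n,d/n)$ enters: on the event that the relevant neighbourhood of the path is tree-like (which holds with the requisite probability for a typical instance), the conditional colour distribution at $w_j$ is governed by a recursion on a Galton--Watson tree with expected offspring $d$, and the regime $k\ge(2+\epsilon)d$ makes this recursion contract, keeping each marginal within a constant factor of $1/k$. Carrying this recursion out carefully, and absorbing the rare non-tree-like neighbourhoods into the $n^{-\Omega(1)}$ exceptional probability, is the technical core.

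Finally I would assemble the pieces. Combining (i) and (ii),
\begin{displaymath}
Pr[A_i]\le\sum_{\ell\ge\gamma\log n}\frac{d^{\ell}}{n}\left(\frac{c_0}{k}\right)^{\ell}
=\frac{1}{n}\sum_{\ell\ge\gamma\log n}\left(\frac{c_0 d}{k}\right)^{\ell}.
\end{displaymath}
Because $k\ge(2+\epsilon)d$, the ratio $c_0 d/k$ is bounded away from $1$ (with $c_0$ close to $1$ for large $d$), so the geometric series converges and is dominated by its first term at $\ell=\gamma\log n$. This gives $Pr[A_i]\le \frac{C'}{n}\,(c_0 d/k)^{\gamma\log n}=\frac{C'}{n}\,n^{-\Theta(1/\log d)}$, uniformly in $i$. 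Checking that the exponent produced by $\gamma=(9\log d)^{-1}$ and the gap $k/d\ge 2+\epsilon$ is at least $\epsilon/(45\log d)$ — which is where the hypothesis $d>d_0(\epsilon)$ is used and where the factor $45=5\cdot 9$ leaves the needed slack — yields the stated bound $Pr[A_i]\le C\,n^{-(1+\epsilon/(45\log d))}$.
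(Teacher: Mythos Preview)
Your overall strategy---reduce $A_i$ to the existence of a $\{c,q\}$-alternating path from $v_i$ to $u_i$ of length at least $\gamma\log n$, then run a first-moment argument over paths and lengths, and sum the resulting geometric series---is exactly the paper's approach, and your final assembly is correct.

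Where you diverge is in ingredient (ii), and there the paper is considerably simpler than what you propose. You suggest controlling the conditional marginals $Pr[w_j=a_j\mid w_0,\dots,w_{j-1}]$ via spatial mixing on tree-like neighbourhoods, invoking a contracting Galton--Watson recursion. The paper avoids all of this with a one-line combinatorial bound: conditioning further on the colours of \emph{all} neighbours $N_j$ of $w_j$ (not just the path vertices), the colour of $w_j$ is uniform over at least $k-\deg_i(w_j)$ available colours, so
\[
Pr[w_j=a_j\mid w_0=a_0,\dots,w_{j-1}=a_{j-1}]
\le \max_{\sigma}Pr[w_j=a_j\mid X(N_j)=\sigma_{N_j}]
\le \frac{1}{k-\deg_i(w_j)}.
\]
This holds for \emph{any} graph and needs no tree structure; it amounts to stochastic domination of the ``path of disagreement'' event by a product measure $\mathcal{P}(G_i,k)$ in which each vertex $w$ is independently disagreeing with probability $1/(k-\deg_i(w))$. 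Only after this domination does one average over the random graph: since each $\deg_i(w_j)$ is dominated by $\mathrm{Bin}(n,d/n)$, splitting on $\{\deg\le k/2\}$ and applying a Chernoff bound to the tail gives $E[1/(k-\deg)]\le 1/((1+\epsilon/4)d)$ for $k\ge(2+\epsilon)d$ and $d$ large. Thus the only ``local geometry'' actually used is the marginal law of a single vertex degree.

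Your route via recursions on trees could presumably be pushed through, but it is the harder direction: you would essentially be proving a statement of the type~(\ref{eq:SpatialRequirement}) along the whole path in order to establish the proposition, whereas the product-measure domination sidesteps this entirely. A minor technical point you also omit is that the paper truncates the sum at $\ell\le\log^2 n$ and handles longer paths by a separate crude estimate (bounding the probability that \emph{any} disagreement path of length $\log^2 n$ leaves $v_i$); this keeps the per-vertex independence and degree estimates clean.
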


\noindent
Note that in the above proposition we take a disagreement
graph of a colouring chosen u.a.r. among the colourings of an instance of $G_i$
that assign the vertex $v_i$ the colour $c$ (the colouring of $u_i$ is
``free''). We show that  choosing a u.a.r. a colouring from 
$\cup_{c'\in [k]}\Omega_i(c,c')$ the probability $p$ for this colouring 
to be in $\Omega_i(c,c)$ is {\em constant}, i.e. $p=\Theta(1)$. 
Then, the law of total probability suggests that $Pr[v_i,u_i\in Q_{c,q}]
\leq p\cdot Pr[A_i]$. We work similarly for $Pr[v_i,u_i\in Q_{q,c}]$. 
The theorem follows.

\subsection{Proof sketch for Proposition \ref{sketch:prop:DecayOfPaths}}
In the experiment in the statement of Proposition \ref{sketch:prop:DecayOfPaths},
we 	let $W_i(l)$ denote the number of paths in $Q_{c,q}$ that start at $v_i$ and
end at $u_i$ and have length $l$. By the Markov inequality we get that 
\begin{equation}\label{eq:sketch:first}
Pr[A_i]\leq \sum_{l=\gamma\log n}^{\infty} E\left[W_i(l)\right],
\end{equation}
where $\gamma=(9\log d)^{-1}$. Thus it remains to bound the expectation
on the r.h.s.

For a vertex $w$, we let $deg_i(w)$ be its degree in the graph $G_i$.
Consider the product measure ${\cal P}(G_i,k)$ such that each vertex 
$w \in G_i$ is {\em disagreeing} with probability $q_w=\frac{1}{k-deg_i(w)}$ 
and {\em non-disagreeing} with probability $1-q_w$. Also, the vertex
$v_i$ is disagreeing with probability 1. When $k\leq deg_i(w)$ we
set $q_w=1$. A {\em path of disagreement} in $G_i$ is any simple
path which has all its vertices disagreeing.

Let $\Gamma_i(l)$ denote the number of paths of disagreement between
$v_i$ and $u_i$ in $G_i$, in a configuration chosen according to
${\cal P}(G_i,k)$. Through a stochastic order relation we show that
for any $l$ it holds
\begin{equation}\label{eq:sketch:second}
E\left[W_i(l)\right]\leq E_{\cal P}\left[\Gamma_i(l)\right],
\end{equation}
where the rightmost expectation is w.r.t. both the measure ${\cal P}
(G_i,k)$ and $G_i$. Then taking $k\geq(2+\epsilon)d$ and sufficiently
large $d$ it holds that 
\begin{equation}\label{eq:sketch:third}
E_{\cal P}[\Gamma_i(l)]\leq \Theta(1)\cdot n^{l-1}\left( \frac{d}{n}\right)^l \cdot 
\left( \frac{1}{(1+\epsilon/5)d}\right)^l = \Theta(1)\frac{1}{n}(1+\epsilon/5)^{-l}.
\end{equation}
The coefficient $n^{l-1}$ comes from the fact that between $v_i$ and $v_i$
there are at most $n^{l-1}$ paths of length $l$, $(d/n)^l$ is an upper bound 
for the probability to have a specific path of length $l$ in $G_i$ 
and the final coefficient is related to the probability for a path of length 
$l$ to be a ``path of disagreement''. The proposition follows by combining
(\ref{eq:sketch:first}),(\ref{eq:sketch:second}) and (\ref{eq:sketch:third}).

To get a better picture of why there are not many paths of disagreement
when $k\geq(2+\epsilon)d$ consider $q_w$ the marginal distribution of
$w$ in $G_i$ to be disagreeing. 
For $k\geq (2+\epsilon)d$ it holds that 
\begin{displaymath}
q_w \leq \frac{1}{(1+\epsilon/2)d}+Pr[deg_i(w)>(1+\epsilon/2)d+1].
\end{displaymath}
Clearly, $deg_i(w)$ is dominated by ${\cal B}(n,d/n)$. Using Chernoff bounds
\footnote{Corollary 2.4 in \cite{janson}.} we can show that for any fixed 
$\epsilon$, the rightmost probability is smaller than $e^{c'd}$, for fixed $c'$.
Then, roughly speaking, we have the following situation:
The expected degree of $w$ is at most $d$. Also, $w$ is disagreeing with 
probability $q_w<1/d$, for sufficiently large $d$. Consequently, 
for every path of disagreement that enters $w$ the expected number
of paths that leave $w$ are $d\cdot q_w<1$.


}

\newpage

\inappendix
{
\newpage
\renewcommand{\thepage}{A.\arabic{page}}
\renewcommand{\thesection}{\Alph{section}}
\setcounter{page}{1}
\setcounter{section}{0}

\noindent
{\LARGE \bf Appendix }
}

\section{Proof of Theorem \ref{thrm:GnpAccuracy} and Theorem \ref{thrm:GnpTimeCmplxt}}\label{sec:AppGnp}
In this section we use results from Section \ref{sec:RCAlgorithm}
to show Theorem \ref{thrm:GnpAccuracy} and Theorem \ref{thrm:GnpTimeCmplxt}
when the input of the random colouring algorithm is an instance of
$G(n,d/n)$. Essentially there are two issues to deal with, the
first is how do we construct the sequence of subgraphs, while the
second is to replace the, rather general, Assumptions \ref{assm:SeqIsomorphism} 
with more specific results for the colourings of the graphs $G_0,G_1,\ldots, G_r$.

It is easy to construct a sequence of subgraph so as to have $G_0$
randomly $k$-coloured in polynomial time (e.g. take it such that
$G_0$ is empty).  However, the actual
construction of the sequence of subgraphs is a bit more complicated
task. It has been remarked very early in this work that in the 
graph $G_i$ the vertices $v_i$ and $u_i$ are at a sufficiently 
large distance. To see why we need this property we provide the 
following corollary, which  follows directly from the definitions
in the previous sections.

\begin{corollary}\label{cor:CombCond}
Consider some fixed graph $G_i$ and $c,q\in [k]$. The set $\Omega_i(c,c)$
is $\alpha$-isomorphic to $\Omega_i(q,c)$ with $\alpha$-function
$H(\cdot, q)$ if and only if the following holds:
Choose u.a.r. a colouring from $\Omega_i(c,c)$ and let $Q_{c,q}$
be the disagreement graph specified by this colouring and $q$.
It should hold that
\begin{equation}\label{eq:CombCondA}
a\geq \max_{q\in[k]\backslash \{c\}}Pr[v_i,u_i\in Q_{c,q}|G_i]. 
\end{equation}
Additionally, the analogous condition should hold for a random 
colouring of $\Omega_i(q,c)$.
\end{corollary}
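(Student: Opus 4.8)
The plan is to unwind the definition of $\alpha$-isomorphism together with Definition \ref{def:AFunction} and to show that the only isomorphic pair compatible with the $\alpha$-function $H(\cdot,q)$ is $(S_i(c,c),S_i(q,c))$, after which the asserted probabilistic condition becomes a direct restatement of the density requirement $|S_i(c,c)|\ge(1-\alpha)|\Omega_i(c,c)|$. The first and decisive step I would record is an elementary structural fact about the $q$-switching: for $\sigma\in\Omega_i(c,c)$ the vertex $v_i$ always lies in $Q_{c,q}$, so $H(\sigma,q)$ recolours $v_i$ from $c$ to $q$; meanwhile $u_i$, which is coloured $c$, retains its colour under $H(\cdot,q)$ precisely when $u_i\notin Q_{c,q}$ and is otherwise recoloured to $q$. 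Hence $H(\sigma,q)\in\Omega_i(q,c)$ \emph{if and only if} $\sigma\in S_i(c,c)$, the set on which the disagreement graph avoids $u_i$.

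For the forward direction I would assume $\Omega_i(c,c)$ is $\alpha$-isomorphic to $\Omega_i(q,c)$ with $\alpha$-function $H(\cdot,q)$, witnessed by an isomorphic pair $(\Omega_1',\Omega_2')$ and a bijection $h$ with $H(\sigma,q)=h(\sigma)$ on $\Omega_1'$. For every $\sigma\in\Omega_1'$ we have $H(\sigma,q)=h(\sigma)\in\Omega_2'\subseteq\Omega_i(q,c)$, so the observation above forces $\sigma\in S_i(c,c)$; thus $\Omega_1'\subseteq S_i(c,c)$ and $(1-\alpha)|\Omega_i(c,c)|\le|\Omega_1'|\le|S_i(c,c)|$. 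Since $v_i\in Q_{c,q}$ always, the event $u_i\in Q_{c,q}$ coincides with $v_i,u_i\in Q_{c,q}$, and therefore
\begin{displaymath}
Pr[v_i,u_i\in Q_{c,q}\,|\,G_i]=\frac{|\Omega_i(c,c)\setminus S_i(c,c)|}{|\Omega_i(c,c)|}\le\alpha
\end{displaymath}
for a colouring drawn uniformly from $\Omega_i(c,c)$. Because $\Omega_2'=H(\Omega_1',q)\subseteq S_i(q,c)$ by Lemma \ref{lemma:isomorphism}, the identical count on the $\Omega_i(q,c)$ side yields the analogous bound $Pr[v_i,u_i\in Q_{q,c}\,|\,G_i]\le\alpha$.

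For the converse I would assume both density bounds hold. Lemma \ref{lemma:isomorphism} already guarantees that $H(\cdot,q)$ restricts to a genuine bijection $S_i(c,c)\to S_i(q,c)$, so $(S_i(c,c),S_i(q,c))$ is an isomorphic pair; the two probability bounds translate back into $|S_i(c,c)|\ge(1-\alpha)|\Omega_i(c,c)|$ and $|S_i(q,c)|\ge(1-\alpha)|\Omega_i(q,c)|$, which is exactly the definition of $\alpha$-isomorphism with $\alpha$-function $H(\cdot,q)$. Since $S_i(c,c)$ is, by the first observation, the \emph{largest} subset on which $H(\cdot,q)$ maps into $\Omega_i(q,c)$, no larger isomorphic pair is realisable and the threshold is tight, so the two directions match. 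The maximum over $q$ in the stated condition then simply records the uniform strengthening needed when a single $\alpha$ must serve every colour $q$ (as in Assumption \ref{assm:SeqIsomorphism}); for a single fixed $q$ the honest biconditional is the conjunction of the two bounds above.

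I expect the main obstacle to be the maximality/uniqueness step: arguing cleanly that any isomorphic pair realised by $H(\cdot,q)$ must be contained in $(S_i(c,c),S_i(q,c))$. This is where the structural fact that a $q$-switching flips exactly the two colour classes of the connected bipartite disagreement graph does the real work, and where Lemma \ref{lemma:isomorphism} is indispensable for supplying the bijection in the converse. The remaining passage between set densities and the conditional probability $Pr[v_i,u_i\in Q_{c,q}\,|\,G_i]$ is routine once the identity $v_i\in Q_{c,q}$ is noted.
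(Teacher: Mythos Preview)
Your argument is correct and is precisely the ``unwinding of definitions'' the paper has in mind: the paper offers no proof beyond the remark that the corollary ``follows directly from the definitions in the previous sections,'' and your write-up supplies exactly those details, leaning on Lemma~\ref{lemma:isomorphism} for the bijection and on the structural observation that $H(\sigma,q)\in\Omega_i(q,c)$ iff $u_i\notin Q_{c,q}$. Your handling of the variable clash in the displayed $\max_{q}$ (the $q$ in the hypothesis versus the bound variable) is also the right reading: for a single fixed $q$ the honest condition is the pair of bounds you derive, and the maximum is only needed when one $\alpha$ must work uniformly.
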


\noindent
Since we are interested in the minimum possible value for $\alpha$,
we see that the greater the distance between $v_i$ and $u_i$ the
less probable is for the disagreement graph to include them both. 
Thus, the greater the distance between $v_i$ and $u_i$ the more
accurate the random colouring algorithm gets. 
To this end we use
the following lemma to construct the sequence of subgraphs of $G_{n,d/n}$.

\begin{lemma}\label{lemma:seqsubprop}
With probability at least $1-n^{-2/3}$ we can have the sequence
$G_0,\ldots G_r$ satisfying the following two properties.
\begin{enumerate}
\item $G_0$ consists only of isolated vertices and simple cycles, 
each of maximum length less than $\frac{\log n}{9\log d}$.
\item In $G_i$, the graph distance between $v_i$ and $u_i$ is at least $\frac{\log n}{9\log d}$.
\end{enumerate}
Additionally it holds that 
\begin{displaymath}
Pr[r\geq (1+n^{-1/3})dn/2]\leq \exp(-n^{1/4}).
\end{displaymath}
\end{lemma}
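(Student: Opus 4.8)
The plan is to derive everything from a single \emph{local sparsity} property of $G(n,d/n)$. Write $L=\frac{\log n}{9\log d}$, so that $d^{L}=n^{1/9}$, and call a cycle \emph{short} if its length is $<L$ and \emph{long} if its length is $\ge L$. The key claim I would establish first is: with probability at least $1-n^{-2/3}$, no subgraph of $G(n,d/n)$ on $O(\log n/\log d)$ vertices contains two independent cycles. I prove this by a first moment bound. A subgraph on $v$ vertices carrying two independent cycles has at least $v+1$ edges, so the expected number of such subgraphs on at most $cL$ vertices is at most
\[
\sum_{v\le cL}\binom{n}{v}\binom{\binom{v}{2}}{v+1}\left(\frac{d}{n}\right)^{v+1}
\le \frac{1}{n}\sum_{v\le cL} d\,v\left(\tfrac{d\eul^2}{2}\right)^{v},
\]
which, since the summand grows in $v$, is dominated by the top term $\tfrac1n(d\eul^2/2)^{cL}=n^{-1+\frac{c}{9}(1+O(1/\log d))}$. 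The constant $9$ in $L$ is tuned precisely here: with $c$ slightly below $3$ and $d$ large (as assumed, $d>d_0(\eps)$), the exponent is at most $-2/3$. This isolation property is the engine of the whole argument, and making the union bound land below $n^{-2/3}$ at the right subgraph size is the \textbf{main obstacle}; everything else is a deterministic consequence.

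Second, I would realise the deletion process so that property~2 holds. The rule is: while the current graph $H$ contains a long cycle, delete an edge $e=\{v_i,u_i\}$ that lies on a long cycle but on \emph{no} short cycle. Such an edge always exists: if every edge of a long cycle $C$ lay on a short cycle, pick $e_1\in C$ on a short cycle $C_1$ and an edge $e_2\in C$ within arc-distance $<L$ of $e_1$ that is not on $C_1$ (possible since $C_1$ meets $C$ in $<L$ edges while $C$ has $\ge L$ of them); then $C_1$, the short cycle $C_2\ne C_1$ through $e_2$, and the connecting arc of $C$ form a subgraph on $O(L)$ vertices with two independent cycles, contradicting isolation. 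Deleting such an $e$ forces $\dist_{G_i}(v_i,u_i)\ge L$, since any shorter $v_i$--$u_i$ path in $G_i$ would close up with $e$ into a short cycle through $e$. The process terminates exactly when no long cycle remains, which by definition is $G_0$.

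Third, property~1 follows from the stopping condition together with isolation. By construction $G_0$ has no cycle of length $\ge L$, and by isolation every cyclic component of $G_0$ is a single simple cycle: a component with two independent cycles would contain either a small ($O(L)$-vertex) subgraph with two independent cycles, contradicting isolation, or a cycle of length $\ge L$, contradicting the stopping condition. Hence the cyclic part of $G_0$ is a vertex-disjoint union of simple cycles of length $<L$, and the remaining (acyclic) edges form a forest; both trees and short cycles are sampled uniformly in polynomial time by standard transfer-matrix / top-down procedures, which is exactly the structure the colouring step of the algorithm needs.

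Finally, the tail bound on $r$ is pure concentration of the edge count. Since $r$ equals the number of deleted edges, $r\le |E(G(n,d/n))|$, and $|E|\sim\Bin\!\left(\binom{n}{2},d/n\right)$ has mean $\tfrac{d(n-1)}{2}<\tfrac{dn}{2}$. A Chernoff bound with relative deviation $n^{-1/3}$ gives
\[
\Pr{|E|\ge (1+n^{-1/3})\tfrac{dn}{2}}\le \exp\!\left(-\tfrac{d}{6}\,n^{1/3}\right)\le \exp\!\left(-n^{1/4}\right),
\]
for $n$ large, which transfers directly to $r$. Collecting failure probabilities, the $n^{-2/3}$ coming from the isolation property dominates, yielding the overall $1-n^{-2/3}$ guarantee stated in the lemma.
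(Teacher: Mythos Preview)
Your first-moment bound for local sparsity and the Chernoff bound on $r$ are exactly the paper's argument, so the engine of the proof is right. The discrepancy is in how you realise the deletion process and what $G_0$ you end up with.

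The paper takes a simpler route: it lets $G_0$ be the spanning subgraph of $G(n,d/n)$ whose edge set is precisely the edges lying on some short cycle, and deletes all other edges one by one. Property~(2) is then immediate --- a deleted edge lies on no short cycle, so after its removal any $v_i$--$u_i$ path has length $\ge L$ --- and property~(1) follows directly from vertex-disjointness of short cycles: the edges on short cycles form a disjoint union of simple short cycles, every other vertex is isolated. No existence argument for ``an edge on a long cycle but on no short cycle'' is needed, and the isolation bound only has to reach size $2L$ (two overlapping short cycles) rather than the $\sim 3L$ you work with.

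Your stopping rule (``halt when no long cycle remains'') does not produce the $G_0$ the lemma asserts: bridges and other acyclic edges survive, so your $G_0$ is short cycles \emph{plus a forest}, not short cycles plus isolated vertices. You notice this and argue that forests can also be sampled exactly, which is true and is all the algorithm needs, but it does not prove statement~(1) as written. The fix is cheap --- additionally delete every edge that lies on no short cycle of $G(n,d/n)$; such edges give $\dist_{G_i}(v_i,u_i)\ge L$ (indeed $=\infty$ for bridges), so~(2) still holds --- and once you do that you have exactly the paper's construction, making your detour through the long-cycle/short-cycle existence claim unnecessary.
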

The proof of Lemma \ref{lemma:seqsubprop} appears in Section \ref{sec:lemma:seqsubprop}.

In the analysis that follows,  we assume that the sequence of
subgraphs that is computed by the random colouring algorithm,
has the properties stated in Lemma \ref{lemma:seqsubprop}. 
Since $G_0,\ldots G_{r}$ are subgraphs of $G_{n,d/n}$, somehow
they are random too and they depend on $d$. The reader should feel
free to assume any, {\em arbitrary}, rule that generates $G_i$ from
each instance of $G(n,d/n)$. The only restriction we have is that
of the distance between $v_i$ and $u_i$.

\begin{theorem}\label{thrm:eIsomGnp}
Take $k\geq(2+\epsilon)d$, where  $\epsilon>0$ and $d$ is a sufficiently large
fixed number. There is $\beta_i$ such that for any $\alpha\geq \beta$
it holds that $\Omega_i(c,c)$ is $\alpha$-isomorphic to $\Omega_i
(c,q)$ and $H(\cdot,q)$ is an $\alpha$-function while
\begin{displaymath}
E[\beta_i]\leq \frac{(40+8\epsilon)k}{\epsilon}n^{-\left(1+\frac{\epsilon}{45\log d}\right)},
\end{displaymath}
for any $c,q\in [k]$ and $i=0,\ldots,r$.
\end{theorem}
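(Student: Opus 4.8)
The plan is to reduce the theorem to a bound on the probability that the disagreement graph contains both endpoints, and then to control that probability by a first-moment (path-counting) argument. By Corollary \ref{cor:CombCond}, establishing that $\Omega_i(c,c)$ is $\alpha$-isomorphic to $\Omega_i(q,c)$ amounts to exhibiting $\beta_i$ with
\begin{displaymath}
\beta_i = \max_{q}\Pr[v_i,u_i\in Q_{c,q}\mid G_i] + \max_{c'}\Pr[v_i,u_i\in Q_{q,c'}\mid G_i],
\end{displaymath}
so that any $\alpha\ge\beta_i$ works. Taking expectations over the random instance $G_i$ (which is a subgraph of $G(n,d/n)$ produced by the deletion rule), the task becomes bounding $\Pr[v_i,u_i\in Q_{c,q}]$ unconditionally. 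First I would invoke Lemma \ref{lemma:seqsubprop} to guarantee that $\dist(v_i,u_i)\ge\gamma\log n$ with $\gamma=(9\log d)^{-1}$; this is the structural input that forces any disagreement path between the two endpoints to be long.

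The core estimate is the content of Proposition \ref{sketch:prop:DecayOfPaths}, which I would apply directly. The mechanism is as follows. If $v_i,u_i\in Q_{c,q}$, then by connectivity and bipartiteness of the disagreement graph there is a simple path from $v_i$ to $u_i$ all of whose vertices are coloured in $\{c,q\}$ (a ``path of disagreement''). Let $W_i(l)$ count such paths of length $l$; then by Markov's inequality $\Pr[v_i,u_i\in Q_{c,q}]\le\sum_{l\ge\gamma\log n}E[W_i(l)]$, since the minimum distance forces $l\ge\gamma\log n$. To bound $E[W_i(l)]$ I would pass to the product measure $\mathcal{P}(G_i,k)$, where each vertex $w$ is independently ``disagreeing'' with probability $q_w=\frac{1}{k-\deg_i(w)}$, and show via a stochastic-domination argument that the expected number of genuine disagreement paths in a random colouring is dominated by the expected number of disagreement paths under $\mathcal{P}$. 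The key point is that for $k\ge(2+\epsilon)d$ the effective branching factor along a path is $d\cdot q_w$, which is strictly below $1$; quantitatively, combining the at most $n^{l-1}$ candidate paths, the $(d/n)^l$ edge-probability factor, and the disagreement-probability factor $((1+\epsilon/5)d)^{-l}$ yields $E_{\mathcal P}[\Gamma_i(l)]\le\Theta(1)\cdot\frac1n(1+\epsilon/5)^{-l}$. Summing the geometric series from $l=\gamma\log n$ produces the claimed $n^{-(1+\frac{\epsilon}{45\log d})}$ tail.

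To close the gap between Proposition \ref{sketch:prop:DecayOfPaths}, which samples from $\cup_{c'}\Omega_i(c,c')$ (only $v_i$'s colour fixed), and Theorem \ref{thrm:eIsomGnp}, which needs the probability under $\Omega_i(c,c)$, I would argue that the probability $p$ that a uniform colouring of $\cup_{c'}\Omega_i(c,c')$ actually lies in $\Omega_i(c,c)$ is $\Theta(1)$; this follows because with $v_i,u_i$ far apart the marginal of $u_i$ is close to uniform over $[k]$, so the event $\{u_i\text{ coloured }c\}$ has probability $\Theta(1/k)$ relative to the whole, and conditioning inflates the target probability by at most $p^{-1}=O(k)$. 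This accounts for the factor $k$ and the constant $(40+8\epsilon)/\epsilon$ in the stated bound. The main obstacle I expect is the stochastic-domination step $E[W_i(l)]\le E_{\mathcal P}[\Gamma_i(l)]$: one must show that conditioning a uniform proper $(c,c)$-colouring to take values in $\{c,q\}$ along a fixed path makes each successive vertex ``disagreeing'' with probability no larger than $\frac{1}{k-\deg_i(w)}$, despite the colour constraints imposed by already-coloured neighbours. Handling the high-degree vertices (where $\deg_i(w)$ can reach $\Theta(\log n/\log\log n)$, making $q_w=1$) without destroying the sub-unit branching requires the Chernoff tail bound on degrees; showing that such vertices are too sparse along a typical path to matter is the delicate part, but the net effect is absorbed into the constant.
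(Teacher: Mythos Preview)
Your proposal is correct and follows essentially the same route as the paper: reduce via Corollary~\ref{cor:CombCond} to bounding $\Pr[v_i,u_i\in Q_{c,q}]$, invoke Proposition~\ref{prop:DecayOfPaths} (the path-counting/product-measure argument you outline) for the unconditioned bound, and then pay the $O(k)$ conditioning factor by showing the marginal of $u_i$ is close to $1/k$. The only point you leave slightly implicit is that the near-uniformity of $u_i$'s marginal (your ``$p\approx 1/k$'') is itself proved in the paper (Claim~\ref{claim:spatial}) by a coupling that reuses the very same disagreement-path bound of Proposition~\ref{prop:DecayOfPaths}; this is not circular, since the proposition is stated for the distribution with only $v_i$'s colour fixed.
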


\noindent
Since the graphs $G_i$ are random the corresponding sets $\Omega_i$ are random too.
The above expectation is taken w.r.t. the random graph $G_i$, for $i=0,\ldots, r-1$.
The proof of Theorem \ref{thrm:eIsomGnp} appears in Section \ref{sec:thrm:eIsomGnp}.
\\ \vspace{-.7cm} \\

\begin{theoremproof}{\ref{thrm:GnpAccuracy}}
Using Theorem \ref{thrm:Accuracy} and Theorem \ref{thrm:eIsomGnp}
we have that
\begin{displaymath}
E\left [ ||\mu-\hat{\mu}|| \right
 ]\leq E \left[ \sum_{i=0}^r \beta_i \right],
\end{displaymath}
where the expectation is taken over the instances of the input $G_{n,d/n}$.
Noting that  $\beta_i\in [0,1]$, we get 
\begin{displaymath}
E\left [ ||\mu-\hat{\mu}|| \right ] \leq \displaystyle \sum_{i=0}^{(1+n^{-1/3})dn/2} E[\beta_i|r\leq (1+n^{-1/3})dn/2]+
n^2 Pr[r\geq (1+n^{-1/3})dn/2].
\end{displaymath}
It is direct that  
$$
E[\beta_i|r\leq (1+n^{-1/3})dn/2]\leq Pr^{-1}[r\leq (1+n^{-1/3})dn/2]\cdot 
E[\beta_i]\leq \frac{3}{2} \frac{(40+8\epsilon)k}{\epsilon} n^{-(1+\frac{\epsilon}{45\log d})}
$$
in the final inequality we used Theorem \ref{thrm:eIsomGnp}.
Combining all the above with Lemma \ref{lemma:seqsubprop},  we get that
\begin{displaymath}
E\left [ ||\mu-\hat{\mu}|| \right ]\leq C\cdot n^{-\frac{\epsilon}{45\log d}}.
\end{displaymath}
for fixed $C>0$. The theorem follows by applying the Markov inequality.
\end{theoremproof}

%

\begin{theoremproof}{\ref{thrm:GnpTimeCmplxt}}
As we show in the proof of Lemma \ref{lemma:seqsubprop},
with probability at least $1-\exp(-n^{1/4})$, the number
of edges of $G_{n,d/n}$ is at most $(1+n^{-1/3})\frac{dn}{2}$.
From now on in the proof, assume that we are dealing with
a graph with $\Theta(n)$ edges. In this case it is direct 
that $r=\Theta(n)$, as well.

Since the number of edges is linear, we need $O(n)$ time to find
whether some edge belongs to a
small cycle, i.e. a cycle of length less than $\frac{\log n}{9\log d}$, 
or not.
This can be done by exploring the structure of the
$\frac{\log n}{9\log d}$-neighbourhood around this edge.
Thus, the algorithm requires $O(n^2)$ time to create 
the sequence of subgraph.

Also, it is clear that we need $O(n)$ time to implement one 
switching of a colouring. For more details on how this can be 
done see in the proof of Lemma \ref{lemma:StepAc}. Since $r=O(n)$, 
we need $O(n^2)$ time for the all colour switchings in the algorithm.

As far as the random colouring of $G_0$ is regarded we note
the following: 
Using Dynamic Programming we can compute exactly the number of
list colourings of a tree $T$. In the list colouring problem every
vertex $v \in T$ has a set $List(v)$ of valid colours, where
$List(v) \subseteq [k]$ and $v$ only receives a colour in $List(v)$. 
For a tree on $l$ vertices, using dynamic programming we can 
compute the exact number of list colourings in time $lk$.
For a unicyclic component, i.e. a tree with an extra edge,  we 
can consider all the $k^{2}$ colourings of the endpoints of the
extra edges and for each of these colourings recurse on the 
remaining tree. 
Thus, it is direct to show that we can have a random $k$- colouring
of $G_0$ in time $O(n)$.

The theorem follows by noting that the construction of the
sequence of subgraphs with the desired properties fails with probability
at most $n^{-2/3}$.
\end{theoremproof}

\section{Proof of Theorem \ref{thrm:eIsomGnp}}\label{sec:thrm:eIsomGnp}
So as to prove Theorem \ref{thrm:eIsomGnp} we use the following proposition.

\begin{proposition}\label{prop:DecayOfPaths}
Take $k\geq (2+\epsilon)d$, where $\epsilon>0$ and $d$ is a sufficiently 
large number.
Let $\sigma$ be a $k$-colouring of $G_i$ that is chosen u.a.r. among $\cup_{c'
\in [k]}\Omega_i(c,c')$. For some $q\in [k]\backslash\{c\}$ we let
the event $A_i=$``$v_i$ and $u_i \in Q_{\sigma_{v_i,q}}$''. 
It holds that 
\begin{displaymath}
Pr[A_i]\leq \frac{10+2\epsilon}{\epsilon}n^{-\left(1+\frac{\epsilon}{45\log d} \right)}
\qquad i=0,\ldots,r.
\end{displaymath}
\end{proposition}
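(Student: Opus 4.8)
The plan is to bound $\Pr[A_i]$ by controlling the number of paths of disagreement between $v_i$ and $u_i$ in the disagreement graph $Q_{\sigma_{v_i},q}$. The key observation is that if both $v_i$ and $u_i$ lie in $Q_{\sigma_{v_i},q}$, then (since the disagreement graph is connected by construction) there is a path from $v_i$ to $u_i$ inside $Q_{\sigma_{v_i},q}$; by property 2 of Lemma~\ref{lemma:seqsubprop} any such path has length at least $\gamma\log n$, where $\gamma=(9\log d)^{-1}$. Letting $W_i(l)$ count the number of length-$l$ paths in $Q_{\sigma_{v_i},q}$ joining $v_i$ to $u_i$, the union/Markov bound gives
\begin{equation}\label{eq:myfirst}
\Pr[A_i]\leq \sum_{l=\gamma\log n}^{\infty} E[W_i(l)].
\end{equation}
So the whole task reduces to estimating $E[W_i(l)]$ for large $l$, where the expectation is over both the random colouring and the random graph $G_i$.

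First I would set up the stochastic-domination argument sketched in the excerpt. The difficulty with $W_i(l)$ directly is that membership of a vertex $w$ in the disagreement graph is a complicated, globally correlated event under the uniform colouring measure. The trick is to replace it by an independent product measure $\mathcal P(G_i,k)$ in which each $w$ is declared ``disagreeing'' independently with probability $q_w=1/(k-\deg_i(w))$ (and $q_w=1$ when $k\le\deg_i(w)$, and $v_i$ always disagreeing). I would prove that a vertex $w$ at the frontier of the partially-revealed disagreement graph is added with conditional probability at most $1/(k-\deg_i(w))$: given that a neighbour already carries a colour in $\{\sigma_{v_i},q\}$, the chance that $w$ also carries one of these two colours, conditioned on the colours already exposed, is controlled because $w$ has at most $\deg_i(w)$ forbidding constraints, so at least $k-\deg_i(w)$ colours remain available and at most one of the two target colours is feasible. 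This yields the stochastic order relation $E[W_i(l)]\le E_{\mathcal P}[\Gamma_i(l)]$, where $\Gamma_i(l)$ is the number of length-$l$ disagreement paths between $v_i$ and $u_i$ under the product measure.

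Next I would bound $E_{\mathcal P}[\Gamma_i(l)]$ by a first-moment calculation over paths. A fixed sequence of $l+1$ distinct vertices forms a path in $G_i$ with probability at most $(d/n)^l$, there are at most $n^{l-1}$ such vertex sequences starting at $v_i$ and ending at $u_i$, and conditioned on the path being present each internal vertex is disagreeing independently with probability at most $\max_w q_w$. The heart of the matter is showing that the effective per-vertex disagreement probability is at most $1/((1+\epsilon/5)d)$ once $k\ge(2+\epsilon)d$ and $d$ is large. For this I would use
\begin{displaymath}
q_w\leq \frac{1}{(1+\epsilon/2)d}+\Pr[\deg_i(w)>(1+\epsilon/2)d+1],
\end{displaymath}
and bound the tail via a Chernoff estimate on $\Bin(n,d/n)$ (Corollary~2.4 in \cite{janson}), which is exponentially small in $d$ and hence negligible for large $d$. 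Combining the three factors gives
\begin{displaymath}
E_{\mathcal P}[\Gamma_i(l)]\leq \Theta(1)\cdot n^{l-1}\left(\frac{d}{n}\right)^l\left(\frac{1}{(1+\epsilon/5)d}\right)^l=\Theta(1)\,\frac{1}{n}\,(1+\epsilon/5)^{-l}.
\end{displaymath}

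Finally I would substitute this geometric bound into \eqref{eq:myfirst}, summing the tail from $l=\gamma\log n$:
\begin{displaymath}
\Pr[A_i]\leq \Theta(1)\,\frac{1}{n}\sum_{l\geq\gamma\log n}(1+\epsilon/5)^{-l}=\Theta(1)\,\frac{1}{n}\,(1+\epsilon/5)^{-\gamma\log n}.
\end{displaymath}
Since $(1+\epsilon/5)^{-\gamma\log n}=n^{-\gamma\log(1+\epsilon/5)}$ and $\gamma=(9\log d)^{-1}$, tracking the constant $\log(1+\epsilon/5)\geq \epsilon/10$ for small $\epsilon$ (using $\log(1+x)\ge x/2$) yields an exponent of the form $n^{-\epsilon/(45\log d)}$ beyond the $1/n$ factor, together with the explicit prefactor $(10+2\epsilon)/\epsilon$ from summing the geometric series. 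I expect the main obstacle to be the stochastic-domination step: making rigorous that the sequential exposure of the disagreement graph is dominated by the independent product measure $\mathcal P(G_i,k)$, since one must argue that conditioning on previously exposed colours never increases a frontier vertex's probability of joining, and handle the high-degree vertices (where $q_w$ is set to $1$) by absorbing their rare contribution into the Chernoff tail. The remaining steps are standard first-moment and geometric-series estimates.
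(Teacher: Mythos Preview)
Your proposal follows the paper's approach essentially verbatim—it reproduces the proof sketch (equations (\ref{eq:sketch:first})–(\ref{eq:sketch:third})) almost line by line: Markov over path lengths, stochastic domination by the product measure ${\cal P}(G_i,k)$, and a first-moment estimate over paths combined with a Chernoff tail on degrees.

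One technical point the paper's full proof (Section~\ref{sec:prop:DecayOfPaths}) handles and that your outline glosses over: you sum over all $l\ge\gamma\log n$ and implicitly claim a uniform per-vertex disagreement factor $1/((1+\epsilon/5)d)$ after conditioning on the path $\pi$ being present. But conditioned on $I_\pi$, the degrees $deg_i(x_j)$ along the path are \emph{not} independent (chord edges within $\pi$ couple them), so one cannot simply take the product of the marginal bounds on $q_{x_j}$; the correlation in fact goes the wrong way for an upper bound. The paper's Lemma~\ref{lemma:prob-disagreementpath} deals with this via an induction that controls $Pr[deg_{in}(x_{l_0+1})>0\mid I_\pi,Q_\pi]\le n^{-0.95}$, and that estimate requires $l\le\Theta(\log^2 n)$. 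Accordingly the full proof splits $A_i$ into the event $B$ that $v_i,u_i$ are joined by a disagreement path of length at most $\log^2 n$ (handled exactly as you describe) and the event $C$ of a longer connection, which is controlled separately by bounding the probability that \emph{any} disagreement path of length exactly $\log^2 n$ leaves $v_i$. You flagged the stochastic-domination step as the main obstacle; in fact that step (Lemma~\ref{lemma:StochOrderSimple}) is a three-line argument, and the more delicate issue is this degree-correlation along long paths, which is why the truncation at $\log^2 n$ is needed.
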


\noindent
The reader should remark that since the graph $G_i$ is random,
for the probability term $Pr[A_i]$ in the proposition it holds
that
\begin{displaymath}
Pr[A_i]=E[Pr[A_i|G_i]],
\end{displaymath}
where the expectation w.r.t. $G_i$. The proof of Proposition \ref{prop:DecayOfPaths} appears in Section
\ref{sec:prop:DecayOfPaths}\\ \vspace{-.7cm}\\

\begin{theoremproof}{\ref{thrm:eIsomGnp}}
Consider, first, a fixed sequence $G_{i}$, for $i=0,1\ldots,r$.
Assume that we choose a $k$-colouring u.a.r. among $\Omega_i(c,c)$
and let $Q_{c,q}$ be the disagreement graph specified by the chosen
$k$-colouring and $q$. Let the event $B_i=$``$v_i,u_i\in Q_{c,q}$''
in the above experiment.

Similarly, assume that we choose u.a.r. a $k$-colouring from $\Omega_i(q,c)$
and let $Q_{q,c}$ be the disagreement graph  specified by the chosen
$k$-colouring and $q$. Let the event $C_i=$``$v_i,u_i\in Q_{c,q}$''
in this experiment.

We let $\beta_i=\max\{Pr[B_i|\Omega_{i}(c,c)],Pr[C_i|\Omega_{i}(q,c)]\}$.
Corollary \ref{cor:CombCond} implies that for any $\alpha\geq \beta_i$
it holds that the set $\Omega_i(c,c)$ is $\alpha$-isomorphic to
$\Omega_i(q,c)$ with $\alpha$-function $H(\cdot, q)$. 
Also, it is straightforward that 
$$E[\beta_i]\leq Pr[B_i]+Pr[C_i].$$
The above expectation is taken w.r.t. to the instances $G_i$.

Assume that we choose u.a.r. a member of a fixed instance of $\cup_{c'\in[k]} 
\Omega_i(c,c')$  and we denote with $E_i$ the event that the 
chosen colouring belongs to $\Omega_i(c,c)$. Also let
\begin{displaymath}
p=\sum_{G}Pr[E_i|G]\cdot {\cal D}_i[G],
\end{displaymath}
where, for a fixed graph $G$, $Pr[E_i|G]$ is equal to the
probability to have the event $E_i$ when the sets of $k$-colourings
are specified by the graph $G$. ${\cal D}_i(G)$ is equal to
the  probability that an instance of $G_i$ is the graph $G$. 
Applying the law of total probability we get that 
\begin{displaymath}
\begin{array}{lcl}
Pr[A_i]&=&Pr[A_i|E_i]Pr[E_i]+Pr[A_i|E^c_i]Pr[E^c_i]
\\ \vspace{-.3cm}\\
&\geq& Pr[A_i|E_i]Pr[E_i]=Pr[B_i]\cdot p.
\end{array}
\end{displaymath}
Thus, it holds that
\begin{displaymath}
Pr[B_i] \leq p^{-1} Pr[A_i].
\end{displaymath}
Since we have the value of $Pr[A_i]$ from Proposition \ref{prop:DecayOfPaths},
we only need to compute a lower bound for the probability $p$. 
For a fixed graph $G$, let $\mu_{G}$ denote the Gibbs distribution
of the $k$-colourings of $G$. Also let $\mu_{i}$ be defined
as follows:
\begin{displaymath}
\mu_i(\sigma)=\sum_{G}\mu_{G}(\sigma){\cal D}_i(G) \qquad \forall \sigma\in[k]^V.
\end{displaymath}
We use the following claim to compute bounds for $p$.
\begin{claim}\label{claim:spatial}
Taking $k\geq(2+\epsilon)d$, where $\epsilon>0$ is fixed and $d$
is a sufficiently large number, it holds that 
\begin{displaymath}
\max_{\sigma\in \Omega_i}||\mu_i(\cdot|\sigma_v)-\mu_i(\cdot)||_u\leq n^{-1}
\qquad i=0,\ldots, r.
\end{displaymath}
\end{claim}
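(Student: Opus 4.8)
The plan is to reduce Claim \ref{claim:spatial} to the path-decay estimate of Proposition \ref{prop:DecayOfPaths} by means of a switching (Kempe-chain) coupling. First I would record two symmetry facts. Since permuting the colour classes is a bijection on the proper $k$-colourings of any fixed graph $G$, the single-vertex marginal of $\mu_G$ is uniform; in particular $\mu_G(v_i\text{ coloured }q)=1/k$ and $\mu_G(u_i\text{ coloured }c)=1/k$ for every $G$, so the unconditional marginal $\mu_i(\cdot)$ restricted to $u_i$ is uniform. Because $\mu_G(v_i=q)=1/k$ does not depend on $G$, averaging gives $\mu_i(\cdot\,|\,v_i=q)=E_{G_i}[\mu_{G_i}(\cdot\,|\,v_i=q)]$, i.e. the averaged conditional is the average of the conditionals. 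Writing $\mu_i(\cdot)=\frac1k\sum_{q}\mu_i(\cdot\,|\,v_i=q)$ and using the triangle inequality for $||\cdot||_{u}$, the Claim reduces to bounding $\max_{c,q}\,||\mu_i(\cdot\,|\,v_i=c)-\mu_i(\cdot\,|\,v_i=q)||_{u}$, and by convexity of total variation distance this is at most $\max_{c,q}E_{G_i}||\mu_{G_i}(\cdot\,|\,v_i=c)-\mu_{G_i}(\cdot\,|\,v_i=q)||_{u}$.

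Next I would bound the per-graph quantity by a coupling built from the disagreement graph. Fix $G_i$ and colours $c\neq q$. Given a colouring $\sigma$ with $\sigma_{v_i}=c$, form its disagreement graph $Q_{c,q}$ (the maximal connected $\{c,q\}$-coloured subgraph containing $v_i$, cf. Definition \ref{def:DiagreementGraph}) and switch the two colours on it; exactly as in Lemma \ref{lemma:isomorphism} this switch is a measure-preserving bijection from $\mu_{G_i}(\cdot\,|\,v_i=c)$ to $\mu_{G_i}(\cdot\,|\,v_i=q)$, since it sends proper colourings to proper colourings and leaves the disagreement graph (hence the inverse switch) unchanged. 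Coupling $\sigma$ with its image $\sigma'$, a vertex changes colour only if it lies in $Q_{c,q}$, so $\sigma(u_i)\neq\sigma'(u_i)$ forces $u_i\in Q_{c,q}$. As $v_i\in Q_{c,q}$ always holds, the Coupling Lemma \cite{coupling-lemma} gives
\[
||\mu_{G_i}(\cdot\,|\,v_i=c)-\mu_{G_i}(\cdot\,|\,v_i=q)||_{u}\le Pr[v_i,u_i\in Q_{c,q}\mid v_i=c,\,G_i].
\]

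Finally I would take expectations over $G_i$ and invoke Proposition \ref{prop:DecayOfPaths}. The averaged right-hand side is precisely $Pr[A_i]$ for a colouring drawn u.a.r. from $\cup_{c'}\Omega_i(c,c')$, so it is at most $\frac{10+2\epsilon}{\epsilon}n^{-(1+\epsilon/(45\log d))}$; since $\dist(v_i,u_i)\ge\frac{\log n}{9\log d}$ is already built into that estimate (Lemma \ref{lemma:seqsubprop}), for $d$ and $n$ large the prefactor $\frac{10+2\epsilon}{\epsilon}n^{-\epsilon/(45\log d)}$ is at most $1$ and the bound becomes $\le n^{-1}$. Combining with the reduction of the first paragraph yields $\max_{\sigma\in\Omega_i}||\mu_i(\cdot\,|\,\sigma_{v})-\mu_i(\cdot)||_{u}\le n^{-1}$, as claimed. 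The one step to verify with care is the middle one—that the colour-switch on $Q_{c,q}$ is a genuine measure-preserving bijection between the \emph{full} conditional supports $\{\sigma:\sigma_{v_i}=c\}$ and $\{\sigma:\sigma_{v_i}=q\}$ (and not merely between the non-pathological sets $S(\cdot,\cdot)$)—but this is exactly the Kempe-chain involution already exploited for Lemma \ref{lemma:isomorphism}, so no new obstacle arises and the quantitative content is carried entirely by Proposition \ref{prop:DecayOfPaths}.
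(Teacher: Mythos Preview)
Your proposal is correct and follows essentially the same route as the paper: both construct a coupling between $\mu_i(\cdot\,|\,v_i=c)$ and $\mu_i(\cdot\,|\,v_i=q)$ via the Kempe switch $H(\cdot,q)$, observe that this is a genuine bijection between the full sets $\{\sigma:\sigma_{v_i}=c\}$ and $\{\sigma:\sigma_{v_i}=q\}$ (the paper isolates this as Claim~\ref{claim:OldIsomorph}), note that disagreement at $u_i$ forces $u_i\in Q_{c,q}$, and then invoke Proposition~\ref{prop:DecayOfPaths}. The only cosmetic difference is that the paper couples $\mu_i(\cdot\,|\,v_i=c)$ directly with the unconditional $\mu_i$ by first drawing $Z_i(v_i)$ uniformly, whereas you insert an explicit convexity/triangle-inequality reduction to the pairwise conditional comparison; the content is the same.
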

It is easy to show that under $\mu_i$ the marginal distribution
of the colour assignment of the vertex $u$ is the uniform over
the set $[k]$. 
The above claim suggests that for $k\geq(2+\epsilon)d$ it holds that 
$$
\left |p-\frac{1}{k}\right |\leq n^{-1}.
$$
Thus we get that 
\begin{displaymath}
Pr[B_i]\leq \displaystyle  2k Pr[A_i]\leq \frac{(20+4\epsilon)k}{\epsilon}n^{-\left(1+\frac{\epsilon}{45\log d}\right)}.
\end{displaymath}
Using the same arguments we, also, get that
\begin{displaymath}
Pr[C_i]\leq \frac{(20+4\epsilon)k}{\epsilon}n^{-\left(1+\frac{\epsilon}{45\log d}\right)}.
\end{displaymath}
The theorem follows.
\end{theoremproof}

\begin{claimproof}{\ref{claim:spatial}}
First assume that we have a fixed $G_i$, i.e. the set of colourings is fixed.
Let $X_i$ be distributed uniformly over $\cup_{c'\in[k]}\Omega_i
(c,c')$ and let $Z_i$ be distributed uniformly over $\Omega_i$.
We couple these two variables . The coupling is done as follows. 
Choose u.a.r. a colour from $[k]$, and set $Z_i(v_i)$ equal to this
colour, e.g. let $Z_i(v_i)=q$. We have two cases.

If $q=c$, then we can have an identical coupling between $Z_{i}$ and $X_i$.
Otherwise, i.e. if $Z_{i}(v)\neq X_i(v_i)$, we can set $Z_{i}=H(X_i,q)$. 

\begin{claim}\label{claim:OldIsomorph}
In the later case, i.e. when $Z_i=H(X_i,q)$, $Z_i$ is distributed 
uniformly over the colouring of $G_i$ that assign the vertex $v_i$
the colour $q$.
\end{claim}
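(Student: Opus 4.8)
The plan is to show that, once we restrict attention to the colour of $v_i$, the map $H(\cdot,q)$ is an \emph{exact} bijection from the colourings that give $v_i$ the colour $c$ onto those that give $v_i$ the colour $q$; the claim then follows at once from Corollary \ref{corollary:IsoSmpl}. Write $\Omega_i^{c}=\cup_{c'\in[k]}\Omega_i(c,c')$ for the set of proper $k$-colourings of $G_i$ with $v_i$ coloured $c$, and similarly $\Omega_i^{q}$. The point to stress at the outset is that, in contrast with Lemma \ref{lemma:isomorphism} and the sets $S(c,c),S(q,c)$, here we impose \emph{no} condition on $u_i$: the statement concerns only the colour of $v_i$, so none of the pathological colourings enter and we should obtain an honest bijection rather than an $\alpha$-isomorphism.

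First I would verify that $H(\cdot,q)$ sends $\Omega_i^{c}$ into $\Omega_i^{q}$. Given $\sigma\in\Omega_i^{c}$, since $q\neq c=\sigma_{v_i}$ the disagreement graph $Q_{c,q}$ is well defined, and being connected and containing $v_i$ its $q$-switching flips $v_i$ from $c$ to $q$; moreover the $q$-switching of a proper colouring is again proper (exactly as argued in the proof of Lemma \ref{lemma:isomorphism}). Hence $H(\sigma,q)\in\Omega_i^{q}$.

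The crux is to exhibit an inverse, and I claim $H(\cdot,c)$ does the job, i.e. $H(H(\sigma,q),c)=\sigma$ for every $\sigma\in\Omega_i^{c}$. The key step is to show that the disagreement graph of $\sigma$ with respect to $(c,q)$ and the disagreement graph of $\sigma'=H(\sigma,q)$ with respect to $(q,c)$ are the \emph{same} vertex set. Indeed, inside $Q_{c,q}$ the switching merely permutes $c$ and $q$, so under $\sigma'$ these vertices are still coloured from $\{c,q\}$ and remain reachable from $v_i$ along $\{c,q\}$-coloured paths; outside $Q_{c,q}$ the colouring is untouched, and by maximality no vertex adjacent to $Q_{c,q}$ is coloured $c$ or $q$ under $\sigma$, hence none is under $\sigma'$ either. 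Consequently the $(q,c)$-disagreement graph of $\sigma'$ cannot extend beyond $Q_{c,q}$ and must equal it, whereupon $H(\cdot,c)$ swaps $q$ and $c$ back on precisely this set and recovers $\sigma$. Running the same argument with $c$ and $q$ interchanged gives $H(H(\tau,c),q)=\tau$ for $\tau\in\Omega_i^{q}$, so $H(\cdot,q)\colon\Omega_i^{c}\to\Omega_i^{q}$ is a bijection with inverse $H(\cdot,c)$.

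Finally, since $X_i$ is uniform over $\cup_{c'\in[k]}\Omega_i(c,c')=\Omega_i^{c}$ and $H(\cdot,q)$ is a bijection onto $\Omega_i^{q}$, Corollary \ref{corollary:IsoSmpl} yields that $Z_i=H(X_i,q)$ is uniform over $\Omega_i^{q}$, which is exactly the set of colourings of $G_i$ assigning $v_i$ the colour $q$. The one delicate point, and the step I would write out most carefully, is the invariance of the disagreement graph as a vertex set under the switching: it is this maximality argument that makes $H(\cdot,c)$ a genuine inverse and thereby upgrades the $\alpha$-isomorphism of Lemma \ref{lemma:isomorphism} to an exact bijection in the present $u_i$-free setting.
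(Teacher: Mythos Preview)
Your proposal is correct and follows essentially the same approach as the paper: both argue that $H(\cdot,q)$ is a bijection from $\Omega_i^{c}=\cup_{c'}\Omega_i(c,c')$ onto $\Omega_i^{q}=\cup_{c'}\Omega_i(q,c')$ with inverse $H(\cdot,c)$, and then invoke Corollary~\ref{corollary:IsoSmpl}. The paper simply defers to the arguments of Lemma~\ref{lemma:isomorphism} for properness, surjectivity and injectivity, whereas you spell out explicitly the key step that the disagreement graph is invariant (as a vertex set) under the switching, which is exactly what makes $H(\cdot,c)\circ H(\cdot,q)$ the identity; this is more detail than the paper gives but the route is the same.
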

The proof of the claim appear after the end of this proof.

Thus, in the case where  $Z_{i}(v)\neq X_i(v)$ and we set $Z_{i}=H(X_i,q)$ 
it is direct to see that $Z_i(u_i)\neq X_i(u_i)$ if and only if the event 
$A_i$ (as defined in the statement of Proposition \ref{prop:DecayOfPaths}) 
holds. Thus we get that 
\begin{displaymath}
Pr[X_i(u_i)\neq Z_i(u_i)|G_i]\leq Pr[A_i|G_i]
\end{displaymath}
From the above relation and Proposition
\ref{prop:DecayOfPaths} we get that
\begin{displaymath}
Pr[X_i(u_i)\neq Z_i(u_i)]\leq Pr[A_i]\leq n^{-1},
\end{displaymath}
The claim follows by using the Coupling Lemma.
\end{claimproof}

\begin{claimproof}{\ref{claim:OldIsomorph}}
We remind the reader that $X_i$ is distributed uniformly 
at random among the $k$-colourings of $G_i$ that assign 
the vertex $v_i$ the colour $c$. 
It suffice to show that the sets $\Omega_c=\cup_{c' \in [k]}
\Omega_i(c,c')$ and $\Omega_q=\cup_{c'\in[k]}\Omega_i(q,c')$ 
are isomorphic with bijection $H(\cdot,q):\Omega_c\to \Omega_q$. 
The arguments we need to show this are the same as those
we use in the proof of Lemma \ref{lemma:isomorphism}.

I.e. first we need to show that for any $\sigma\in \Omega_c$ it
holds that $H(\sigma,q)$ is a proper colouring of $G_i$. Clearly
this holds 
(see the first two paragraphs of the proof of Lemma \ref{lemma:isomorphism}
in section \ref{sec:lemma:isomorphism}.)
Second we need to show that the mapping $H(\cdot, q):\Omega_c\to \Omega_q$
is  {\em surjective}, i.e. for any $\sigma\in \Omega_q$ there is a 
$\sigma'\in \Omega_c$ such that $\sigma=H(\sigma',q)$. 
It is direct to see that such $\sigma'$ exists, moreover, it holds 
that $\sigma'=H(\sigma,c)$. 
Finally, we need to show that $H(\cdot, q)$ is {\em one-to-one}, 
i.e. there are no two $\sigma_1,\sigma_2\in \Omega_c$ such that
$H(\sigma_1,q)=H(\sigma_2,q)$. Using arguments similar to for
the  {\em surjective} case  it is direct to see that there cannot
be such a pair of colourings. The claim follows. 
\end{claimproof}

\subsection{Proof of Proposition \ref{prop:DecayOfPaths}}
\label{sec:prop:DecayOfPaths}

Consider the probability distribution ${\cal L}(G_i,k)$ (or ${\cal L}_{G_i,k}$)
induced by the following experiment. We have a graph $G_i$ and we
choose a $k$-colouring $\sigma$ u.a.r. from $\cup_{c'\in [k]}\Omega_i(c,c')$,
where $c\in [k]$. Choose u.a.r. a colour from $[k]\backslash\{c\}$, let
$q$ be that colour. Create the  graph of disagreement $Q_{c,q}$. 
If $w\in Q_{c,q}$, then $w$ is ``disagreeing'' otherwise it is  
``non-disagreeing''. By definition $v_i$ is always in the disagreement graph.

For a vertex $w$, we denote with $deg_i(w)$ its degree in the graph $G_i$.
Consider, also, the product measure ${\cal P}(G_i,k)$ such that
each vertex $w \in G_i$ is {\em disagreeing} with probability
$q_w=\frac{1}{k-deg_i(w)}$ and {\em non-disagreeing} with probability
$1-q_w$. Also, the vertex $v_i$ is disagreeing with probability 1. 
When $k\leq deg_i(w)$ we set $q_w=1$.

A {\em path of disagreement} in $G_i$ is any simple path which has
all its vertices disagreeing. The measure ${\cal P}(G_i,k)$ will
turn out to be very useful because it dominates ${\cal L}(G_i,k)$
in the following sense.

\begin{lemma}\label{lemma:StochOrderSimple}
Let $M=x_{1}, x_2, \ldots, x_l$ be a path in $G_i$ such that $v_i=x_1$.
Let the event $E=$''$M$ is a path of disagreement''. It holds that
\begin{displaymath}
{\cal L}_{G_i,k}[E]\leq {\cal P}_{G_i,k}[E].
\end{displaymath}
\end{lemma}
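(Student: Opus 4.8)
Lemma \ref{lemma:StochOrderSimple} compares two measures on a fixed path $M = x_1, \ldots, x_l$ with $v_i = x_1$. The event $E$ is "$M$ is a path of disagreement," meaning all vertices $x_1, \ldots, x_l$ are disagreeing. Under $\mathcal{L}_{G_i,k}$ (the actual colouring-derived measure), disagreeing means being in $Q_{c,q}$. Under $\mathcal{P}_{G_i,k}$ (product measure), each $w$ is independently disagreeing with probability $q_w = 1/(k - \deg_i(w))$.

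Let me think about the structure. Under $\mathcal{P}$, since vertices are independent and $v_i = x_1$ is always disagreeing, $\mathcal{P}[E] = \prod_{j=2}^{l} q_{x_j}$.

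Under $\mathcal{L}$, being in the disagreement graph requires a path along which colours alternate between $c$ and $q$ (actually $\sigma_v$ and $q$, but $\sigma_{v_i} = c$). So $E$ under $\mathcal{L}$ requires that $x_1, \ldots, x_l$ all lie in $Q_{c,q}$. Since $Q_{c,q}$ is the maximal connected subgraph of vertices reachable from $v_i$ via vertices coloured $c$ or $q$, and $M$ is a path starting at $v_i$, having all of $M$ disagreeing means each $x_j$ is coloured $c$ or $q$, AND connectivity holds. Actually if all $x_j$ along the path are coloured in $\{c, q\}$, then since $M$ is a path from $v_i$, automatically $x_j \in Q_{c,q}$. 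So $E$ under $\mathcal{L}$ is essentially: all $x_j$ coloured in $\{c,q\}$, with the bipartite alternation forced by properness.

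**The plan:**

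The plan is to prove this by establishing a stochastic domination: I would show that $\mathcal{L}_{G_i,k}[E] \le \prod_{j=2}^l q_{x_j} = \mathcal{P}_{G_i,k}[E]$. The heart of the argument is to analyze, under $\mathcal{L}$, the conditional probability that $x_{j+1}$ is disagreeing given that the prefix $x_1, \ldots, x_j$ is a path of disagreement, and to show this conditional probability is at most $q_{x_{j+1}} = 1/(k - \deg_i(x_{j+1}))$.

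First I would set up the telescoping decomposition
$$
\mathcal{L}_{G_i,k}[E] = \prod_{j=1}^{l-1} \mathcal{L}_{G_i,k}\bigl[x_{j+1}\text{ disagreeing} \,\big|\, x_1,\ldots,x_j \text{ disagreeing}\bigr],
$$
where the $j=1$ factor is handled by $v_i$ being disagreeing with certainty. Then I would condition on everything about the random colouring $\sigma$ except the colour of $x_{j+1}$: fix the colours of all other vertices consistently with $x_1, \ldots, x_j$ being in $Q_{c,q}$. Given this, $x_{j+1}$ is disagreeing (lies in $Q_{c,q}$, reached through $x_j$) essentially when $x_{j+1}$ receives the one specific colour in $\{c,q\}$ dictated by the bipartite alternation (the colour of $x_j$ is determined, so $x_{j+1}$ must take the opposite one of $c,q$). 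The number of colours available to $x_{j+1}$, given the colours of its neighbours, is at least $k - \deg_i(x_{j+1})$, and among these the favourable outcome is a single colour. Hence the conditional probability is at most $1/(k - \deg_i(x_{j+1})) = q_{x_{j+1}}$.

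**The main obstacle:**

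The delicate step is justifying the bound $1/(k - \deg_i(x_{j+1}))$ on the conditional probability rigorously, because conditioning on "$x_1,\ldots,x_j$ disagreeing" is a global event that correlates the colour of $x_{j+1}$ with the rest of the colouring in a nontrivial way. The clean way I would argue this is to fix a boundary condition: condition on the colours of all vertices in $G_i$ other than $x_{j+1}$ (call this $\tau$), restricted to those $\tau$ consistent with the conditioning event and with $x_{j+1}$ having at least one admissible colour. Under a uniform random proper colouring, conditioned on $\tau$, the colour of $x_{j+1}$ is uniform over the colours not used by its neighbours, a set of size at least $k - \deg_i(x_{j+1})$. The event "$x_{j+1}$ disagreeing" then corresponds to $x_{j+1}$ taking at most one prescribed colour (the alternation-forced colour), giving conditional probability at most $1/(k-\deg_i(x_{j+1}))$. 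Averaging over admissible $\tau$ preserves the bound, and one must check the edge case $k \le \deg_i(x_{j+1})$ where $q_w = 1$ makes the inequality trivial. Finally, multiplying the per-step bounds yields
$$
\mathcal{L}_{G_i,k}[E] \le \prod_{j=2}^{l} \frac{1}{k - \deg_i(x_j)} = \prod_{j=2}^l q_{x_j} = \mathcal{P}_{G_i,k}[E],
$$
which is exactly the claimed inequality, so the lemma follows.
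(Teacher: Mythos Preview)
Your proposal is correct and follows essentially the same approach as the paper: both use the chain-rule decomposition $\mathcal{L}[E]=\prod_j \mathcal{L}[E_j\mid \cap_{s<j}E_s]$ and bound each conditional factor by $1/(k-\deg_i(x_j))$ via a worst-case conditioning on the colours outside $x_j$, then identify the resulting product with $\mathcal{P}_{G_i,k}[E]$. Your treatment is in fact slightly more explicit than the paper's (you spell out the conditioning on the full boundary $\tau$ and the edge case $k\le\deg_i(x_j)$), but the argument is the same.
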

\begin{proof}
Let the event $E_i=$``$x_i$ is disagreeing'', for $i\leq l$, obviously
$E=\bigcap_{j=1}^lE_j$. It is direct that 
\begin{displaymath}
{\cal L}_{G_i,k}[E]={\cal L}[E_1]\prod_{j=2}^l {\cal L}_{G_i,k}[E_j|\cap_{s=1}^{j-1} E_s].
\end{displaymath}
The path of disagreement is specified by a random colouring
from $\cup_{c'\in [k]}\Omega_i(c,c')$, call this random colouring
$X$.  Let also $N_j$ be the vertices which are adjacent to the
vertex $x_j$. W.l.o.g. assume that $k>deg_i(x_j)$, for $j\leq l$.
Clearly it holds that
\begin{displaymath}
{\cal L}_{G_i,k}[E_j|\cap_{s=1}^{j-1} E_s]\leq 
\max_{\sigma \in \bigcup_{c'\in [k]}\Omega_i(c,c')}
{\cal L}_{G_i,k}[E_j|X(N_j)=\sigma_{N_j}]\leq \frac{1}{k-deg_i(x_j)}.
\end{displaymath}
Thus
\begin{displaymath}
{\cal L}_{G_i,k}[E]\leq \prod_{j=1}^{l}\frac{1}{k-deg(x_j)}
\leq {\cal P}_{G_i,k}[E].
\end{displaymath}
The lemma follows.
\end{proof}

\noindent
The following corollary is straightforward.

\begin{corollary}\label{cor:StochOrderMonotone}
Let $M=x_{1}, x_2, \ldots, x_l$ be a path in $G_i$. Let the event
$E=$''$M$ is a path of disagreement''. It holds that
\begin{displaymath}
{\cal P}_{G_i,k}[E]\leq {\cal P}_{G_{i+1},k}[E].
\end{displaymath}
\end{corollary}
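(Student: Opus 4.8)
The plan is to use a single structural fact: $G_i$ is $G_{i+1}$ with one edge removed, so no vertex degree can increase, together with the observation that the marginal disagreeing probability of a vertex is monotone in its degree. Since ${\cal P}_{G_i,k}$ is a product measure, the corollary then drops out by multiplying the per-vertex comparisons along the path $M$.

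First I would compare degrees. Because $G_i=G_{i+1}\setminus\{v_i,u_i\}$, we have $deg_i(w)\leq deg_{i+1}(w)$ for every vertex $w$ (with equality except at the two endpoints $v_i,u_i$, whose degree drops by exactly one). I would also note that $M$, being a path in $G_i\subseteq G_{i+1}$, is a path in $G_{i+1}$ as well, so the event $E$ is meaningful under both measures.

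Next I would record the monotonicity of the marginals. Writing $q_w^{(i)}$ for the probability that $w$ is disagreeing under ${\cal P}_{G_i,k}$, the definition can be put in the uniform closed form $q_w^{(i)}=1/\max\{1,\,k-deg_i(w)\}$; this equals $1/(k-deg_i(w))$ when $deg_i(w)<k$ and equals $1$ once $k\leq deg_i(w)$, so it matches the stated convention, while the distinguished vertex $x_1=v_i$ has $q_{x_1}^{(i)}=1$ in both measures. Since the denominator $\max\{1,\,k-t\}$ is non-increasing in $t$, the map $t\mapsto 1/\max\{1,\,k-t\}$ is non-decreasing; combined with $deg_i(w)\leq deg_{i+1}(w)$ this gives the per-vertex inequality $q_w^{(i)}\leq q_w^{(i+1)}$ for every $w$.

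Finally, because ${\cal P}_{G_i,k}$ is a product measure and $M=x_1,\dots,x_l$ is a \emph{simple} path (so its vertices are distinct), the event $E$ that all of $x_1,\dots,x_l$ are disagreeing factorises as
\begin{displaymath}
{\cal P}_{G_i,k}[E]=\prod_{j=1}^l q_{x_j}^{(i)},
\end{displaymath}
and the analogous identity holds for $G_{i+1}$. Multiplying the inequalities $q_{x_j}^{(i)}\leq q_{x_j}^{(i+1)}$ over $j=1,\dots,l$ yields ${\cal P}_{G_i,k}[E]\leq {\cal P}_{G_{i+1},k}[E]$, as required. I expect no genuine obstacle: the entire content is the single-edge degree comparison $deg_i\leq deg_{i+1}$ together with the monotonicity of $t\mapsto 1/\max\{1,\,k-t\}$, the degree cap being cleanly absorbed into the $\max$ in the denominator (in the saturated regime $deg_i(w)\geq k-1$ one has $q_w^{(i)}=1$, which forces $q_w^{(i+1)}=1$ as well, so the per-vertex inequality is trivially satisfied there).
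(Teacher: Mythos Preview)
Your argument is correct and is exactly the one the paper has in mind: the paper gives no proof at all, simply declaring the corollary ``straightforward'' immediately after Lemma~\ref{lemma:StochOrderSimple}, and your write-up (edge deletion $\Rightarrow$ $deg_i\le deg_{i+1}$ $\Rightarrow$ per-vertex marginal monotonicity $\Rightarrow$ product over a simple path) is precisely the intended one-line justification spelled out in full. The only minor imprecision is the clause that $x_1=v_i$ has probability $1$ ``in both measures''---strictly speaking the distinguished vertex of $\mathcal P_{G_{i+1},k}$ is $v_{i+1}$, not $v_i$---but this is an ambiguity inherited from the paper's own loose convention and does not affect the substance of the argument.
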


\noindent
We, also, need the following lemma for the proof of Proposition
\ref{prop:DecayOfPaths}.

\begin{lemma}\label{lemma:prob-disagreementpath}
Consider the product measure ${\cal P} (G_{n,d/n},k)$, for $k\geq
(2+\epsilon)d$, for fixed $\epsilon>0$. Let $\pi$ be a permutation of $l+1$ 
vertices of $G_{n,d/n}$, for $0\leq l \leq \Theta(\log ^2 n)$. 
There exists $d_0(\epsilon)$, such that for $d>d_0(\epsilon)$ it holds that
\begin{displaymath}
{\cal P}_{G_{n,d/n},k}[\pi \textrm{ is a path of disagreement}]\leq 
\left ( \frac{d}{n}\right)^l \cdot
\left(  \left(\frac{1}{(1+\epsilon/4)d}+3n^{-0.95} \right )^l+2n^{-\log^4 n} \right).
\end{displaymath}
\end{lemma}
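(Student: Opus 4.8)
The plan is to read $\mathcal{P}_{G_{n,d/n},k}$ as the \emph{joint} law of the random graph $G_{n,d/n}$ and the independent disagreement configuration, and to factor the target event accordingly. Writing $\pi=(x_0,x_1,\dots,x_l)$ with $x_0=v_i$ (the vertex that is disagreeing with probability $1$, as is the case in the application, where paths emanate from $v_i$), the event ``$\pi$ is a path of disagreement'' is the intersection of (i) all $l$ consecutive edges $\{x_{j-1},x_j\}$ being present in $G_{n,d/n}$, and (ii) every vertex of $\pi$ being disagreeing. Since the edges of $G_{n,d/n}$ are independent $\mathrm{Ber}(d/n)$ variables, conditioning on (i) contributes exactly the prefactor $(d/n)^l$. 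Given the graph, the disagreement indicators are mutually independent with marginals $q_w=\tfrac{1}{k-\deg(w)}$ (capped at $1$), so it remains to bound $E\big[\prod_{j=1}^{l} q_{x_j}\ \big|\ \text{(i)}\big]$, where the expectation is over the residual edges of $G_{n,d/n}$ and the $l$ factors (one per vertex other than $v_i$) match the exponent in the claim. The whole difficulty is the coupling between the weights $q_{x_j}$ and the random degrees $\deg(x_j)$, which under (i) are distributed as $c_j+\Bin(n-O(l),d/n)$ with $c_j\in\{1,2\}$.

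First I would fix a poly-logarithmic threshold $T$ of order $\log^5 n$ and split on $\mathcal{H}=\{\deg(x_j)\le T\ \text{for all }j\}$. On $\mathcal{H}^c$ I bound $\prod_j q_{x_j}\le 1$ and combine a union bound with a Chernoff tail (Corollary~2.4 in \cite{janson}): since $\deg(x_j)$ is stochastically dominated by $2+\Bin(n-1,d/n)$, the upper-tail bound gives $Pr[\deg(x_j)>T]\le (e d/T)^{T}=\exp(-\Theta(\log^5 n\cdot\log\log n))\le n^{-\log^4 n}$, so the contribution of $\mathcal{H}^c$ is at most $(d/n)^l\cdot 2n^{-\log^4 n}$ — precisely the additive term in the statement. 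On $\mathcal{H}$ the task reduces to a per-vertex estimate, and here is the analytic heart: because $k\ge(2+\epsilon)d$ and $d$ is large, the truncated weight satisfies $E\big[q_{x_j}\mathbf{1}\{\deg(x_j)\le T\}\big]\le \tfrac{1}{(1+\epsilon/4)d}$. Indeed, the bulk of $\Bin(n-1,d/n)\approx\Po(d)$ concentrates near $d$ and contributes $\approx\tfrac{1}{k-d}=\tfrac{1}{(1+\epsilon)d}$, while the residual gap $\Theta(\epsilon/d)$ up to $\tfrac{1}{(1+\epsilon/4)d}$ comfortably absorbs both the $O(1/d^{2})$ variance correction (from convexity of $x\mapsto\tfrac{1}{k-x}$) and the $O(e^{-c_\epsilon d})$ mass of the moderate/large-degree tail, provided $d>d_0(\epsilon)$. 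This is exactly where the generous constant $\epsilon/4$ (versus the $\epsilon$ in $k$) and the hypothesis ``$d$ sufficiently large'' are both spent.

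The main obstacle is that $E[\prod_j q_{x_j}\mathbf{1}_{\mathcal H}]$ is the expectation of a product of \emph{dependent} weights, so the clean per-vertex bound does not tensorise outright. The dependence enters only through chords — edges of $G_{n,d/n}$ joining two vertices of $\pi$ other than the $l$ path edges; in the absence of any chord the residual degrees $\deg(x_j)$ are governed by disjoint sets of external edges, hence independent, and the product factorises into $\big(\tfrac{1}{(1+\epsilon/4)d}\big)^l$. I would therefore condition on the (overwhelmingly likely) absence of chords and charge the correction to the $+3n^{-0.95}$ slack in each factor: a fixed pair of path vertices is joined with probability $d/n$, there are $O(l^2)=O(\log^4 n)$ such pairs, so the total chord probability is $O(n^{-1}\log^4 n)=o(n^{-0.95})$; this additive dependence error, together with the $\Bin$-versus-$\Po$ discrepancy and the $O(l/n)$ shift from conditioning on (i), is dominated by inflating each factor from $\tfrac{1}{(1+\epsilon/4)d}$ to $\tfrac{1}{(1+\epsilon/4)d}+3n^{-0.95}$. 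Multiplying the $(d/n)^l$ prefactor by the $\mathcal H$ and $\mathcal H^c$ contributions then yields the stated bound. The delicate bookkeeping I expect to be the crux is keeping the two thresholds consistent — $T\asymp\log^5 n$ must be high enough to push the extreme tail below $n^{-\log^4 n}$, yet the truncated per-vertex expectation must still return the clean constant $\tfrac{1}{(1+\epsilon/4)d}$ — and verifying that the chord and conditioning errors are genuinely $o(n^{-0.95})$ rather than merely $o(1)$.
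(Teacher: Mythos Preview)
Your overall plan---factor out $(d/n)^l$ from the path edges, truncate on a polylog degree threshold to produce the $2n^{-\log^4 n}$ term, and establish a per-vertex bound of $\frac{1}{(1+\epsilon/4)d}$ for large $d$---matches the paper exactly. The difference is in how you decouple the $l$ factors, and there the argument as written has a genuine gap.

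You handle the dependence by conditioning globally on the event ``no chords among the $l+1$ path vertices'', observe that this event fails with probability $O(l^2 d/n)=o(n^{-0.95})$, and then assert that this \emph{additive} error is ``dominated by inflating each factor from $\tfrac{1}{(1+\epsilon/4)d}$ to $\tfrac{1}{(1+\epsilon/4)d}+3n^{-0.95}$''. That assertion is false for the large end of the range $l\le\Theta(\log^2 n)$. When $l\asymp\log^2 n$, the target $\bigl(\tfrac{1}{(1+\epsilon/4)d}+3n^{-0.95}\bigr)^l$ is of order $((1+\epsilon/4)d)^{-l}=n^{-\Theta(\log n)}$, i.e.\ superpolynomially small, whereas your additive chord error is only $o(n^{-0.95})$. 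So the bound your argument actually delivers is $\bigl(\tfrac{1}{(1+\epsilon/4)d}\bigr)^l+o(n^{-0.95})$, which is strictly weaker than the claim and, crucially, useless in the application: in the proof of Proposition~\ref{prop:DecayOfPaths} the bound is multiplied by $n^{l-1}(d/n)^l\cdot n=d^l$, and $d^{\log^2 n}\cdot n^{-0.95}$ diverges.

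The paper avoids this by proving the estimate \emph{by induction on $l$}: at each step one bounds the probability that the new vertex $x_{l_0+1}$ has a back-edge into $\{x_1,\dots,x_{l_0-1}\}$ by $n^{-0.95}$ (using the degree cap from $Q_\pi$), declares the vertex disagreeing on that event, and otherwise gets the clean per-vertex bound. This converts the chord error into a per-step additive $n^{-0.95}$, which legitimately rolls into the $l$-th power. Your global-conditioning approach can be repaired---for instance, on the chord event retain the factorisation over the $l-2$ unaffected vertices rather than bounding $\prod_j q_{x_j}\le 1$---but that refinement is essential, not cosmetic, and is absent from your sketch.
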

\begin{proof}
Call $\pi$ the path that corresponds to the permutation $\pi$, e.g.
$\pi=(x_1, \ldots x_{l+1})$. Let $\Gamma$ be an indicator variable such
that $\Gamma=1$ if $\pi$ is a path of disagreement and $\Gamma=0$,
otherwise. Let, also,  $I_{\pi}$ be the event that there exists the path 
$(x_1,\ldots,x_{l+1})$ in $G_{n,d/n}$. It holds that 
\begin{displaymath}
E_{\cal P}[\Gamma]=\left (\frac{d}{n}\right)^l 
\cdot E_{\cal P}[\Gamma| I_{\pi}].
\end{displaymath}
Let $Q_{\pi}$ denote the event that the vertices in $\pi$ have degree
less than $\log^6 n$. Using Chernoff bounds it is easy to show that
$Pr[Q_{\pi}|I_{\pi}]\geq 1-n^{-\log^4(n)}$. Also, it holds that
\begin{displaymath}
\begin{array}{lcl}
E_{\cal P}[\Gamma| I_{\pi}]
&=& E_{\cal P}[\Gamma| I_{\pi},Q_{\pi}]Pr[Q_{\pi}|I_{\pi}]+ E_{\cal P}
[\Gamma| I_{\pi},\bar{Q}_{\pi}]Pr[\bar{Q}_{\pi}|I_{\pi}]
\\ \vspace{-.3cm}\\
 &\leq& E_{\cal P}[\Gamma| I_{\pi},Q_{\pi}]+n^{-\log^4(n)}.
\end{array}
\end{displaymath}

\noindent
It suffice to show that for  $0\leq l \leq \Theta(\log^2 n) $
and sufficiently large $n$ it holds that 
\begin{equation}\label{eq:Gnpexppath}
E_{\cal P}[\Gamma | I_{\pi}, Q_{\pi}]\leq
\left( \frac{1}{(1+\epsilon/4)d}+3n^{-0.95}\right)^l.
\end{equation}
We show (\ref{eq:Gnpexppath}) by induction on $l$. Clearly for
$l=0$ the inequality in (\ref{eq:Gnpexppath}) is true. Assuming
that (\ref{eq:Gnpexppath}) holds for $l=l_0$, we will show that
it holds for $l=l_0+1$, as well.

For a vertex $w$, we let $D(w)$ denote the event that this vertex
is disagreeing. Given that all vertices in $\{x_1, \ldots,x_{l_0}\}$
are disagreeing we let $deg_{out}(x_i)$ be the number of vertices
in  $V \backslash \{x_1, \ldots, x_{l_0}\} $ that are adjacent to
$x_i$, for $1\leq i\leq l_0$. If $deg_{out}(x_i)=t$, then all the
possible subsets of $V\backslash \{x_1, \ldots, x_{l_0}\}$ with
cardinality $t$ are equiprobably adjacent to $x_i$. This implies
that 
\begin{displaymath}
Pr[x_{l_0+1} \textrm{ is adjacent to } x_i]=\frac{E[deg_{out}(x_i)]}{n-l_0} \qquad \textrm{for $0\leq i\leq l_0-1$.}
\end{displaymath}

\noindent
Let $deg_{in}(x_{l_0+1})$ be the number of neighbours of $x_{l_0+1}$ 
in $\{x_1, \ldots, x_{l_0-1}\}$.
By the linearity of expectation  we have
\begin{equation}\label{eq:expectedindegree}
E[deg_{in}(x_{l_0+1})|I_{\pi}, Q_{\pi}]\leq \frac{l_0}{n-l_0}E[deg_{out}(x_i)|I_{\pi},Q_{\pi}]\leq
n^{-0.95}.
\end{equation}
We make the simplifying assumption that if the vertex $x_{l_0+1}$
is adjacent to any vertex in $\{x_{1},\ldots, x_{l_0-1}\}$, then
it is disagreeing, regardless of the number of adjacent vertices
outside the path. By (\ref{eq:expectedindegree}) and the Markov 
inequality, we get that
\begin{displaymath}
Pr[deg_{in}(x_{l_0+1})> 0|I_{\pi}, Q_{\pi} ]\leq E[deg_{in}(x_{l_0+1})|I_{\pi}, Q_{\pi}]  \leq n^{-0.95}.
\end{displaymath}

\noindent
We denote with $E$ the event  that ``{\em $(x_1, \ldots, x_{l_0})$
is a path of disagreement, $deg_{in}(x_{l_0+1})=0$, the edge $\{x_{l_0},
x_{l_0+1}\}$ appears in $G_{n,d/n}$ and the event $Q_{\pi}$ holds}''.  
It is easy to show that $Pr[E]\geq 1-2n^{-0.95}$.
It holds that
\begin{displaymath}
\begin{array}{lcl}
\displaystyle Pr[D(x_{l_0+1})|E]&\leq&\displaystyle \sum_{j=0}^{n}Pr[D(x_{l_0+1})|E, deg_{out}(x_{l_0+1})=j]Pr[deg_{out}(x_{l_0+1})=j|E]
\\ \vspace{-.3cm}\\
&\leq &\displaystyle (1+3n^{-0.95}) \sum_{j=0}^{k-1}\frac{1}{k-j} {n \choose j}(d/n)^j(1-d/n)^{n-j}+
\\ \vspace{-.3cm}\\
&&\displaystyle +(1+3n^{-0.95})\sum_{j=k}^{n}{n \choose j}(d/n)^j(1-d/n)^{n-j}
\\ \vspace{-.3cm}\\
&\leq & q(k, d)+3n^{-0.95}
\end{array}
\end{displaymath}
where 
\begin{displaymath}
q(k,d)=\sum_{j=0}^{k-1}\frac{1}{k-j} {n \choose j}(d/n)^j(1-d/n)^{n-j}+\sum_{j=k}^{n}{n \choose j}(d/n)^j(1-d/n)^{n-j}.
\end{displaymath}
The following inequalities are straightforward.
\begin{displaymath}
\begin{array}{lcl}
q(k,d)& \leq & \displaystyle \sum_{j=0}^{k/2}\frac{1}{k-j} {n \choose j}(d/n)^j(1-d/n)^{n-j}+\sum_{j=k/2+1}^{n}{n \choose j}(d/n)^j(1-d/n)^{n-j}
\\ \vspace{-.3cm}\\
&\leq &\displaystyle \frac{2}{k}+Pr[B(n,d/n)\geq (1+\epsilon/2)d+1].
\end{array}
\end{displaymath}
Using Chernoff bounds, i.e. Corollary 2.4 from \cite{janson} we
get that 
\begin{displaymath}
Pr[B(n,d/n)\geq (1+\epsilon/2)d+1]\leq \exp(-c'd)
\end{displaymath}
where $c'=\log \epsilon-1+\frac{1}{1+\epsilon}>0$. It is clear that taking 
$k\geq (2+\epsilon)d$ for fixed $\epsilon>0$, there is  sufficiently large $d_0(\epsilon)$
such that for $d>d_0(\epsilon)$ it holds that
\begin{displaymath}
q(k,d)\leq \frac{2+2/d}{k}\leq \frac{1}{(1+\epsilon/4)d}.
\end{displaymath}
The lemma follows.
\end{proof}

\begin{propositionproof}{\ref{prop:DecayOfPaths}}
Let the event $B=$``$v_i$ and $u_i$ are connected through a path
of disagreement of length at most $\log^2 n$''. Also, let the event
$C=$``$v_i$ and $u_i$ are connected through a path of length greater
than $\log^2 n$''. Clearly it holds that
\begin{displaymath}
{\cal L}_{G_i,k}[A]\leq {\cal L}_{G_i,k}[B]+{\cal L}_{G_i,k}[C],
\end{displaymath}
where $L_{G_i,k}$ is the probability distribution we defined at
the begining of this section. When there is no danger of confusion
we drop the subscript $G_i, k$
The proposition will follow by calculating the probabilities
${\cal L}[B]$ and ${\cal L}[C]$.

Consider an enumeration of all the permutations of $l$ vertices
in $G_{i}$ with first the vertex $v_i$ and last the vertex $u_i$.
Let $\pi_0(l),\pi_1(l),\ldots$ be the permutations in the order
they appear in the enumeration. Let $\Gamma_j(l)$ be the random
variable such that
\begin{displaymath}
\Gamma_j(l)=\left\{
\begin{array}{lcl}
1 &\qquad & \textrm{the path that corresponds to $\pi_i(l)$ is a path of disagreement}\\
0 && \textrm{otherwise}.
\end{array}
\right.
\end{displaymath}
Let, also, $\Gamma(l)=\sum_j \Gamma_j(l)$. It is easy to see that
the number of sumads in the previous sum are at most $n^{l-1}$. 
Towards computing ${\cal L}(C)$, we need to calculate the following
expectation
$$E_{\cal L}\left [\sum_{l= l_0}^{\log^2n}\Gamma(l)\right ],$$
where $l_0=\frac{\log n}{9\log d}$. 
However, we have to take into consideration that we have conditioned
that $v_i$ and $u_i$ are at distance at least $\frac{\log n}{9\log d}$.
To this end, it is direct to show that if $Z$ the number of paths
of length at most $\frac{\log n}{9\log d}-1$ between two vertices
of $G_{i}$, then 
\begin{displaymath}
E[Z]\leq \sum_{l\leq \frac{\log n}{9\log d}-1}n^{l-1}\left(\frac{d}{n} \right)^l\leq n^{-9/10}.
\end{displaymath}
Thus, letting $\hat{p}$ be the probability of the event that two vertices
are at distance is at least $\frac{\log n}{9\log d}$ the Markov inequality
suggests that $\hat{p} \geq 1-n^{-9/10}$.
Using Lemma \ref{lemma:StochOrderSimple}, Lemma \ref{lemma:prob-disagreementpath}
and Corollary \ref{cor:StochOrderMonotone} we get that
\begin{displaymath}
\begin{array}{lcl}
\displaystyle E_{\cal L}\left[\sum_{l=l_0}^{\log^2n}\Gamma(l) \right]
&\leq& \displaystyle  \hat{p}^{-1}\sum_{l= l_0}^{\log^2n} n^{l-1}
\left(\frac{d}{n} \right)^{l}\left(  \left( \frac{1}{(1+\epsilon/4)d}+3n^{-0.95} \right )^l+2n^{-\log^4 n} \right)
\\ \vspace{-.3cm}\\
&\leq& \displaystyle  \frac{1}{n\hat{p}} 
\sum_{l=l_0}^{\log^2n}
\left( \left( (1+\epsilon/4)^{-1}+3dn^{-0.95} \right )^l+2d^{l}n^{-\log^4 n} \right).
\end{array}
\end{displaymath}
Note that $d^ln^{-\log^4 n}=O(n^{-\log^4n})$, for $l=O(\log^2 n)$.
Thus, for sufficiently large $n$ and $d$ we get that
\begin{displaymath}
\begin{array}{lcl}
\displaystyle E_{\cal L}\left[\sum_{l=l_0}^{\log^2n} \Gamma(l) \right]
&\leq& \displaystyle \sum_{l= l_0}^{\log^2n}\displaystyle \frac{3}{2n}\left(1+\frac{\epsilon}{5}\right)^{-l}
\\ \vspace{-.3cm}\\
&\leq & \displaystyle \frac{3}{2n}\left(1+\frac{\epsilon}{5}\right)^{-l_0}\frac{1}{1-(1+\epsilon/5)^{-1}}\leq
\frac{15+3\epsilon}{2\epsilon}n^{-\left(1+\frac{\epsilon}{45\log d} \right)}.
\end{array}
\end{displaymath}
Using the Markov inequality we get that
\begin{equation}\label{eq:ProbB}
{\cal L}[B]\leq E_{\cal L}\left[\sum_{l\geq l_0}^{\log^2n} \Gamma(l) \right]
\leq \frac{15+3\epsilon}{2\epsilon}n^{-\left(1+\frac{t}{45\log d} \right)}.
\end{equation}

Let $P(l)$ be the number of paths of disagreement between $v_i$ and
{\em any} vertex of $G_i$, that have length $l$. It is direct that
\begin{displaymath}
{\cal L}[C]\leq Pr\left[P(\log^2n)>0\right].
\end{displaymath}
The above inequality follows by noting that so as to have a path of
disagreement connecting $v_i$ and $u_i$ which has length at least $l$, 
we should have some path of disagreement of length $l$ leaving $v_i$.
Using Markov's inequality we get that 
\begin{displaymath}
\begin{array}{lcl}
Pr\left[P(\log^2n)>0\right]&\leq& \displaystyle  E_{\cal L}\left[P(\log^2n) \right]
\\ \vspace{-.3cm} \\
&\leq & \displaystyle 
\hat{p}^{-1}n^{\log^2n-1} \left(\frac{d}{n} \right)^{\log^2n}
\left(\left( \frac{1}{(1+\epsilon/4)d}+3n^{-0.95} \right )^{\log^2n}+2n^{-\log^4 n} \right)
\\ \vspace{-.3cm} \\
&\leq & \displaystyle \frac{1}{\hat{p}n} \left( 1+\epsilon/5 \right )^{-\log^2n}=\Theta\left(n^{-\frac{\epsilon}{10}\log n}\right).
\end{array}
\end{displaymath}
The proposition follows.
\end{propositionproof}

\section{Proofs}
\inappendix{
\subsection{Lemma \ref{lemma:BijectionTVD}}\label{sec:lemma:BijectionTVD}

\begin{proof}
Let $x$ be a r.v. distributed as in $\nu$. The proof of this lemma
is going to be made by coupling $x$ and $z'$. In particular, we
show that there is a coupling of $x$ and $z'$ such that 
\begin{displaymath}
Pr[x\neq z']\leq \alpha.
\end{displaymath}
Then the lemma will follow by using Coupling Lemma \cite{coupling-lemma}.
Let $(\Omega'_1, \Omega'_2)$ be the isomorphic pair of the 
$\alpha$-isomorphism between $\Omega_1$ and $\Omega_2$.
Observe that $|\Omega'_i|\geq (1-\alpha)|\Omega_i|$, for $i=1,2$.
Also, it holds that 
\begin{displaymath}
Pr[z'=\sigma|z\in \Omega'_1]=\frac{1}{|\Omega'_2|}  \qquad \forall \sigma \in \Omega'_2.
\end{displaymath}
Note that when we restrict the input of the $\alpha$-function $H$
only to members of $\Omega'_1$, then $H$ is by definition a 
bijection  between the sets $\Omega'_1$ and $\Omega'_2$. The above 
equality then follows by using Corollary \ref{corollary:IsoSmpl} and 
noting that conditional on the fact that $z\in \Omega'_1$, $z$ is 
distributed uniformly over $\Omega'_1$. 
Also it is easy to get that
\begin{displaymath}
Pr[x=\sigma |x\in \Omega'_2]=\frac{1}{|\Omega'_2|} \qquad \forall \sigma \in \Omega'_2.
\end{displaymath}
Let
\begin{equation}\label{eq:ProbOfE}
p=\min\{Pr[x\in \Omega'_2], Pr[z\in \Omega'_1]\}\geq 1-\alpha.
\end{equation}
The above inequality follows from the assumption that $\Omega_1$ and $\Omega_2$
are $\alpha$-isomorphic.

It is clear that we can have a coupling between $x$, $z$ and $z'$
such that the event $E=$``$z\in \Omega'_1$ and $x\in \Omega'_2$''
holds with probability $p$ (see (\ref{eq:ProbOfE})). In this
coupling, if the event $E$ holds, then $x$ and $y$ are distributed
uniformly over $\Omega'_2$ and $\Omega'_1$, respectively. 
This means that we can make an extra arrangement such that when
$E$ holds to have $x=H(z)$, as well. Since $z'=H(z)$, it is direct
that when the event $E$ holds we, also, have that $x=z'$. We conclude
that it the above coupling it holds that $Pr[x=z']\geq Pr[E]$. Thus,
$$Pr[x\neq z']\leq 1- Pr[E]=1-p=\alpha.$$
The lemma follows.
\end{proof}

\subsection{Lemma \ref{lemma:isomorphism}}\label{sec:lemma:isomorphism}
\begin{proof}
First we are going to show that for any $\sigma \in S(c,c)$, it
holds that $H(\sigma,q)$ is a proper colouring of $G$. Assume the
contrary, i.e. that there is $\sigma \in S(c,c)$ such that
$H(\sigma, q)$ is a non-proper colouring, i.e. there is a monochromatic
edge $e$.
Let $Q_{\sigma_v,q}$ be the disagreement graph specified by 
$\sigma$ and $q$. It is direct that the monochromatic edge is
either incident to two vertices in $Q_{\sigma_v, q}$ or to
some vertex in $Q_{\sigma_v, q}$ and some vertex outside the
disagreement graph.

It is direct that $H(\sigma,q)$ does not cause any monochromatic
edge between two vertices in $Q_{\sigma_v,q}$. To see this, note
that the disagreement graph is bipartite and $\sigma$ specifies
exactly one colour for each part of the graph, while $H(\sigma,q)$
switches the colours of the two parts.
On the other hand, $H(\sigma,q)$ cannot cause any monochromatic
edge between a vertex in $Q_{\sigma_v,q}$ and some vertex outside
the disagreement graph.
This follows by the fact that the disagreement graph is maximal.
Thus, there is no edge $w$ outside $Q_{\sigma_v,q}$ such that 
$\sigma_w\in \{q,\sigma_v\}$ while at the same time $w$ is adjacent
to some vertex in $Q_{\sigma_v,q}$.

Also, it is direct to show that for any $\sigma\in S(c,c)$,
it holds that $H(\sigma,q)\in S(q,c)$.  This follows by the
definition of the sets $S(c,c)$ and $S(c,q)$. It remains to 
show that $H(\cdot,q):S(c,c)\to S(q,c)$ is a bijection.

We show that $H(\cdot, q)$ has range the set $S(q,c)$, i.e. it is
{\em surjective} map,
ie. for any $\sigma \in S(q,c)$ there is $\sigma'\in S(c,c)$ 
such that $\sigma=H(\sigma',q)$. It is direct to see that such 
$\sigma'$ exists, moreover, it holds $\sigma'=H(\sigma,c)$.

Finally, we need to show that $H(\cdot, q)$ is {\em one-to-one}, 
i.e. there are no two $\sigma_1,\sigma_2\in S(c,c)$ such that
$H(\sigma_1,q)=H(\sigma_2,q)$. Using arguments similar to those
in the previous paragraph it is direct to see that there cannot
be such a pair of colourings. 

Thus, since $H(\cdot,q):S(c,c)\to S(q,c)$ is surjective and
one-to-one it is a bijection. The lemma follows.
\end{proof}

\subsection{Theorem \ref{theorem:STEPAccuracy}}\label{sec:theorem:STEPAccuracy}

\begin{proof}
Let $X$ be the input of {\tt STEP}, i.e. a random $k$-colouring
of $G$. Let $Y$ be equal to the colouring that is returned by the
algorithm. Also, let $Z$ be a random variable distributed as in
$\nu$. The proof of the theorem is going to be made by coupling
$Z$ and $Y$ and by showing that in this coupling it holds that
\begin{displaymath}
Pr[Z\neq Y]\leq \alpha.
\end{displaymath}

\noindent
The reader should observe that for any $q,c\in [k]$ such that 
$c\neq q$, it holds that
\begin{equation}\label{eq:equationA}
Pr[Z(v)=q|Z(u)=c]=Pr[X(v)=q|X(u)=c,X(v)\neq c]=\frac{1}{k-1}
\end{equation}
and 
\begin{equation}\label{eq:equationB}
Pr[X(v)=X(u)=c|\textrm{$X$ is bad }]=\frac{1}{k},
\end{equation}
due to symmetry. Also, it is direct to show that
\begin{equation}\label{eq:equationC}
Pr[Y(v)=q|X(u)=c]=\frac{1}{k-1}
\end{equation}
for every $q\in [k]\backslash\{c\}$.
Now we are going to construct the coupling. We need to involve the
variable $X$, the input of {\tt STEP}, in this coupling. First, set
$Z(u)=X(u)$ and then set $Z(v)=Y(v)$. Using the above observations
it is straightforward to show that $Z(u)$ and $Z(v)$ are set,
respectively, according to the appropriate distribution (due to
(\ref{eq:equationA}) and (\ref{eq:equationC})).

We reveal the values of $X(v)$, $X(u)$ and $Y(v)$. By the above
coupling we also have the values of $Z(v)$ and $Z(u)$. We consider
two cases, depending on whether $X$ is a good or a bad colouring.

If $X(v)\neq X(u)$, i.e. $X$ is good, then we have $X=Y$ and we
can set directly $X=Z$. Thus, for the coupling it holds
\begin{displaymath}
Pr[Y\neq Z|\textrm{$X$ is good}]=0.
\end{displaymath}

\noindent
If $X(u)=X(v)$, then w.l.o.g. we can
assume $X(u)=X(v)=c$, for some $c\in [k]$. In this case, we
choose whether $X\in S(c,c)$ or not. For this choice, the 
Assumption \ref{assm:isomorphism} suggests that
\begin{displaymath}
Pr[X\in S(c,c)|X(u)=X(v)=c]\geq 1-\alpha. 
\end{displaymath}
Similarly for $Z$, assume that $Z(u)=c$ and $Z(v)=q$, with $c\neq
q$. Again Assumption \ref{assm:isomorphism}  suggests that 
\begin{displaymath}
Pr[Z\in S(q,c)|X(u)=X(v)=c]\geq 1-\alpha. 
\end{displaymath}
Let the event $E=$``$X\in S(c,c)$ and $Z\in S(q,c)$''. Having set
$X(v),X(u)$, $Z(v)$, $Z(u)$, $Y(v)$, the two previous inequalities
suggest that we can couple  $X$ and $Z$ such that the probability
of the event $E$ to occur is at least $1-\alpha$.

\begin{claim}\label{claim:YUniformE}
Conditional on the event $E$, $Y$ is distributed uniformly over
$S(q,c)$.
\end{claim}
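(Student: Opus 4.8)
The plan is to deduce the claim from the uniformity of the input $X$ on $S(c,c)$, transported across the bijection furnished by Lemma \ref{lemma:isomorphism}. I will work throughout under the conditioning already in force in the surrounding argument, namely $X(u)=X(v)=c$ together with a fixed choice of the colour $q\in[k]\backslash\{c\}$.

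First I would identify the law of $X$. Since $X$ is a uniformly random $k$-colouring of $G$ and {\tt STEP} draws $q$ from $[k]\backslash\{X(v)\}$ independently of $X$, conditioning on $X(u)=X(v)=c$ and on the value of $q$ leaves $X$ uniformly distributed over $\Omega(c,c)$. Because $S(c,c)\subseteq\Omega(c,c)$, restricting this uniform law to the sub-event $\{X\in S(c,c)\}$ keeps it uniform, so $X$ conditioned on $\{X\in S(c,c)\}$ is uniform over $S(c,c)$.

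The delicate point, and the step I expect to be the main obstacle, is to check that appending the second half of $E$, namely $\{Z\in S(q,c)\}$, does not bias $X$. Here I would lean on the order in which the coupling reveals randomness: it first couples the two membership indicators $\mathbf{1}[X\in S(c,c)]$ and $\mathbf{1}[Z\in S(q,c)]$ (this is exactly the coupling of Bernoulli events that delivers $Pr[E]\geq 1-\alpha$ from Assumption \ref{assm:isomorphism}), and only afterwards reveals the full colourings conditioned on those outcomes. Consequently, conditioned on $E$, the colouring $X$ is still drawn from its own conditional law, uniform over $S(c,c)$, with no residual dependence on the outcome of $Z$'s indicator. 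Thus $X\mid E$ is uniform over $S(c,c)$.

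Finally I would transport this through the map. In the bad case at hand {\tt STEP} sets $Y=H(X,q)$, and by Lemma \ref{lemma:isomorphism} the map $H(\cdot,q)\colon S(c,c)\to S(q,c)$ is a bijection. Applying Corollary \ref{corollary:IsoSmpl} with $\Omega_1=S(c,c)$, $\Omega_2=S(q,c)$ and $T=H(\cdot,q)$ to the uniform variable $X\mid E$ then yields that $Y=H(X,q)$ is uniform over $S(q,c)$, which is the claim.
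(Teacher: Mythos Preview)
Your proposal is correct and follows essentially the same route as the paper: conditional on $E$ the input $X$ is uniform over $S(c,c)$, and then Lemma~\ref{lemma:isomorphism} together with Corollary~\ref{corollary:IsoSmpl} transports this to uniformity of $Y=H(X,q)$ over $S(q,c)$. Your treatment of why the additional conditioning $\{Z\in S(q,c)\}$ does not bias the law of $X$ is in fact more explicit than the paper's, which simply asserts that $X\mid E$ is uniform on $S(c,c)$ without elaborating on the coupling of the two indicator events.
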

Conditional on the event $E$, it is easy to observe that $Z$ is,
also, distributed uniformly over $S(q,c)$. This observation and
Claim \ref{claim:YUniformE} suggest that 
\begin{displaymath}
Pr[Z\neq Y|E]=0.
\end{displaymath}
Gathering all the above together and applying the law of total
probability we get the following for the coupling:
\begin{displaymath}
Pr[Z\neq Y]\leq Pr[Z\neq Y|\textrm{$X$ is good}]+Pr[Z\neq Y|E]
+Pr[\bar{E}]\leq \alpha.
\end{displaymath}
The theorem follows.
\end{proof}

\begin{claimproof}{\ref{claim:YUniformE}}
Conditional on the event $E$, the random variable $X$ is distributed
uniformly over $S(c,c)$.  Note that $S(c,c)$ and $S(q,c)$ are isomorphic,
due to Lemma \ref{lemma:isomorphism}. The same lemma suggests that 
we can have a bijection between the two isomorphic sets by 
taking $H(\cdot ,q)$ and restricting its input only to colourings
in $S(c,c)$. Thus, since $X$ is distributed uniformly over $S(c,c)$,
$H(X,q)=Y$ is distributed uniformly over $S(q,c)$, by Corollary 
\ref{corollary:IsoSmpl}. he claim follows.
\end{claimproof}

\subsection{Lemma \ref{lemma:StepAc}}\label{sec:lemma:StepAc}
\begin{proof}
Note that the time complexity of computing the value of $H(\sigma, q)$
is dominated by the time we need to reveal the disagreement graph 
$Q_{\sigma_v,q}$, for some $q\in [k]$. We show that we need $O(|E|)$ 
steps to reveal the disagreement graph $Q_{\sigma_v,q}$.

We can reveal the graph $Q_{\sigma_v,q}$ in steps $j=0,\ldots, |E|$, 
where $E$ is the set of edges of $G$.
At step $0$ the disagreement graph $Q_{\sigma_v,q}(0)$ contains only
the vertex $v$. Given the graph $Q_{\sigma_v,q}(j)$ we construct
$Q_{\sigma_v,q}(j+1)$ as follows: Pick some edge which is incident
to a vertex in $Q_{\sigma_v,q}(j)$. If the other end of this edge 
is incident to a vertex outside $Q_{\sigma_v,q}(j)$ that is coloured
either $\sigma_v$ or $q$ then we get $Q_{\sigma_v,q}(j+1)$ by inserting
this edge and the vertex into $Q_{\sigma_v,q}(j)$. Otherwise 
$Q_{\sigma_v,q}(j+1)$ is the same as $Q_{\sigma_v,q}(j)$. We never
pick the same edge twice. 

It is direct to show that in the above procedure it holds that
$Q_{\sigma_v,q}=Q_{\sigma_v,q}(|E|)$. Thus. the time complexity
of a $q$-switching of a given colouring of $G$ is $O(|E|)$.
\end{proof}

\subsection{Theorem \ref{thrm:Accuracy}}\label{sec:thrm:Accuracy}

\begin{proof}
Let $X_i$ be a random variable which is distributed uniformly over
$\Omega_i$, $i=0,\ldots, r$. It suffices to provide a coupling of 
$X_r$ and $Y_r$, such that $$Pr[X_r\neq Y_r]\leq r\cdot \alpha.$$

\noindent
%
%
Working as in the proof of Theorem \ref{thrm:Accuracy} we get
the following: There is a coupling of $X_{i}, X_{i+1}$ such that
for the event 
$E_{i}=``$ $X_{i}$ is good or there are $c,q\in [k]$ such that
$X_{i}\in S_i(c,c)$ and $X_{i+1}\in S(q,c)$ it holds that 
\begin{displaymath}
Pr[E_i]\geq 1-\alpha.
\end{displaymath}

\noindent
Now consider the random variables $Z=(X_{0}, X_{1}, \ldots, X_{r-1})$
and $Z'=(X_{1}, X_{2}, \ldots X_{r})$ and $W=(Y_1,\dots,Y_r)$.
Consider, also, the event $E=\cap_{i=0}^{r-1}E_i$, where $E_{i}$
is the event defined above.  All the above discussion suggests 
two facts: 
First, there is a coupling between $Z$, $Z'$ and $W$ such that 
\begin{displaymath}
Pr[E]\geq 1-r\alpha.
\end{displaymath}
Second, if in this coupling the event $E$ occurs we can have $Z'=W$, 
i.e. $X_{i}=Y_{i}$, for $i=1,\ldots,r$. 
To see thus, consider the following:
If the event $E_i$ occurs we can have either $X_i=X_{i+1}$ or
$X_{i+1}=H(X_i,q)$ for appropriate $q$. When $E$ occurs, we 
have this property for all $i=0,\ldots,r-1$. A direct inductive
argument implies $Z'=W$.
The theorem follows by noting that  $$Pr[Y_r\neq X_r]\leq1-Pr[E].$$
\end{proof}

\subsection{Lemma \ref{lemma:seqsubprop}}\label{sec:lemma:seqsubprop}
\begin{proof}
For (1) it suffice to show that with probability at least $1-n^{-2/3}$
all the cycles of length less than $\frac{\log n}{9\log d}$ in $G_{n,d/n}$
do not share edges with each other. Let $\gamma=(9\log(d))^{-1}$. Assume
the opposite, there are at least two cycles, each of length at least
$\gamma \log n$ that intersect with each other. Then, there  must
exist a subgraph of $G_{n,d/n}$ that contains at most $2\gamma
\log n$ vertices  while the number of edges exceeds  by 1, or more,
the number of vertices.

Let $D$ be the event that in $G_{n,d/n}$ there exists a set of $r$
vertices which have $r+1$ edges between them. For $r \leq 2\gamma
\log n$ we have the following:
\begin{displaymath}
\begin{array}{lcl}
Pr[D] & \leq & \displaystyle \sum_{r=1}^{\gamma\log n} {n \choose r} { {r \choose 2} \choose r+1} (d/n)^{r+1} (1-d/n)^{{r \choose 2}-(r+1)}
\\ \vspace{-.3cm} \\
& \leq & \displaystyle \sum_{r=1}^{\gamma \log n} \left( \frac{n e}{r} \right )^r \left ( \frac{r^2 e}{2(r+1)} \right )^{r+1} (d/n)^{r+1} 
\leq 
\frac{e \cdot d}{2n}\sum_{r=1}^{\gamma \log n} \left ( \frac{e^2 d}{2} \right )^{r}
\\ \vspace{-.3cm}\\
& \leq & 
\displaystyle \frac{C}{n} \left (\frac{e^2d}{2} \right)^{2\gamma  \log n}.
\end{array}
\end{displaymath}
Having $2\gamma \cdot \log(e^2d/2)< 1 $, the quantity in the
r.h.s. of the last inequality is $o(1)$, in particular it is of
order $\Theta(n^{\gamma \log (e^2d/2)-1})$. Thus, for 
$\gamma =(9\log d)^{-1}$  there is no connected component that 
contains two cycles with probability at least $1-2n^{-2/3}$.

If we include in $G_0$ all the edges that belong to small cycles,
i.e. of length less than $\frac{\log n}{9\log d}$ then it is straightforward
that (2) holds.

For (3), we let $E(G_{n,d/n})$ be the number of edges in $G_{n,d/n}$.
Using standard probabilistic tools, i.e. Chernoff bounds, it is direct
to get that
\begin{displaymath}
Pr\left [E(G_{n,d/n})\geq (1+n^{-1/3})\frac{dn}{2} \right]\leq \exp\left(-n^{1/4} \right).
\end{displaymath}
It is direct that $r$, the number of terms in the sequence of
subgraphs of $G_{n,d/n}$, is upper bounded by $E(G_{n,d/n})$.
Thus, the above inequality implies that
$$
Pr\left[r\geq (1+n^{-1/3})\frac{dn}{2}\right]\leq \exp\left(-n^{1/4} \right).
$$
The lemma follows.
\end{proof}

\subsection{Proof of Corollary \ref{corollary:IsoSmpl}}\label{sec:corollary:IsoSmpl}
The existence of the bijection $T$ implies that $|\Omega_1|=|\Omega_2|$. Thus $\forall 
\xi \in \Omega_1$ it holds that
\begin{displaymath}
Pr[X=\xi]=Pr[T(X)=T(\xi)]=\frac{1}{|\Omega_1|}.
\end{displaymath}
Since, for every $\sigma\in \Omega_2$ there is a unique $\sigma'\in \Omega_1$
such that $T(\sigma')=\sigma$ we get that
\begin{displaymath}
Pr[T(X)=\sigma]=\frac{1}{|\Omega_1|}=\frac{1}{|\Omega_2|}.
\end{displaymath}
The corollary follows.


}

\myshow
{
\section{Proof of Lemma \ref{lemma:isomorphism}}\label{sec:lemma:isomorphism}
First we are going to show that for any $\sigma \in S(c,c)$, it
holds that $H(\sigma,q)$ is a proper colouring of $G$. Assume the
contrary, i.e. that there is $\sigma \in S(c,c)$ such that
$H(\sigma, q)$ is a non-proper colouring, i.e. there is a monochromatic
edge $e$.
Let $Q_{\sigma_v,q}$ be the disagreement graph specified by 
$\sigma$ and $q$. It is direct that the monochromatic edge is
either incident to two vertices in $Q_{\sigma_v, q}$ or to
some vertex in $Q_{\sigma_v, q}$ and some vertex outside the
disagreement graph.

It is direct that $H(\sigma,q)$ does not cause any monochromatic
edge between two vertices in $Q_{\sigma_v,q}$. To see this, note
that the disagreement graph is bipartite and $\sigma$ specifies
exactly one colour for each part of the graph, while $H(\sigma,q)$
switches the colours of the two parts.
On the other hand, $H(\sigma,q)$ cannot cause any monochromatic
edge between a vertex in $Q_{\sigma_v,q}$ and some vertex outside
the disagreement graph.
This follows by the fact that the disagreement graph is maximal.
Thus, there is no edge $w$ outside $Q_{\sigma_v,q}$ such that 
$\sigma_w\in \{q,\sigma_v\}$ while at the same time $w$ is adjacent
to some vertex in $Q_{\sigma_v,q}$.

Also, it is direct to show that for any $\sigma\in S(c,c)$,
it holds that $H(\sigma,q)\in S(q,c)$.  This follows by the
definition of the sets $S(c,c)$ and $S(c,q)$. It remains to 
show that $H(\cdot,q):S(c,c)\to S(q,c)$ is a bijection.

We show that $H(\cdot, q)$ has range the set $S(q,c)$, i.e. it is
{\em surjective} map. Let $Q_{\sigma_v,q}$ be the disagreement
graph specified by some $\sigma \in S(q,c)$. We are going to
show that there is $\sigma'\in S(c,c)$ such that $\sigma=
H(\sigma',q)$. It is direct to see that such $\sigma'$ exists.
Moreover, it is easy to see that it holds $\sigma'=H(\sigma,c)$.

Finally, we need to show that $H(\cdot, q)$ is {\em one-to-one}, 
i.e. there are no two $\sigma_1,\sigma_2\in S(c,c)$ such that
$H(\sigma_1,q)=H(\sigma_2,q)$. Using arguments similar to those
in the previous paragraph it is direct to see that there cannot
be such a pair of colourings. 

Thus, since $H(\cdot,q):S(c,c)\to S(q,c)$ is surjective and
one-to-one it is a bijection. The lemma follows.

\section{Proof of Lemma \ref{lemma:seqsubprop}}\label{sec:lemma:seqsubprop}
For (1) it suffice to show that with probability at least $1-n^{-2/3}$
all the cycles of length less than $\frac{\log n}{9\log d}$ in $G_{n,d/n}$
do not share edges with each other. Let $\gamma=(9\log(d))^{-1}$. Assume
the opposite, there are at least two cycles, each of length at least
$\gamma \log n$ that intersect with each other. Then, there  must
exist a subgraph of $G_{n,d/n}$ that contains at most $2\gamma
\log n$ vertices  while the number of edges exceeds  by 1, or more,
the number of vertices.

Let $D$ be the event that in $G_{n,d/n}$ there exists a set of $r$
vertices which have $r+1$ edges between them. For $r \leq 2\gamma
\log n$ we have the following:
\begin{displaymath}
\begin{array}{lcl}
Pr[D] & \leq & \displaystyle \sum_{r=1}^{\gamma\log n} {n \choose r} { {r \choose 2} \choose r+1} (d/n)^{r+1} (1-d/n)^{{r \choose 2}-(r+1)}
\\ \vspace{-.3cm} \\
& \leq & \displaystyle \sum_{r=1}^{\gamma \log n} \left( \frac{n e}{r} \right )^r \left ( \frac{r^2 e}{2(r+1)} \right )^{r+1} (d/n)^{r+1} 
\leq 
\frac{e \cdot d}{2n}\sum_{r=1}^{\gamma \log n} \left ( \frac{e^2 d}{2} \right )^{r}
\\ \vspace{-.3cm}\\
& \leq & 
\displaystyle \frac{C}{n} \left (\frac{e^2d}{2} \right)^{2\gamma  \log n}.
\end{array}
\end{displaymath}
Having $2\gamma \cdot \log(e^2d/2)< 1 $, the quantity in the
r.h.s. of the last inequality is $o(1)$, in particular it is of
order $\Theta(n^{\gamma \log (e^2d/2)-1})$. Thus, for 
$\gamma =(9\log d)^{-1}$  there is no connected component that 
contains two cycles with probability at least $1-2n^{-2/3}$.

If we include in $G_0$ all the edges that belong to small cycles,
i.e. of length less than $\frac{\log n}{9\log d}$ then it is straightforward
that (2) holds.

For (3), we let $E(G_{n,d/n})$ be the number of edges in $G_{n,d/n}$.
Using standard probabilistic tools, i.e. Chernoff bounds, it is direct
to get that
\begin{displaymath}
Pr\left [E(G_{n,d/n})\geq (1+n^{-1/3})\frac{dn}{2} \right]\leq \exp\left(-n^{1/4} \right).
\end{displaymath}
It is direct that $r$, the number of terms in the sequence of
subgraphs of $G_{n,d/n}$, is upper bounded by $E(G_{n,d/n})$.
Thus, the above inequality implies that
$$
Pr\left[r\geq (1+n^{-1/3})\frac{dn}{2}\right]\leq \exp\left(-n^{1/4} \right).
$$
The lemma follows.

}

\end{document}